\RequirePackage{amsmath}
\documentclass[runningheads]{llncs}


\usepackage[T1]{fontenc}

\usepackage{amsmath}
\usepackage{amssymb}
\usepackage{graphicx}
\usepackage{enumerate}
\usepackage{color}


\newtheorem{thm}[theorem]{Theorem}
\newtheorem{lem}[lemma]{Lemma}
\newtheorem{cor}[lemma]{Corollary}
\newtheorem{pro}[property]{Property}

\newcounter{assumption}
\newtheorem{ass}[assumption]{Assumption}

\newcommand{\Xomit}[1]{#1}

\newcommand{\Plane}[0]{\mathbb{R}^2}

\newcommand{\MED}[2]{$(#1|#2)$-medianoid}
\newcommand{\CEN}[2]{$(#1|#2)$-centroid}
\newcommand{\MEDR}[2]{$(#1|#2)_R$-medianoid}
\newcommand{\CENR}[2]{$(#1|#2)_R$-centroid}


\newcommand{\CR}[1]{{C_R({#1})}}
\newcommand{\rr}[0]{{\gamma}}
\newcommand{\Cr}[1]{{C_\rr({#1})}}
\newcommand{\W}[2]{W({#2}|{#1})}
\newcommand{\B}[2]{B({#2}|{#1})}
\newcommand{\F}[2]{F({#2}|{#1})}
\newcommand{\y}[3][]{y_{#1}({#3}|{#2})}

\newcommand{\Tr}[3][]{T^r_{#1}({#3}|{#2})}
\newcommand{\Tl}[3][]{T^l_{#1}({#3}|{#2})}

\newcommand{\tr}[3][]{t^r_{#1}({#3}|{#2})}
\newcommand{\tl}[3][]{t^l_{#1}({#3}|{#2})}


\title{The $(1|1)$-Centroid Problem on the Plane Concerning Distance Constraints
\thanks{Research supported by
under Grants No. MOST 103-2221-E-005-042, 103-2221-E-005-043.
}
}

\titlerunning{The $(1|1)_R$-Centroid Problem on the Plane}

\author{Hung-I Yu \and Tien-Ching Lin \and D. T. Lee}
\institute{Institute of Information Science, Academia Sinica,\\
Nankang, Taipei 115, Taiwan,\\
\email{\{herbert,kero,dtlee\}@iis.sinica.edu.tw}}

\setcounter{secnumdepth}{4}

\begin{document}
\maketitle
\begin{abstract}
In 1982, Drezner proposed the \CEN{1}{1} problem on the plane,
in which two players, called the leader and the follower,
open facilities to provide service to customers in a competitive manner.
The leader opens the first facility, and the follower opens the second.
Each customer will patronize the facility closest to him
(ties broken in favor of the first one),
thereby decides the market share of the two facilities.
The goal is to find the best position for the leader's facility
so that its market share is maximized.
The best algorithm of this problem is 
an $O(n^2 \log n)$-time parametric search approach,
which searches over the space of market share values.

In the same paper, Drezner also proposed a general version 
of \CEN{1}{1} problem by introducing a minimal distance constraint $R$,
such that the follower's facility is not allowed to be
located within a distance $R$ from the leader's.
He proposed an $O(n^5 \log n)$-time algorithm for this general version 
by identifying $O(n^4)$ points as the candidates of the optimal solution
and checking the market share for each of them.
In this paper, we develop a new parametric search approach 
searching over the $O(n^4)$ candidate points,
and present an $O(n^2 \log n)$-time algorithm for the general version, 
thereby close the $O(n^3)$ gap between the two bounds.

\medskip
{\bf Keywords}: competitive facility, Euclidean plane, parametric search
\end{abstract}

\section{Introduction} \label{introduction}

In 1929, economist Hotelling introduced the first competitive location
problem in his seminal paper \cite{Hotelling-29}.
Since then, the subject of competitive facility location
has been extensively studied by researchers in the fields of
spatial economics, social and political sciences, and operations research,
and spawned hundreds of contributions in the literature.
The interested reader is referred to the following survey papers
\cite{Dasci-11,Eiselt-97,Eiselt-93,Eiselt-15,%
Hakimi-90,Hansen-90,Plastria-01,Santos-07}.

Hakimi \cite{Hakimi-83} and Drezner \cite{Drezner-82} individually proposed
a series of competitive location problems in a leader-follower framework.
The framework is briefly described as follows.
There are $n$ customers in the market,
and each is endowed with a certain buying power.
Two players, called the \emph{leader} and the \emph{follower},
sequentially open facilities to attract the buying power of customers.
At first, the leader opens his $p$ facilities,
and then the follower opens another $r$ facilities.
Each customer will patronize the closest facility with all buying power
(ties broken in favor of the leader's ones),
thereby decides the market share of the two players.
Since both players ask for market share maximization,
two competitive facility location problems are defined under this framework.
Given that the leader locates his $p$ facilities at the set $X_p$ of $p$ points,
the follower wants to locate his $r$ facilities
in order to attract the most buying power,
which is called the \emph{\MED{r}{X_p}} problem.
On the other hand, knowing that the follower will
react with maximization strategy, 
the leader wants to locate his $p$ facilities
in order to retain the most buying power against the competition,
which is called the \emph{\CEN{r}{p}} problem.


Drezner \cite{Drezner-82} first proposed to study the two
competitive facility location problems on the Euclidean plane.
Since then, many related results \cite{Davydov-13,Drezner-82,Drezner-92,Hakimi-90,Lee-86}
have been obtained for different values of $r$ and $p$.
Due to page limit, here we introduce only
previous results about the case $r=p=1$.
For the \MED{1}{X_1} problem, Drezner \cite{Drezner-82} showed that
there exists an optimal solution arbitrarily close to $X_1$,
and solved the problem in $O(n \log n)$ time by sweeping technique.
Later, Lee and Wu \cite{Lee-86} obtained an $\mathrm{\Omega}(n \log n)$
lower bound for the \MED{1}{X_1} problem,
and thus proved the optimality of the above result.
For the \CEN{1}{1} problem, Drezner \cite{Drezner-82}
developed a parametric search based approach
that searches over the space of $O(n^2)$ possible market share values,
along with an $O(n^4)$-time test procedure
constructing and solving a linear program of $O(n^2)$ constraints,
thereby gave an $O(n^4 \log n)$-time algorithm.
Then, by improving the test procedure via
Megiddo's result \cite{Megiddo-83-2} for solving linear programs,
Hakimi \cite{Hakimi-90} reduced the time complexity to $O(n^2 \log n)$.

In \cite{Drezner-82}, Drezner also proposed a more general setting
for the leader-follower framework by introducing a \emph{minimal distance
constraint} $R \ge 0$ into the \MED{1}{X_1} problem and the \CEN{1}{1} problem,
such that the follower's facility is not allowed to be
located within a distance $R$ from the leader's.
The augmented problems are respectively called
the \emph{\MEDR{1}{X_1}} problem and \emph{\CENR{1}{1}} problem in this paper.
Drezner showed that the \MEDR{1}{X_1} problem
can also be solved in $O(n \log n)$ time
by using nearly the same proof and technique as for the \MED{1}{X_1} problem.
However, for the \CENR{1}{1} problem, he argued that
it is hard to generalize the approach for the \CEN{1}{1} problem
to solve this general version, due to the change of problem properties.
Then, he gave an $O(n^5 \log n)$-time algorithm
by identifying $O(n^4)$ candidate points on the plane,
which contain at least one optimal solution,
and performing medianoid computation on each of them.
So far, the $O(n^3)$ bound gap between
the two centroid problems remains unclosed.

In this paper, we propose an $O(n^2 \log n)$-time algorithm
for the \CENR{1}{1} problem on the Euclidean plane,
thereby close the gap last for decades.
Instead of searching over market share values,
we develop a new approach based on the parametric search technique
by searching over the $O(n^4)$ candidate points
mentioned in \cite{Drezner-82}.
This is made possible by making a critical observation
on the distribution of optimal solutions
for the \MEDR{1}{X_1} problem given $X_1$,
which provides us a useful tool to prune
candidate points with respect to $X_1$.
We then extend the usage of this tool
to design a key procedure to prune candidates
with respect to a given vertical line.

The rest of this paper is organized as follows.
Section \ref{prelinimary} gives formal problem definitions
and describes previous results in \cite{Drezner-82,Hakimi-90}.
In Section \ref{sec_Alg_line},
we make the observation on the \MEDR{1}{X_1} problem,
and make use of it to find a ``local'' centroid on a given line.
This result is then extended as a new pruning procedure
with respect to any given line in Section \ref{sec_Alg_plane},
and utilized in our parametric search approach for the \CENR{1}{1} problem.
Finally, in Section \ref{conclusion}, we give some concluding remarks.

\section{Notations and Preliminary Results} \label{prelinimary}


Let $V = \{v_1, v_2, \cdots, v_n\}$ be a set of
$n$ points on the Euclidean plane $\Plane$,
as the representatives of the $n$ customers.
Each point $v_i \in V$ is assigned with
a positive weight $w(v_i)$, representing its buying power.
To simplify the algorithm description,
we assume that the points in $V$ are in general position,
that is, no three points are collinear and
no two points share a common x or y-coordinate.

Let $d(u, w)$ denote the Euclidean distance between any two points $u,w \in \Plane$.
For any set $Z$ of points on the plane,
we define $W(Z) = \sum \{w(v)|v \in V \bigcap Z\}$.
Suppose that the leader has located his facility at $X_1 = \{x\}$,
which is shortened as $x$ for simplicity.
Due to the minimal distance constraint $R$ mentioned in \cite{Drezner-82},
any point $y' \in \Plane$ with $d(y', x) < R$ is
infeasible to be the follower's choice.
If the follower locates his facility at some feasible point $y$,
the set of customers patronizing $y$ instead of $x$ is defined as
$V(y|x) = \{v \in V|d(v, y) < d(v, x)\}$,
with their total buying power $\W{x}{y} = W(V(y|x))$.
Then, the largest market share that the follower can capture is denoted by the function
\begin{equation*} \label{eq1}
    W^*(x) = \max_{y \in \Plane, d(y,x) \ge R} \W{x}{y},
\end{equation*}
which is called the \emph{weight loss} of $x$.
Given a point $x \in \Plane$, the \MEDR{1}{x} problem is to find a \emph{\MEDR{1}{x}},
which denotes a feasible point $y^* \in \Plane$ maximizing the weight loss of $x$.

In contrast, the leader tries to minimize the weight loss of his own facility
by finding a point $x^* \in \Plane$ such that
\begin{equation*} \label{eq2}
    W^*(x^*) \le W^*(x)
\end{equation*}
for any point $x \in \Plane$.
The \CENR{1}{1} problem is to find a \emph{\CENR{1}{1}},
which denotes a point $x^*$ minimizing its weight loss.
Note that, when $R=0$, the two problems degenerate
to the \MED{1}{x} and \CEN{1}{1} problems.


\subsection{Previous approaches} \label{ssec_11cen}

In this subsection, we briefly review previous results for the \MEDR{1}{x},
\CEN{1}{1}, and \CENR{1}{1} problems in \cite{Drezner-82,Hakimi-90},
so as to derive some basic properties essential to our approach.

Let $L$ be an arbitrary line, which partitions the Euclidean plane into two half-planes.
For any point $y \notin L$, we define $H(L, y)$ as the close half-plane including $y$,
and $H^-(L, y)$ as the open half-plane including $y$ (but not $L$).
For any two distinct points $x, y \in \Plane$,
let $\B{x}{y}$ denote the perpendicular bisector of the line segment from $x$ to $y$.

Given an arbitrary point $x \in \Plane$,
we first describe the algorithm for finding a \MEDR{1}{x} in \cite{Drezner-82}.
Let $y$ be an arbitrary point other than $x$,
and $y'$ be some point on the open line segment from $y$ to $x$.
We can see that $H^-(\B{x}{y},y) \subset H^-(\B{x}{y'},y')$,
which implies the fact that $\W{x}{y'} = W(H^-(\B{x}{y'},y'))
\ge W(H^-(\B{x}{y},y)) = \W{x}{y}$,
It shows that moving $y$ toward $x$ does not diminish its weight capture,
thereby follows the lemma.

\begin{lem} \cite{Drezner-82} \label{lem_onC}
    There exists a \MEDR{1}{x} in $\{y \;|\; y \in \Plane, d(x, y) = R\}$.
\end{lem}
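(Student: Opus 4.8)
The plan is to take an arbitrary optimal follower location and slide it radially toward $x$ until it reaches the circle $\{y : d(x,y)=R\}$, using the nesting of open half-planes recorded just above this lemma to ensure that the captured weight never decreases along the way. First I would check that the maximum defining $W^{*}(x)$ is actually attained: since $\W{x}{y}=W(V(y|x))$ depends on $y$ only through the subset $V(y|x)\subseteq V$, the function $y\mapsto\W{x}{y}$ has finite range, so its supremum over the feasible set $\{y: d(x,y)\ge R\}$ is a maximum, realized by some feasible $y^{*}$, that is, a \MEDR{1}{x}. If $d(x,y^{*})=R$ there is nothing to do; and if $W^{*}(x)=0$ then every point $y$ with $d(x,y)=R$ satisfies $0\le\W{x}{y}\le W^{*}(x)=0$ and is itself a \MEDR{1}{x}. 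So the interesting case is $W^{*}(x)>0$ (whence $y^{*}\neq x$) together with $d(x,y^{*})>R$.

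Then I would introduce $y'$ as the point where the open segment from $y^{*}$ to $x$ meets the circle $\{y : d(x,y)=R\}$; because $0<R<d(x,y^{*})$ --- this is the one place the hypothesis $R>0$ is used --- the point $y'$ is strictly interior to that segment, hence a legitimate instance of the point ``$y'$'' appearing in the discussion immediately preceding this lemma. That discussion gives $H^{-}(\B{x}{y^{*}},y^{*})\subset H^{-}(\B{x}{y'},y')$ and therefore $\W{x}{y'}\ge\W{x}{y^{*}}=W^{*}(x)$. Since $y'$ is feasible, the definition of the weight loss gives $\W{x}{y'}\le W^{*}(x)$, so $\W{x}{y'}=W^{*}(x)$, and $y'$ is the desired \MEDR{1}{x} lying on $\{y : d(x,y)=R\}$.

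I do not foresee a genuine obstacle here: the single geometric ingredient --- the strict inclusion of open half-planes under radial motion toward $x$ --- is already in hand from the text above, so what remains is only the small case split ($W^{*}(x)=0$, or $d(x,y^{*})=R$, or $d(x,y^{*})>R$) and the elementary check that the radial intersection point $y'$ is strictly between $x$ and $y^{*}$, which is what makes that inclusion applicable. As a by-product, the same argument shows that no feasible point strictly outside the circle can strictly outperform it, so the circle $\{y : d(x,y)=R\}$ always hosts an optimal follower location.
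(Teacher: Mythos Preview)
Your proposal is correct and follows essentially the same route as the paper: both arguments rest on the half-plane nesting $H^{-}(\B{x}{y},y)\subset H^{-}(\B{x}{y'},y')$ for $y'$ on the open segment from $y$ to $x$, and then slide an optimal follower radially inward to the circle of radius $R$. Your version is simply more careful about side issues the paper leaves implicit---attainment of the maximum via the finite range of $\W{x}{\cdot}$, and the degenerate cases $d(x,y^{*})=R$ and $W^{*}(x)=0$---but the key idea is identical.
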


For any point $z \in \Plane$, let $\CR{x}$ and $\Cr{x}$ be
the circles centered at $z$ with radii $R$ and $\rr = R/2$, respectively.
By Lemma \ref{lem_onC}, finding a \MEDR{1}{x} can be reduced
to searching a point $y$ on $\CR{x}$ maximizing $\W{x}{y}$.
Since the perpendicular bisector $\B{x}{y}$ of each point
$y$ on $\CR{x}$ is a tangent line to the circle $\Cr{x}$,
the searching of $y$ on $\CR{x}$ is equivalent to
finding a tangent line to $\Cr{x}$ that partitions the most weight from $x$.
The latter problem can be solved in $O(n \log n)$ time as follows.
For each $v \in V$ outside $\Cr{x}$, we calculate its two tangent lines to $\Cr{x}$.
Then, by sorting these tangent lines according to
the polar angles of their corresponding tangent points with respect to $x$,
we can use the angle sweeping technique to check how much weight they partition.

\begin{thm} \cite{Drezner-82} \label{thm_find_med}
    Given a point $x \in \Plane$,
    the \MEDR{1}{x} problem can be solved in $O(n \log n)$ time.
\end{thm}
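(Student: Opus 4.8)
The plan is to follow the reduction already sketched in the text and turn it into an $O(n\log n)$ algorithm. By Lemma~\ref{lem_onC} it suffices to search for a point $y$ on the circle $\CR{x}$ that maximizes $\W{x}{y}$. The first step is to recast this as a tangent-line problem: for a point $y$ on $\CR{x}$, the bisector $\B{x}{y}$ is exactly the tangent line to the inner circle $\Cr{x}$ at the midpoint of $xy$, and $\W{x}{y} = W(H^-(\B{x}{y},y))$ is the total weight of the customers lying strictly on the $y$-side of that tangent line. So the goal becomes: among all lines tangent to $\Cr{x}$, find one whose ``outer'' open half-plane (the one not containing $x$) captures the maximum weight of $V$. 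Note that any $v\in V$ lying inside or on $\Cr{x}$ can never be separated from $x$ by such a tangent line, so only the points strictly outside $\Cr{x}$ are relevant.

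The second step is the sweep. For each relevant $v$, compute the two lines through $v$ tangent to $\Cr{x}$; each such tangent line touches $\Cr{x}$ at a tangent point, which I record by its polar angle about $x$ (equivalently about the center of $\Cr{x}$). As a directed tangent line rotates through a full $2\pi$ around $\Cr{x}$ with $x$ always on its left, each point $v$ enters the outer half-plane at one of its two tangent angles and leaves at the other; thus $v$ contributes $w(v)$ to exactly one contiguous arc of tangent directions, delimited by these two events. Sorting all $O(n)$ events by angle and sweeping while maintaining a running sum of weights (adding $w(v)$ at the start event, subtracting it at the end event) yields, for every combinatorially distinct tangent direction, the captured weight; the maximum over the sweep is $W^*(x)$, and the corresponding tangent line gives a \MEDR{1}{x} on $\CR{x}$.

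The third step is bookkeeping the cost: computing tangent lines and their angles is $O(1)$ per point, hence $O(n)$ total; sorting the $2\cdot O(n)$ events costs $O(n\log n)$; the sweep itself is $O(n)$. This gives the claimed $O(n\log n)$ bound. I would then observe that the degenerate configurations (a customer exactly on $\Cr{x}$, or two events at the same angle) can be handled consistently with the ``ties broken in favor of the leader'' convention, using the general-position assumption on $V$ to keep these cases finite and harmless.

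The main obstacle is getting the half-plane/tie conventions exactly right during the sweep: $\W{x}{y}$ is defined with a \emph{strict} inequality $d(v,y)<d(v,x)$, so a customer sitting precisely on the rotating bisector must be treated as patronizing $x$, and the event ordering at coincident angles must respect this. Making the sweep count the correct open half-plane at every breakpoint — rather than being off by the weight of a boundary customer — is the only delicate point; everything else is the routine angle-sweep argument. Once this is pinned down, the theorem follows.
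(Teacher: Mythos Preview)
Your proposal is correct and follows essentially the same approach as the paper: reduce to $\CR{x}$ via Lemma~\ref{lem_onC}, observe that the bisectors $\B{x}{y}$ for $y\in\CR{x}$ are exactly the tangent lines to $\Cr{x}$, compute the two tangents from each $v\in V$ outside $\Cr{x}$, sort the resulting tangent points by polar angle about $x$, and perform an angular sweep to find the maximum-weight outer half-plane. The paper states this more tersely and does not spell out the event/tie-breaking bookkeeping, but the algorithm and its $O(n\log n)$ analysis are identical.
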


Next, we describe the algorithm of the \CENR{1}{1} problem in \cite{Drezner-82}.
Let $S$ be a subset of $V$.
We define $\mathcal{C}(S)$ to be the set of all circles $\Cr{v}$, $v \in S$,
and $CH(\mathcal{C}(S))$ to be the convex hull of these circles.
It is easy to see the following.

\begin{lem} \cite{Drezner-82} \label{lem_point_roundch}
    Let $S$ be a subset of $V$.
    For any point $x \in \Plane$, $W^*(x) \ge W(S)$ if $x$ is outside $CH(\mathcal{C}(S))$.
\end{lem}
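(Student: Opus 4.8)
The plan is to show the contrapositive in spirit: if $x$ lies outside $CH(\mathcal{C}(S))$, then the follower has a feasible response $y$ (with $d(x,y) \ge R$) that captures all of $S$, giving $W^*(x) \ge \W{x}{y} \ge W(S)$. First I would invoke the separating-hyperplane property of the convex hull: since $x \notin CH(\mathcal{C}(S))$ and each circle $\Cr{v}$ has radius $\rr = R/2$, there is a line $L$ strictly separating $x$ from the convex hull, and in fact we may push $L$ until it is tangent to $CH(\mathcal{C}(S))$ on the side facing $x$, so that every circle $\Cr{v}$, $v \in S$, lies in the closed half-plane $H(L, \cdot)$ opposite to $x$, while $x$ lies strictly in $H^-(L, x)$. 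Moreover, because each $\Cr{v}$ is entirely on the far side of $L$, the distance from each center $v$ to $L$ is at least $\rr = R/2$.

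Next I would use the correspondence, already set up in the excerpt before Theorem \ref{thm_find_med}, between tangent lines to $\Cr{x}$ and feasible follower points on $\CR{x}$: any tangent line $\ell$ to the circle $\Cr{x}$ is the perpendicular bisector $\B{x}{y}$ of $x$ and a unique point $y$ on $\CR{x}$, i.e. with $d(x,y) = R$, hence $y$ is feasible. So I would take the tangent line to $\Cr{x}$ that is parallel to $L$ and lies on the side of $x$ away from $L$ — call it $\ell$ — and let $y$ be the corresponding point on $\CR{x}$. Then the open half-plane $H^-(\ell, y)$ contains the strip between $\ell$ and $L$ together with the far side of $L$; in particular it contains the half-plane $H(L, \cdot)$ not containing $x$, and therefore contains every circle $\Cr{v}$ with $v \in S$, so in particular every such center $v$. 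Since $V(y|x) = \{v \in V : d(v,y) < d(v,x)\} = V \cap H^-(\B{x}{y}, y) = V \cap H^-(\ell, y) \supseteq S$, we get $\W{x}{y} = W(H^-(\ell,y)) \ge W(S)$, and since $y$ is feasible, $W^*(x) \ge \W{x}{y} \ge W(S)$.

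The only delicate point is the geometry in the second paragraph — confirming that the tangent line $\ell$ to $\Cr{x}$ on the far side from $L$ really has all of $CH(\mathcal{C}(S))$ strictly inside $H^-(\ell, y)$. This follows because the distance from $x$ to $L$ is at least $R = 2\rr$ (each circle $\Cr{v}$ is on the far side of $L$ and $x$ is outside the hull of those circles, so $x$ cannot be within $\rr$ of $L$ on the hull side — more carefully, one takes $L$ as a supporting line of $CH(\mathcal C(S))$ separating it from $x$, so $d(x,L) > 0$, and then slides the parallel tangent $\ell$ of $\Cr x$; since $\ell$ is at distance $\rr$ from $x$ toward $L$ it stays strictly between $x$ and $L$), so $\ell$ separates $x$ from $L$ and hence from the hull. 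This is the step I would write out with a small figure. Everything else is a direct unwinding of the definitions of $V(y|x)$, $\W{x}{y}$, and $W^*(x)$ together with Lemma \ref{lem_onC}'s reduction to points at distance exactly $R$.
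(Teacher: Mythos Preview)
The paper does not supply its own proof of this lemma (it is quoted from Drezner with the remark ``it is easy to see''), so there is no argument to compare against; your separating-line strategy is indeed the natural one and is essentially what Drezner had in mind.

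That said, your handling of the ``delicate point'' contains a genuine gap. You choose $L$ as a supporting line of $CH(\mathcal{C}(S))$ with $x$ strictly on the other side, and then want the tangent $\ell$ to $\Cr{x}$ parallel to $L$ to lie between $x$ and $L$. First, your description of which tangent to take is inconsistent (``on the side of $x$ away from $L$'' versus ``at distance $\rr$ from $x$ toward $L$''); the one you need is the tangent on the $L$-side of $x$. Second, and more importantly, the assertions ``the distance from $x$ to $L$ is at least $R = 2\rr$'' and ``$\ell$ stays strictly between $x$ and $L$'' are both false in general: $x$ may lie arbitrarily close to the boundary of $CH(\mathcal{C}(S))$, so $d(x,L)$ can be any positive number, and when $d(x,L) < \rr$ the line $\ell$ lands on the hull side of $L$, not between $x$ and $L$.

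The repair is simple and avoids the case split. Measure signed distances from $x$ in the direction perpendicular to $L$, toward the hull. Then $L$ sits at distance $d := d(x,L) > 0$, every circle $\Cr{v}$ with $v\in S$ lies in the closed half-plane beyond $L$, hence every center $v$ is at distance at least $d + \rr$ from $x$; meanwhile $\ell = \B{x}{y}$ is at distance exactly $\rr$ from $x$ in the same direction. Therefore each $v$ lies at distance at least $(d+\rr) - \rr = d > 0$ beyond $\ell$, i.e.\ $v \in H^-(\ell,y)$, regardless of whether $\ell$ falls before or after $L$. With this computation in place, the rest of your argument ($S \subseteq V(y|x)$, hence $\W{x}{y} \ge W(S)$, hence $W^*(x) \ge W(S)$) goes through unchanged.
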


For any positive number $W_0$, let $I(W_0)$ be the intersection
of all convex hulls $CH(\mathcal{C}(S))$, where $S \subseteq V$ and $W(S) \ge W_0$.
We have the lemma below.

\begin{lem} \cite{Drezner-82} \label{lem_point_ge_W0}
    Let $W_0$ be a positive real number.
    For any point $x \in \Plane$, $W^*(x) < W_0$ if and only if $x \in I(W_0)$.
\end{lem}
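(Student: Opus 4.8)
The plan is to prove the two implications separately. The ``only if'' direction --- that $W^*(x) < W_0$ implies $x \in I(W_0)$ --- I would obtain directly from Lemma~\ref{lem_point_roundch} by contraposition: if $x \notin I(W_0)$, then, since $I(W_0)$ is the intersection of the hulls $CH(\mathcal{C}(S))$ over all $S \subseteq V$ with $W(S) \ge W_0$, there is some such $S$ with $x \notin CH(\mathcal{C}(S))$; Lemma~\ref{lem_point_roundch} then gives $W^*(x) \ge W(S) \ge W_0$, a contradiction. Hence $x$ lies in every hull in the intersection, i.e. $x \in I(W_0)$.

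For the ``if'' direction I would again argue by contraposition: assuming $W^*(x) \ge W_0$, I will exhibit a single set $S$ with $W(S) \ge W_0$ and $x \notin CH(\mathcal{C}(S))$, which forces $x \notin I(W_0)$. By Lemma~\ref{lem_onC} there is a \MEDR{1}{x} $y^*$ with $d(x, y^*) = R$, and by definition $\W{x}{y^*} = W^*(x) \ge W_0$. Take $S = V(y^*|x) = \{v \in V \;|\; d(v, y^*) < d(v, x)\}$, so that $W(S) = \W{x}{y^*} \ge W_0 > 0$; in particular $S \neq \emptyset$. The geometric heart of the argument is the fact --- already used in the discussion before Theorem~\ref{thm_find_med} --- that, because $d(x, y^*) = R$, the perpendicular bisector $\ell = \B{x}{y^*}$ is tangent to $\Cr{x}$, so $x$ lies at distance exactly $\rr$ from $\ell$, on the opposite side of $\ell$ from every point of $S$. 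Each $v \in S$ lies at some distance $h_v > 0$ from $\ell$ on the far side, so the portion of the disk $\Cr{v}$ (radius $\rr$) lying on $x$'s side of $\ell$ stays within distance $\rr - h_v < \rr$ of $\ell$; as $S$ is finite, the portion of $CH(\mathcal{C}(S))$ on $x$'s side of $\ell$ likewise stays within distance $\rr - \min_{v \in S} h_v < \rr$ of $\ell$, and since $x$ itself lies at distance exactly $\rr$ from $\ell$ on that side, we conclude $x \notin CH(\mathcal{C}(S))$. (Equivalently, $CH(\mathcal{C}(S))$ is the Minkowski sum of the convex hull of the point set $S$ with a disk of radius $\rr$; that convex hull lies beyond $\ell$ at distance $\ge \min_{v \in S} h_v > 0$, hence at distance $> \rr$ from $x$, which therefore lies outside the sum.)

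I expect the ``if'' direction to be the only real obstacle, and within it the point needing care is that $x$ must be shown \emph{strictly} outside $CH(\mathcal{C}(S))$ rather than merely on its boundary --- this is precisely where finiteness of $V$, hence of $S$, enters, since it supplies a uniform positive gap $\min_{v \in S} h_v$ between $\ell$ and the points of $S$. The rest (nonemptiness of $S$, and the tangency of $\B{x}{y^*}$ to $\Cr{x}$ when $d(x, y^*) = R$) is routine given the preliminaries already established.
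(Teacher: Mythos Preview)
Your proof is correct and matches the paper's approach. The paper handles the ``only if'' direction identically (contraposition via Lemma~\ref{lem_point_roundch}); for the ``if'' direction it gives the direct form of your contrapositive --- for $x \in CH(\mathcal{C}(S))$ and any feasible $y$, the bisector $\B{x}{y}$ lies at distance $\ge \rr$ from $x$, so some $v \in S$ must fall outside $H^-(\B{x}{y}, y)$ --- resting on the same tangent-line geometry that you spell out explicitly with the Minkowski-sum\,/\,strict-gap argument.
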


\Xomit{
\begin{proof}
Consider first the case that $x \in I(W_0)$.
By definition, $x$ intersects with every 
$CH(\mathcal{C}(S))$ of subset $S \subseteq V$ with $W(S) \ge W_0$.
Let $S' \subseteq V$ be any of such subsets.
Since $x \in CH(\mathcal{C}(S'))$, for any point $y$ feasible to $x$,
there must exist a point $v \in S$
such that $v \notin H^-(\B{x}{y},y)$,
implying that no feasible point $y$ 
can acquire all buying power from customers of $S'$.
It follows that 
no feasible point $y$ can acquire buying power larger than or equal to $W_0$,
i.e., $W^*(x) < W_0$.

If $x \notin I(W_0)$, there must exist a subset $S \subseteq V$ with $W(S) \ge W_0$,
such that $x \notin CH(\mathcal{C}(S))$.
By Lemma \ref{lem_point_roundch}, $W^*(x) \ge W(S) \ge W_0$.
\qed
\end{proof}
}

Drezner \cite{Drezner-82} argued that the set of all \CENR{1}{1}s
is equivalent to some intersection $I(W_0)$ for smallest possible $W_0$.
We slightly strengthen his argument below.
Let $\mathcal{W} = \{\W{x}{y} \;|\; x,y \in \Plane, d(x,y) \ge R\}$.
The following lemma can be obtained.

%
\begin{lemma} \label{lem_opt_region}
    Let $W^*_0$ be the smallest number in $\mathcal{W}$
    such that $I(W^*_0)$ is not null.
    A point $x$ is a \CENR{1}{1} if and only if $x \in I(W^*_0)$.
\end{lemma}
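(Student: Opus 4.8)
The plan is to turn the continuous minimization implicit in the definition of a \CENR{1}{1} into a purely discrete selection problem, and then read off the answer from Lemma \ref{lem_point_ge_W0}. The first step is to observe that $\mathcal{W}$ is a \emph{finite} set: any value $\W{x}{y}$ equals $W(V(y|x))$ for the subset $V(y|x) \subseteq V$ of customers captured by $y$, so $\mathcal{W} \subseteq \{W(S) : S \subseteq V\}$, which is finite. The second step is to note that for every $x \in \Plane$ the weight loss $W^*(x)$ is actually \emph{attained}: since $\mathcal{W}$ is finite the supremum defining $W^*(x)$ is a maximum (Lemma \ref{lem_onC} even pins a maximizer to the circle $d(x,y) = R$), hence $W^*(x) \in \mathcal{W}$ for every $x$. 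Consequently $W^*$ takes only finitely many values, $m := \min_{x \in \Plane} W^*(x)$ is well defined and attained, and by definition $x$ is a \CENR{1}{1} exactly when $W^*(x) = m$.

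Next I would use Lemma \ref{lem_point_ge_W0} to rewrite $I(W_0) = \{x \in \Plane : W^*(x) < W_0\}$ for every positive real $W_0$. Then $I(W_0) \neq \emptyset$ if and only if some $x$ satisfies $W^*(x) < W_0$, i.e.\ if and only if $W_0 > m$. Therefore $W^*_0$, defined as the smallest element of $\mathcal{W}$ with $I(W^*_0)$ non-null, is precisely the smallest element of $\mathcal{W}$ that is strictly greater than $m$; that such an element exists is exactly the hypothesis that some $I(W_0)$, $W_0 \in \mathcal{W}$, is non-null (equivalently, $W^*$ is not constant on $\Plane$).

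To finish, I would combine the two facts. Because $W^*_0$ is the smallest element of $\mathcal{W}$ above $m$ and every value $W^*(x)$ lies in $\mathcal{W}$, we have $W^*(x) < W^*_0 \iff W^*(x) \le m \iff W^*(x) = m$. Substituting into $I(W^*_0) = \{x : W^*(x) < W^*_0\}$ yields $I(W^*_0) = \{x : W^*(x) = m\}$, which is exactly the set of \CENR{1}{1}s; this establishes both directions of the stated equivalence simultaneously.

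I do not anticipate a serious obstacle: the whole argument rests on the two small structural facts above — finiteness of $\mathcal{W}$, and (the more substantive one) that $W^*(x)$ is always \emph{achieved}, so that $W^*(x) \in \mathcal{W}$ — after which everything is bookkeeping with Lemma \ref{lem_point_ge_W0}. The one point that requires care is not to conflate the strict inequality of Lemma \ref{lem_point_ge_W0} with a non-strict one: this is precisely why the correct threshold is the \emph{next} value of $\mathcal{W}$ above the optimum $m$, rather than $m$ itself.
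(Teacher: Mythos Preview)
Your proposal is correct and follows essentially the same approach as the paper: both arguments use Lemma~\ref{lem_point_ge_W0} to identify $I(W_0)=\{x:W^*(x)<W_0\}$, recognize $W^*_0$ as the smallest element of $\mathcal{W}$ strictly above the optimal value $m=W_{OPT}$, and then use $W^*(x)\in\mathcal{W}$ to convert the strict inequality $W^*(x)<W^*_0$ into the equality $W^*(x)=m$. You are somewhat more explicit than the paper about the finiteness of $\mathcal{W}$ and the attainment of $W^*(x)$ and $m$, but the logical skeleton is identical.
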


\begin{proof}
Let $W_{OPT}$ be the weight loss of some \CENR{1}{1} $x^*$.
We first show that $I(W_0)$ is null for any $W_0 \le W_{OPT}$.
Suppose to the contrary that it is not null
and there exists a point $x'$ in $I(W_0)$.
By Lemma \ref{lem_point_ge_W0}, $W^*(x') < W_0 \le W_{OPT}$,
which contradicts the optimality of $x^*$.
Moreover, since $I(W^*_0)$ is not null, we have that $W^*_0 > W_{OPT}$.

We now show that a point $x$ is a \CENR{1}{1} if and only if $x \in I(W^*_0)$.
If $x$ is a \CENR{1}{1}, we have that $W^*(x) = W_{OPT} < W^*_0$.
By Lemma \ref{lem_point_ge_W0}, $x \in I(W^*_0)$.
On the other hand, if $x$ is not a \CENR{1}{1},
we have that $W^*(x) > W_{OPT}$.
Since by definition $W^*(x) \in \mathcal{W}$,
we can see that $W^*(x) \ge W^*_0$.
Thus, again by Lemma \ref{lem_point_ge_W0}, $x \notin I(W^*_0)$.
\qed
\end{proof}

Although it is hard to compute $I(W^*_0)$ itself,
we can find its vertices as solutions to the \CENR{1}{1} problem.
Let $\mathcal{T}$ be the set of outer tangent lines
of all pairs of circles in $\mathcal{C}(V)$.
For any subset $S \subseteq V$, the boundary of $CH(\mathcal{C}(S))$ is
formed by segments of lines in $\mathcal{T}$ and arcs of circles in $\mathcal{C}(V)$.
Since $I(W_0)$ is an intersection of such convex hulls,
its vertices must fall within the set of intersection points
between lines in $\mathcal{T}$, between circles in $\mathcal{C}(V)$,
and between one line in $\mathcal{T}$ and one circle in $\mathcal{C}(V)$.
Let $\mathcal{T} \times \mathcal{T}$, $\mathcal{C}(V) \times \mathcal{C}(V)$,
and $\mathcal{T} \times \mathcal{C}(V)$
denote the three sets of intersection points, respectively.
We have the lemma below.

\begin{lem} \cite{Drezner-82} \label{lem_TTCCTC}
    There exists a \CENR{1}{1} in $\mathcal{T} \times \mathcal{T}$,
    $\mathcal{C}(V) \times \mathcal{C}(V)$, and $\mathcal{T} \times \mathcal{C}(V)$.
\end{lem}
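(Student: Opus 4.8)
The plan is to reduce the statement, via Lemma~\ref{lem_opt_region}, to a claim about the shape of the region $I(W^*_0)$, and then to read off a suitable vertex of that region. By Lemma~\ref{lem_opt_region}, the set of \CENR{1}{1}s equals $I(W^*_0)$, which is nonempty; being an intersection of the sets $CH(\mathcal{C}(S))$ — each a convex hull of finitely many disks of the common radius $\rr$, hence compact and convex — the region $I(W^*_0)$ is itself a nonempty compact convex set. It therefore suffices to exhibit one point of $I(W^*_0)$ lying on two distinct curves taken from $\mathcal{T}\cup\mathcal{C}(V)$: such a point belongs to $\mathcal{T}\times\mathcal{T}$, to $\mathcal{C}(V)\times\mathcal{C}(V)$, or to $\mathcal{T}\times\mathcal{C}(V)$ by the definition of those sets.

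First I would pin down the boundary of $I(W^*_0)$. For $|S|\ge 2$ the boundary of $CH(\mathcal{C}(S))$ consists of arcs of circles of $\mathcal{C}(S)\subseteq\mathcal{C}(V)$ joined by segments of outer tangent lines of pairs of these circles, i.e.\ segments of lines of $\mathcal{T}$; for $|S|=1$ it is a single circle of $\mathcal{C}(V)$; and $S=\emptyset$ is irrelevant since $W(\emptyset)=0<W^*_0$. As $\partial(\bigcap_i A_i)\subseteq\bigcup_i\partial A_i$ for closed sets, $\partial I(W^*_0)$ is covered by finitely many lines of $\mathcal{T}$ and circles of $\mathcal{C}(V)$.

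The core step is to locate inside $I(W^*_0)$ a point incident to two of these covering curves. If $I(W^*_0)$ is two-dimensional, its boundary is a closed convex curve made of pieces of the covering curves; if two distinct covering curves contribute pieces, they meet at a vertex of $I(W^*_0)$, which is the point we want. Otherwise the whole boundary is a single smooth closed curve, which, being bounded, must coincide with one circle $\Cr{v}\in\mathcal{C}(V)$, so $I(W^*_0)$ is the disk bounded by $\Cr{v}$; picking any other point $u\in V$, an outer tangent line $\ell$ of $\Cr{v}$ and $\Cr{u}$ lies in $\mathcal{T}$ and is tangent to $\Cr{v}$, so the point $\ell\cap\Cr{v}$ lies on $\Cr{v}\subseteq\partial I(W^*_0)\subseteq I(W^*_0)$ and in $\mathcal{T}\times\mathcal{C}(V)$. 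If instead $I(W^*_0)$ is lower-dimensional — a segment or a single point — then its endpoints (or the point itself) are held in place by two of the sets $CH(\mathcal{C}(S))$ simultaneously, hence lie on two covering curves, which are necessarily distinct.

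I expect the main obstacle to be exactly this degenerate bookkeeping: the case where $I(W^*_0)$ collapses to a single disk, which has no vertex at all, and the lower-dimensional cases, rather than the generic situation in which one simply names a vertex of $I(W^*_0)$. Some care is also needed to justify that a ``pinch'' point of an intersection of convex regions must lie on two \emph{distinct} bounding curves, and to confirm the degenerate configurations are consistent with the general-position assumption on $V$; but none of these points seems to require more than elementary planar convexity.
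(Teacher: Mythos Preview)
Your proposal is correct and follows the same route the paper takes: the paper does not give a self-contained proof of this lemma (it is attributed to Drezner), but the paragraph immediately preceding the statement is exactly your argument in compressed form --- the boundary of each $CH(\mathcal{C}(S))$ is made of arcs from $\mathcal{C}(V)$ and segments from $\mathcal{T}$, so vertices of $I(W^*_0)$ are intersection points of such curves, and by Lemma~\ref{lem_opt_region} those vertices are \CENR{1}{1}s. Your treatment of the degenerate cases (full disk, segment, point) is more careful than the paper's one-line sketch, which simply assumes $I(W^*_0)$ has a vertex; this extra bookkeeping is a genuine addition rather than a different approach.
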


Obviously, there are at most $O(n^4)$ intersection points,
which can be viewed as the \emph{candidates} of being \CENR{1}{1}s.
Drezner thus gave an algorithm by evaluating the weight loss
of each candidate by Theorem \ref{thm_find_med}.

\begin{thm} \cite{Drezner-82} \label{thm_find_cen}
    The \CENR{1}{1} problem can be solved in $O(n^5 \log n)$ time.
\end{thm}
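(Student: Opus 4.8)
The plan is to turn Lemma~\ref{lem_TTCCTC} directly into a brute-force enumeration algorithm. First I would build the set $\mathcal{T}$ of outer tangent lines: each of the $\binom{n}{2}$ pairs of circles in $\mathcal{C}(V)$ contributes at most two outer tangent lines, so $|\mathcal{T}| = O(n^2)$ and $\mathcal{T}$ is constructed in $O(n^2)$ time. Next I would generate the three candidate families. The set $\mathcal{C}(V) \times \mathcal{C}(V)$ of pairwise circle intersections has $O(n^2)$ points; the set $\mathcal{T} \times \mathcal{C}(V)$ of line--circle intersections has $O(|\mathcal{T}| \cdot n) = O(n^3)$ points; and the set $\mathcal{T} \times \mathcal{T}$ of pairwise line intersections has $O(|\mathcal{T}|^2) = O(n^4)$ points. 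Each intersection point is obtained in $O(1)$ time, so the full candidate list $\mathcal{P}$, of size $O(n^4)$, is assembled in $O(n^4)$ time.

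Then, for every candidate $x \in \mathcal{P}$, I would invoke the \MEDR{1}{x} algorithm of Theorem~\ref{thm_find_med} to evaluate the weight loss $W^*(x)$ in $O(n \log n)$ time, maintaining a running minimum, and finally output a candidate attaining the smallest value of $W^*(\cdot)$. Correctness follows from Lemma~\ref{lem_TTCCTC} together with the definition of a \CENR{1}{1}: the lemma guarantees that at least one \CENR{1}{1}, say $x^\circ$, lies in $\mathcal{P}$, and by definition $W^*(x^\circ) \le W^*(x)$ for every $x \in \Plane$, in particular for every $x \in \mathcal{P}$; hence the candidate $\hat{x}$ minimizing $W^*$ over $\mathcal{P}$ satisfies $W^*(\hat{x}) = W^*(x^\circ)$, so $\hat{x}$ is itself a \CENR{1}{1}. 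The total cost is $|\mathcal{P}|$ medianoid computations, i.e. $O(n^4) \cdot O(n \log n) = O(n^5 \log n)$, which dominates the $O(n^4)$ preprocessing.

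There is no genuine obstacle here; the only point deserving care is the bookkeeping of the candidate set. One must include all three families $\mathcal{T} \times \mathcal{T}$, $\mathcal{C}(V) \times \mathcal{C}(V)$, and $\mathcal{T} \times \mathcal{C}(V)$, since a vertex of some $I(W_0)$ can be of any of these three types (as noted in the derivation preceding Lemma~\ref{lem_TTCCTC}), and one must appeal to the general-position assumption on $V$ to bound the number of tangent lines and intersection points and to rule out degenerate coincidences. No optimization beyond plain enumeration is needed for this bound; the work of reducing the running time to $O(n^2 \log n)$ is deferred to the later sections.
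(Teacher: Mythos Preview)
Your proposal is correct and matches the paper's approach essentially verbatim: enumerate the $O(n^4)$ candidate intersection points from Lemma~\ref{lem_TTCCTC} and evaluate $W^*(x)$ at each via Theorem~\ref{thm_find_med}, for a total of $O(n^5 \log n)$ time. The paper states this even more tersely than you do, so nothing is missing.
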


We remark that, when $R = 0$, $CH(\mathcal{C}(S))$
for any $S \subseteq V$ degenerates to a convex polygon,
so does $I(W_0)$ for any given $W_0$, if not null.
Drezner \cite{Drezner-82} proved that in this case
$I(W_0)$ is equivalent to the intersection of
all half-planes $H$ with $W(H) \ge W_0$.
Thus, whether $I(W_0)$ is null can be determined
by constructing and solving a linear program of $O(n^2)$ constraints,
which takes $O(n^2)$ time by Megiddo's result \cite{Megiddo-83-2}.
Since $|\mathcal{W}| = O(n^2)$, by Lemma \ref{lem_opt_region},
the \CEN{1}{1} problem can be solved in $O(n^2 \log n)$ time \cite{Hakimi-90},
by applying parametric search over $\mathcal{W}$ for $W^*_0$.
Unfortunately, it is hard to generalize this idea to the case $R > 0$,
motivating us to develop a different approach.

\section{Local $(1|1)_R$-Centroid within a Line} \label{sec_Alg_line}

In this section, we analyze the properties of
\MEDR{1}{x}s of a given point $x$ in Subsection \ref{ssec_Point},
and derive a procedure that prunes candidate points with respect to $x$.
Applying this procedure, we study a restricted version of
the \CENR{1}{1} problem in Subsection \ref{ssec_LineLocal},
in which the leader's choice is limited to a given non-horizontal line $L$,
and obtain an $O(n \log^2 n)$-time algorithm.
The algorithm is then extended as the basis of the test procedure
for the parametric search approach in Section \ref{sec_Alg_plane}.


\subsection{Pruning with Respect to a Point} \label{ssec_Point}

Given a point $x \in \Plane$ and
an angle $\theta$ between 0 and $2\pi$,
let $\y{x}{\theta}$ be the point on $\CR{x}$ with
polar angle $\theta$ with respect to $x$.%
\footnote{We assume that a polar angle is measured
counterclockwise from the positive x-axis.}
We define $MA(x) = \{\theta\:|\:\W{x}{\y{x}{\theta}}
= W^*(x), 0 \le \theta < 2\pi\}$,
that is, the set of angles $\theta$ maximizing
$\W{x}{\y{x}{\theta}}$ (see Figure \ref{fig_MA}).
It can be observed that, for any $\theta \in MA(x)$
and sufficiently small $\epsilon$,
both $\theta+\epsilon$ and $\theta-\epsilon$ belong to $MA(x)$,
because each $v \in V(\y{x}{\theta}|x)$ does not
intersect $\B{x}{\y{x}{\theta}}$ by definition.
This implies that angles in $MA(x)$ form
open angle interval(s) of non-zero length.

\begin{figure}[t]
    \centering
    \includegraphics[scale=1]{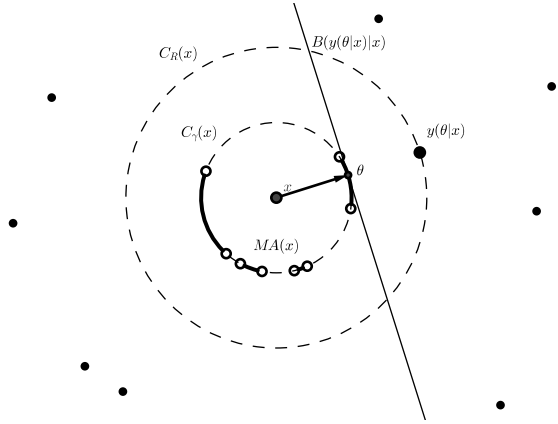}
    \caption{The black arcs represent the intervals of angles in $MA(x)$,
    whereas the open circles represent the open ends of these intervals.}
    \label{fig_MA}
\end{figure}

To simplify the terms, let $\W{x}{\theta} = \W{x}{\y{x}{\theta}}$
and $\B{x}{\theta} = \B{x}{\y{x}{\theta}}$
in the remaining of this section.
Also, let $\F{x}{\theta}$ be the line
passing through $x$ and parallel to $\B{x}{\theta}$.
The following lemma provides the basis for pruning.

\begin{lem} \label{lem_point_MAprune}
    Let $x \in \Plane$ be an arbitrary point,
    and $\theta$ be an angle in $MA(x)$.
    For any point $x' \notin
    H^-(\F{x}{\theta}, \y{x}{\theta})$,
    $W^*(x') \ge W^*(x)$.
\end{lem}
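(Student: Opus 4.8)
The plan is to exhibit, for any such $x'$, a \emph{feasible} follower location $y'$ whose weight capture against $x'$ is at least what the attack direction $\theta$ captures against $x$; this immediately gives $W^*(x') \ge \W{x}{\y{x}{\theta}} = W^*(x)$, the last equality being precisely the meaning of $\theta \in MA(x)$. The guiding intuition is that the hypothesis places $x'$ on the closed half-plane bounded by $\F{x}{\theta}$ that does \emph{not} contain $\y{x}{\theta}$, i.e.\ $x'$ is displaced from $x$ ``away from'' the attack, so the follower can replicate the very same attack after the same displacement.

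To make this precise, I would work with the common normal direction $u_\theta = (\cos\theta, \sin\theta)$, so that $\y{x}{\theta} = x + R\,u_\theta$, the bisector is $\B{x}{\theta} = \{p : \langle p - x, u_\theta\rangle = \rr\}$ with capture side $H^-(\B{x}{\theta}, \y{x}{\theta}) = \{p : \langle p - x, u_\theta\rangle > \rr\}$, and $\F{x}{\theta} = \{p : \langle p - x, u_\theta\rangle = 0\}$. The hypothesis $x' \notin H^-(\F{x}{\theta}, \y{x}{\theta})$ then reads $\langle x' - x, u_\theta\rangle \le 0$. Next I would set $y' := \y{x'}{\theta} = x' + R\,u_\theta$, which is feasible for $x'$ since $d(x', y') = R$; its bisector with $x'$ is $\B{x'}{y'} = \{p : \langle p - x', u_\theta\rangle = \rr\}$, parallel to $\B{x}{\theta}$. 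The key step is the half-plane inclusion $H^-(\B{x}{\theta}, \y{x}{\theta}) \subseteq H^-(\B{x'}{y'}, y')$: if $\langle p - x, u_\theta\rangle > \rr$ then $\langle p - x', u_\theta\rangle = \langle p - x, u_\theta\rangle - \langle x' - x, u_\theta\rangle \ge \langle p - x, u_\theta\rangle > \rr$. Intersecting with $V$ yields $V(\y{x}{\theta}|x) \subseteq V(y'|x')$, hence $\W{x'}{y'} \ge \W{x}{\y{x}{\theta}} = W^*(x)$, and since $y'$ is a feasible choice for $x'$ we conclude $W^*(x') \ge \W{x'}{y'} \ge W^*(x)$.

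I do not anticipate a genuine obstacle; this is a short geometric argument. The only thing demanding care is orientation bookkeeping — that $\F{x}{\theta}$, $\B{x}{\theta}$, and $\B{x'}{y'}$ all share the normal $u_\theta$, and that each $H^-(\cdot,\cdot)$ is the ``$>$'' side — so that the sign of $\langle x' - x, u_\theta\rangle$ comes out right and the displacement is seen to push the separating bisector strictly in the follower's favor (or leave it fixed). Equivalently, one may rotate coordinates so $\theta$ points along the positive $x$-axis, after which $\F{x}{\theta}$, $\B{x}{\theta}$, $\B{x'}{y'}$ are vertical lines and the whole argument reduces to comparing their $x$-coordinates.
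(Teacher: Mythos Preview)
Your proposal is correct and follows essentially the same approach as the paper: both place the follower at $\y{x'}{\theta}$ and verify the half-plane inclusion $H^-(\B{x}{\theta}, \y{x}{\theta}) \subseteq H^-(\B{x'}{\theta}, \y{x'}{\theta})$, then chain the inequalities $W^*(x') \ge \W{x'}{\theta} \ge \W{x}{\theta} = W^*(x)$. The only difference is cosmetic---you carry out the inclusion via explicit inner products with $u_\theta$, whereas the paper phrases it as ``the distance between $\F{x'}{\theta}$ and $\B{x}{\theta}$ is no less than $R/2$''.
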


\Xomit{
\begin{proof}
Since $x' \notin H^-(\F{x}{\theta}, \y{x}{\theta})$
and $\y{x}{\theta} \in H^-(\F{x}{\theta}, \y{x}{\theta})$,
by the definition of bisectors,
the distance between $\F{x'}{\theta}$
and $\B{x}{\theta}$ is no less than $R/2$,
which implies that
$H^-(\B{x}{\theta}, \y{x}{\theta}) \subseteq H^-(\B{x'}{\theta}, \y{x'}{\theta})$.
Therefore, we can derive the following inequality
\begin{eqnarray*}
    W^*(x') &\ge&   \W{x'}{\theta}                          \\
            &=&     W(H^-(\B{x'}{\theta}, \y{x'}{\theta})   \\
            &\ge&   W(H^-(\B{x}{\theta}, \y{x}{\theta})     \\
            &=&     \W{x}{\theta}                           \\
            &=&     W^*(x),
\end{eqnarray*}
which completes the proof.
\qed
\end{proof}
}

This lemma tells us that, given a point $x$ and an angle $\theta \in MA(x)$,
all points not in $H^-(\F{x}{\theta}, \y{x}{\theta})$
can be ignored while finding \CENR{1}{1}s,
as their weight losses are no less than that of $x$.
By this lemma, we can also prove that the weight loss function
is convex along any line on the plane, as shown below.

\begin{lem} \label{lem_convexity}
    Let $x_1, x_2$ be two arbitrary distinct points on a given line $L$.
    For any point $x \in \overline{x_1x_2} \backslash \{x_1, x_2\}$,
    $W^*(x) \le \max\{W^*(x_1), W^*(x_2)\}$.
\end{lem}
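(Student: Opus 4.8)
The plan is to reduce the statement to a single application of Lemma \ref{lem_point_MAprune}, taking the intermediate point $x$ itself as the base point. First I would observe that $MA(x)$ is nonempty: by Lemma \ref{lem_onC} there is a \MEDR{1}{x} of the form $\y{x}{\theta}$ for some angle $\theta \in [0, 2\pi)$, and then $\W{x}{\theta} = \W{x}{\y{x}{\theta}} = W^*(x)$, so $\theta \in MA(x)$. Fix one such $\theta$.

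Next I would look at the line $\F{x}{\theta}$, which by definition passes through $x$, and argue that at least one of $x_1, x_2$ fails to lie in the open half-plane $H^-(\F{x}{\theta}, \y{x}{\theta})$. The point $x$ lies on $L$ strictly between $x_1$ and $x_2$, so these two points occupy the two sides of $x$ along $L$. There are two cases. If $L$ coincides with $\F{x}{\theta}$, then both $x_1$ and $x_2$ lie on the bounding line $\F{x}{\theta}$ and hence neither is in the open half-plane. Otherwise $L$ meets $\F{x}{\theta}$ only at the point $x$, so $x_1$ and $x_2$ are strictly separated by $\F{x}{\theta}$; then exactly one of them lies outside $H^-(\F{x}{\theta}, \y{x}{\theta})$. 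In either case let $x_i$ (with $i \in \{1,2\}$) denote a point with $x_i \notin H^-(\F{x}{\theta}, \y{x}{\theta})$.

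Finally I would invoke Lemma \ref{lem_point_MAprune} with the point $x$, the angle $\theta \in MA(x)$, and the point $x' = x_i$: since $x_i \notin H^-(\F{x}{\theta}, \y{x}{\theta})$, the lemma gives $W^*(x_i) \ge W^*(x)$. Hence $\max\{W^*(x_1), W^*(x_2)\} \ge W^*(x_i) \ge W^*(x)$, which is exactly the claimed inequality.

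The only step requiring a moment of care is the elementary geometric claim that collinearity of $x_1, x, x_2$ with $x$ strictly between them forces one of $x_1, x_2$ out of the open half-plane bounded by \emph{any} line through $x$; I do not expect this to be a real obstacle, since it follows at once from separating the degenerate case $L = \F{x}{\theta}$ from the transversal case. Everything else is a direct quotation of Lemma \ref{lem_onC} and Lemma \ref{lem_point_MAprune}.
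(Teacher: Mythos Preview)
Your proposal is correct and follows essentially the same approach as the paper's own proof: both pick an angle $\theta \in MA(x)$, use that $\F{x}{\theta}$ passes through $x$ to force one of the endpoints $x_1,x_2$ out of the open half-plane $H^-(\F{x}{\theta},\y{x}{\theta})$, and then apply Lemma~\ref{lem_point_MAprune}. The only cosmetic difference is that the paper phrases it by contradiction (assuming $W^*(x) > W^*(x_1)$ and $W^*(x) > W^*(x_2)$) while you argue directly; your handling of the degenerate case $L = \F{x}{\theta}$ is in fact slightly more explicit than the paper's.
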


\Xomit{
\begin{proof}
Suppose by contradiction that $W^*(x) > W^*(x_1)$ and $W^*(x) > W^*(x_2)$
for some point $x \in \overline{x_1x_2} \backslash \{x_1, x_2\}$.
Since $W^*(x) > W^*(x_1)$, by Lemma \ref{lem_point_MAprune}
there exists an angle $\theta \in MA(x)$ such that
$x_1$ is included in $H^-(\F{x}{\theta}, \y{x}{\theta})$.
However, since $x \in \overline{x_1x_2} \backslash \{x_1, x_2\}$,
$x_1$ and $x_2$ locate on different sides of $\F{x}{\theta}$.
It follows that $x_2$ is outside $H^-(\F{x}{\theta}, \y{x}{\theta})$
and $W^*(x_2) \ge W^*(x)$ by Lemma \ref{lem_point_MAprune},
which contradicts the assumption.
Thus, the lemma holds.
%
%
\qed
\end{proof}
}

\begin{figure}[t]
    \centering
    \begin{minipage}[b]{0.48\textwidth}
        \centering
        \includegraphics[scale=1]{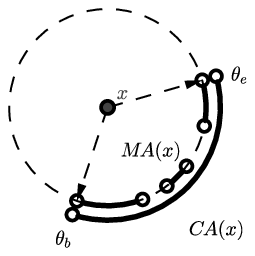}\\
        \scriptsize(a) $CA(x)$
    \end{minipage}
    \hfill
    \begin{minipage}[b]{0.48\textwidth}
        \centering
        \includegraphics[scale=1]{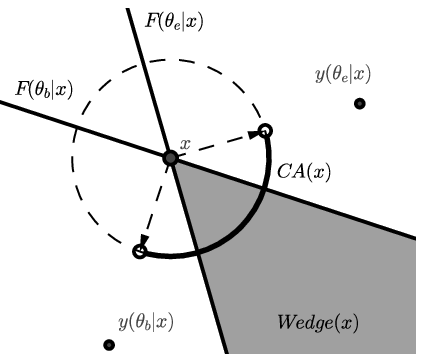}\\
        \scriptsize(b) $Wedge(x)$
    \end{minipage}
    \caption{$CA(x)$ and $Wedge(x)$.}
    \label{fig_CA_wedge}
\end{figure}

We further investigate the distribution of angles in $MA(x)$.
Let $CA(x)$ be the minimal angle interval covering all angles in $MA(x)$
(see Figure \ref{fig_CA_wedge}(a)),
and $\delta(CA(x))$ be its angle span in radians.
As mentioned before, $MA(x)$ consists of open angle interval(s) of non-zero length,
which implies that $CA(x)$ is an open interval and $\delta(CA(x)) > 0$.
Moreover, we can derive the following.

\begin{lem} \label{lem_point_ge_Pi}
    If $\delta(CA(x)) > \pi$, $x$ is a \CENR{1}{1}.
\end{lem}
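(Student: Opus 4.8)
The plan is to deduce from Lemma~\ref{lem_point_MAprune} that $W^*(x') \ge W^*(x)$ for \emph{every} point $x' \in \Plane$; since a \CENR{1}{1} is precisely a minimizer of the weight loss, this shows $x$ is a \CENR{1}{1}. The first step is to reformulate the half-plane condition of Lemma~\ref{lem_point_MAprune} in terms of polar angles. For $\theta \in MA(x)$, the line $\F{x}{\theta}$ passes through $x$ and is parallel to $\B{x}{\theta}$, which is the perpendicular bisector of $x$ and $\y{x}{\theta}$; since $\y{x}{\theta}$ sits at polar angle $\theta$ from $x$, the line $\F{x}{\theta}$ is orthogonal to the direction $\theta$, with $\y{x}{\theta}$ on its positive side. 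Hence a point $x' \ne x$ satisfies $x' \notin H^-(\F{x}{\theta}, \y{x}{\theta})$ exactly when the polar angle $\psi$ of $x' - x$ makes a non-acute angle with $\theta$, i.e.\ when $\theta$ lies in the closed arc $A_\psi = [\psi + \tfrac{\pi}{2}, \psi + \tfrac{3\pi}{2}]$ (taken modulo $2\pi$), which has length $\pi$.

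With this dictionary in hand, the second step is the claim that every closed arc $A$ of length $\pi$ on the circle of directions meets $MA(x)$. This is where the hypothesis $\delta(CA(x)) > \pi$ is used, and it is essentially immediate: if some length-$\pi$ closed arc $A$ were disjoint from $MA(x)$, then $MA(x)$ would be contained in the complementary open arc, itself of length $\pi$, so the minimal covering interval $CA(x)$ would have span at most $\pi$ --- contradicting $\delta(CA(x)) > \pi$. Granting the claim, the proof finishes: for $x' \ne x$ take $\psi$ to be the direction of $x' - x$, pick $\theta \in MA(x) \cap A_\psi$, and apply Lemma~\ref{lem_point_MAprune} to get $W^*(x') \ge W^*(x)$; and for $x' = x$ the inequality is trivial (any $\theta \in MA(x)$ works, since $x$ lies on $\F{x}{\theta}$ and is therefore not in the open half-plane $H^-(\F{x}{\theta}, \y{x}{\theta})$).

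I expect the argument to be short, and the only delicate point is the geometric bookkeeping in the first step: one must be careful that $\F{x}{\theta}$ is orthogonal to direction $\theta$ and that $\y{x}{\theta}$ lies on the side of $\F{x}{\theta}$ \emph{toward} $\theta$, so that ``$x' \notin H^-(\F{x}{\theta}, \y{x}{\theta})$'' really does translate into ``$\theta$ lies in the half-circle of directions pointing away from $x'$''. Once this is pinned down, both the covering argument over all directions $\psi$ and the invocation of the minimality of $CA(x)$ are routine, so no serious obstacle remains.
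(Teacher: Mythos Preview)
Your proposal is correct and follows essentially the same approach as the paper's proof: both translate the half-plane condition of Lemma~\ref{lem_point_MAprune} into a polar-angle statement (the ``bad'' angles for a given $x'$ form an open arc of length~$\pi$, equivalently the ``good'' angles form the complementary closed arc of length~$\pi$), then use $\delta(CA(x)) > \pi$ to guarantee that $MA(x)$ cannot avoid the good arc, and finish by applying Lemma~\ref{lem_point_MAprune}. Your write-up is somewhat more careful about the geometric bookkeeping (the orientation of $\F{x}{\theta}$ and the side containing $\y{x}{\theta}$), but the argument is the same.
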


\Xomit{
\begin{proof}
We prove this lemma by showing that $W^*(x') \ge W^*(x) \;\forall\; x' \ne x$.
Let $x' \in \Plane$ be an arbitrary point other than $x$,
and $\theta'$ be its polar angle with respect to $x$.
Obviously, any angle $\theta$ satisfying $x' \in H^-(\F{x}{\theta}, \y{x}{\theta})$
is in the open interval $(\theta'-\pi/2, \theta'+\pi/2)$,
the angle span of which is equal to $\pi$.
Since $\delta(CA(x)) > \pi$, by its definition there exists an angle
$\theta \in MA(x)$ such that $x' \notin H^-(\F{x}{\theta}, \y{x}{\theta})$.
Thus, by Lemma \ref{lem_point_MAprune},
we have $W^*(x') \ge W^*(x)$, thereby proves the lemma.
\qed
\end{proof}
}

We call a point $x$ satisfying Lemma \ref{lem_point_ge_Pi}
a \emph{strong \CENR{1}{1}},
since its discovery gives an
immediate solution to the \CENR{1}{1} problem.
Note that there are problem instances
in which no strong \CENR{1}{1}s exist.
%

Suppose that $\delta(CA(x)) \le \pi$ for some point $x \in \Plane$.
Let $Wedge(x)$ denote the \emph{wedge} of $x$,
defined as the intersection of the two half-planes
$H(\F{x}{\theta_b}, \y{x}{\theta_b})$
and $H(\F{x}{\theta_e}, \y{x}{\theta_e})$,
where $\theta_b$ and $\theta_e$ are the
beginning and ending angles of $CA(x)$, respectively.
As illustrated in Figure \ref{fig_CA_wedge}(b), $Wedge(x)$ is
the infinite region lying between two half-lines extending from $x$
(including $x$ and the two half-lines).
The half-lines defined by $\F{x}{\theta_e}$ and $\F{x}{\theta_b}$
are called its boundaries,
and the counterclockwise (CCW) angle between the two boundaries
is denoted by $\delta(Wedge(x))$.
Since $0 < \delta(CA(x)) \le \pi$,
we have that $Wedge(x) \ne \emptyset$ and $0 \le \delta(Wedge(x)) < \pi$.

It should be emphasized that $Wedge(x)$ is
a computational byproduct of $CA(x)$
when $x$ is not a strong \CENR{1}{1}.
In other words, not every point has its wedge.
Therefore, we make the following assumption (or restriction)
in order to avoid the misuse of $Wedge(x)$.

\begin{ass} \label{ass_CA}
    Whenever $Wedge(x)$ is mentioned,
    the point $x$ has been found not to be a strong \CENR{1}{1},
    either by computation or by properties.
    Equivalently, $\delta(CA(x)) \le \pi$.
\end{ass}

The following essential lemma makes $Wedge(x)$
our main tool for prune-and-search.
(Note that its proof cannot be trivially
derived from Lemma \ref{lem_point_MAprune},
since by definition $\theta_b$ and $\theta_e$
do not belong to the open intervals $CA(x)$ and $MA(x)$.)

\begin{lem} \label{lem_point_wedge}
    Let $x \in \Plane$ be an arbitrary point.
    For any point $x' \notin Wedge(x)$, $W^*(x') \ge W^*(x)$.
\end{lem}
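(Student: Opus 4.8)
The plan is to reduce the statement to Lemma \ref{lem_point_MAprune} by replacing the two boundary angles $\theta_b,\theta_e$ of $CA(x)$ (which lie outside $MA(x)$, and so cannot be fed into Lemma \ref{lem_point_MAprune} directly, exactly as the remark before the lemma warns) by genuine maximizing angles of $MA(x)$ sitting just inside $CA(x)$. First I would dispose of the case $x'=x$: since $x$ lies on both bounding lines $\F{x}{\theta_b}$ and $\F{x}{\theta_e}$, we have $x\in Wedge(x)$, so the hypothesis $x'\notin Wedge(x)$ already forces $x'\neq x$. For $x'\neq x$, let $\theta'$ be the polar angle of $x'$ with respect to $x$. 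The whole argument then becomes one‑dimensional once we record the elementary membership fact (the one already used in the proof of Lemma \ref{lem_point_ge_Pi}): $x'\in H^-(\F{x}{\theta},\y{x}{\theta})$ precisely when $\theta$ lies in the open arc $(\theta'-\pi/2,\theta'+\pi/2)$, and hence $x'\in H(\F{x}{\theta},\y{x}{\theta})$ precisely when $\theta$ lies in the closed arc $[\theta'-\pi/2,\theta'+\pi/2]$.

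Now $x'\notin Wedge(x)$ means $x'$ fails to lie in at least one of the two closed half-planes $H(\F{x}{\theta_b},\y{x}{\theta_b})$, $H(\F{x}{\theta_e},\y{x}{\theta_e})$ whose intersection is $Wedge(x)$; by the obvious symmetry between $\theta_b$ and $\theta_e$ I would assume $x'\notin H(\F{x}{\theta_b},\y{x}{\theta_b})$. By the membership fact above this says $\theta_b$ lies strictly outside $[\theta'-\pi/2,\theta'+\pi/2]$, i.e. strictly inside the complementary open arc $(\theta'+\pi/2,\theta'+3\pi/2)$ (all angle arithmetic mod $2\pi$; by Assumption \ref{ass_CA} we have $\delta(CA(x))\le\pi$, so $CA(x)$ is a proper arc that we may unroll and treat as a genuine interval with left endpoint $\theta_b$). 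The key point is that, since $CA(x)$ is the \emph{minimal} arc covering the open set $MA(x)$ and $\theta_b$ is its beginning angle, $MA(x)$ must come arbitrarily close to $\theta_b$ from the interior side: for every $\epsilon>0$ there is some $\theta\in MA(x)$ with $\theta_b<\theta<\theta_b+\epsilon$, for otherwise the sub‑arc $(\theta_b,\theta_b+\epsilon)$ could be trimmed off, contradicting minimality. Choosing $\epsilon$ small enough that $(\theta_b,\theta_b+\epsilon)$ stays inside the open arc $(\theta'+\pi/2,\theta'+3\pi/2)$, any such $\theta$ satisfies $\theta\notin(\theta'-\pi/2,\theta'+\pi/2)$, hence $x'\notin H^-(\F{x}{\theta},\y{x}{\theta})$ with $\theta\in MA(x)$; Lemma \ref{lem_point_MAprune} then gives $W^*(x')\ge W^*(x)$. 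The remaining case $x'\notin H(\F{x}{\theta_e},\y{x}{\theta_e})$ is symmetric, now approaching $\theta_e$ from below within $CA(x)$.

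I expect the only real obstacle to be the bookkeeping of open versus closed arcs together with the wrap‑around of polar angles; in particular one must argue carefully that the open interval $CA(x)$ having left endpoint $\theta_b$ yields maximizing angles arbitrarily near $\theta_b$ on the correct side, which is precisely the subtlety that forbids using $\theta_b$ itself in Lemma \ref{lem_point_MAprune}. Beyond that, everything is a direct translation of the half‑plane membership conditions into conditions on angles, followed by one application of Lemma \ref{lem_point_MAprune}.
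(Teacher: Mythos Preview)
Your proposal is correct and follows essentially the same route as the paper: assume by symmetry that $x'\notin H(\F{x}{\theta_b},\y{x}{\theta_b})$, exploit the minimality of $CA(x)$ to pick an angle $\theta\in MA(x)$ arbitrarily close to $\theta_b$ on the interior side for which $x'\notin H^-(\F{x}{\theta},\y{x}{\theta})$, and then invoke Lemma~\ref{lem_point_MAprune}. The only cosmetic difference is that the paper further splits into the sub-cases $x'\in H(\F{x}{\theta_e},\y{x}{\theta_e})$ and $x'\notin H(\F{x}{\theta_e},\y{x}{\theta_e})$ and argues geometrically, whereas your explicit angle-membership characterization (the one from the proof of Lemma~\ref{lem_point_ge_Pi}) handles both sub-cases uniformly and so makes that split unnecessary.
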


\Xomit{
\begin{proof}
%
%
%
%
By symmetry, suppose that $x' \notin H(\F{x}{\theta_b}, \y{x}{\theta_b})$.
We can further divide the position of $x'$ into two cases,
(1) $x' \in H(\F{x}{\theta_e}, \y{x}{\theta_e})$ and
(2) $x' \notin H(\F{x}{\theta_e}, \y{x}{\theta_e})$.

Consider case (1).
The two assumptions ensure that there exists
an angle $\theta' \in (\theta_b, \theta_e]$,
such that $\F{x}{\theta'}$ passes through $x'$.
Obviously, any angle $\theta'' \in (\theta_b, \theta')$
satisfies that $x' \notin H(\F{x}{\theta''}, \y{x}{\theta''})$.
By the definition of $CA(x)$, there must exist an angle
$\theta'_b \in (\theta_b, \theta')$ infinitely close to $\theta_b$,
such that $\theta'_b$ belongs to $MA(x)$.
Thus, by Lemma \ref{lem_point_MAprune}, we have that $W^*(x') \ge W^*(x)$.

In case (2), for any angle $\theta'' \in MA(x)$,
we have that $x' \notin H(\F{x}{\theta''}, \y{x}{\theta''})$,
since $\theta''$ is in $(\theta_b, \theta_e)$.
Again, $W^*(x') \ge W^*(x)$ by Lemma \ref{lem_point_MAprune}.
\qed
\end{proof}
}

Finally, we consider the computation of $Wedge(x)$.

\begin{lem} \label{lem_computeWedge}
    Given a point $x \in \Plane$,
    $MA(x)$, $CA(x)$, and $Wedge(x)$ can be computed in $O(n \log n)$ time.
\end{lem}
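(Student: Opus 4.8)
The plan is to reuse the angle-sweeping machinery already behind Theorem~\ref{thm_find_med} and then read off $CA(x)$ and $Wedge(x)$ with only $O(n)$ additional bookkeeping. First I would make the structure of the function $\theta\mapsto\W{x}{\theta}$ explicit. Since $\B{x}{\theta}$ is exactly the tangent line to $\Cr{x}$ at the point of polar angle $\theta$, with $\y{x}{\theta}$ on its far side from $x$, a customer $v$ lies in $V(\y{x}{\theta}|x)$ precisely when $v$ is strictly beyond that tangent line. No customer inside $\Cr{x}$ ever satisfies this, whereas a customer $v$ outside $\Cr{x}$, at polar angle $\alpha_v$ and distance $\rho_v$ from $x$, satisfies it exactly for $\theta$ in the open arc $(\alpha_v-\arccos(\rr/\rho_v),\,\alpha_v+\arccos(\rr/\rho_v))$, whose two endpoints are the tangent angles of $v$ to $\Cr{x}$ already computed in the algorithm behind Theorem~\ref{thm_find_med}. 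Hence $\W{x}{\theta}$ equals the total weight of the customers whose open tangent arc contains $\theta$ --- a weighted circular stabbing function with at most $2n$ breakpoints.

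Next I would run the sweep: sort the at most $2n$ tangent angles in $O(n\log n)$ time and sweep $\theta$ once around the circle, maintaining the running weight, which stays constant on each elementary arc between consecutive breakpoints; the largest value encountered is $W^*(x)$. To extract $MA(x)$, I would use that at a breakpoint $\theta_0$ the capturing condition $d(v,\y{x}{\theta_0})<d(v,x)$ is non-strict, so $\W{x}{\theta_0}$ equals the smaller of the two values on the elementary arcs flanking $\theta_0$; hence $\theta_0\in MA(x)$ iff both flanking arcs attain $W^*(x)$. Thus $MA(x)$ is the union of the maximal runs of consecutive $W^*(x)$-attaining elementary arcs, with their interior breakpoints filled in and the two endpoints of each run removed, which at once reproves the earlier claim that $MA(x)$ is a disjoint union of open arcs of positive length and yields $MA(x)$ as a sorted list of $O(n)$ open arcs in $O(n)$ extra time. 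From this list, $CA(x)$ is the complement of the longest arc disjoint from $MA(x)$ (the largest gap), found in $O(n)$, and $\delta(CA(x))$ is $2\pi$ minus that gap length. Finally, when $\delta(CA(x))\le\pi$ the endpoints $\theta_b,\theta_e$ of $CA(x)$ are in hand, and $Wedge(x)=H(\F{x}{\theta_b},\y{x}{\theta_b})\cap H(\F{x}{\theta_e},\y{x}{\theta_e})$ is written down in $O(1)$, each half-plane being bounded by the line through $x$ orthogonal to the corresponding polar direction. The costs sum to $O(n\log n)$.

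The sorting and the sweep are routine; the step that needs care is the boundary analysis used to read off $MA(x)$ and $CA(x)$ --- specifically, that strictness of the capturing inequality keeps the tangent angles themselves out of $MA(x)$, so the ``minimal covering interval'' $CA(x)$ is genuinely the open complement of the largest gap and is well-defined (general position makes all $2n$ tangent angles distinct). The degenerate cases cause no trouble: if no customer lies outside $\Cr{x}$, or if $\W{x}{\cdot}$ attains its maximum at all but finitely many angles, then $\delta(CA(x))=2\pi>\pi$, so $x$ is a strong \CENR{1}{1} and, by Assumption~\ref{ass_CA}, no wedge is required of the computation.
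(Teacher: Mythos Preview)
Your proposal is correct and follows essentially the same approach as the paper: invoke Theorem~\ref{thm_find_med} for the sorted tangent angles and $W^*(x)$, sweep once more to extract the maximal open arcs forming $MA(x)$ in $O(n)$, derive $CA(x)$ from $MA(x)$ in $O(n)$, and write down $Wedge(x)$ in $O(1)$ when $\delta(CA(x))\le\pi$. Your write-up supplies more implementation detail than the paper's terse proof; one small caveat is that the paper's stated general-position assumption (no three customers collinear, no shared $x$- or $y$-coordinates) does not literally force the $2n$ tangent angles to $\Cr{x}$ to be distinct for an arbitrary $x\in\Plane$, but the sweep handles ties without affecting correctness or the $O(n\log n)$ bound.
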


\Xomit{
\begin{proof}
By Theorem \ref{thm_find_med}, we first compute $W^*(x)$
and those ordered tangent lines in $O(n \log n)$ time.
Then, by performing angle sweeping around $\Cr{x}$,
we can identify in $O(n)$ time those open intervals of angles $\theta$
with $\W{x}{\theta} = W^*(x)$, of which $MA(x)$ consists.
Again by sweeping around $\Cr{x}$, 
$CA(x)$ can be obtained from $MA(x)$ in $O(n)$ time.
Now, if we find $x$ to be a strong \CENR{1}{1} by checking $\delta(CA(x))$,
the \CENR{1}{1} problem is solved and the algorithm can be terminated.
Otherwise, $Wedge(x)$ can be constructed in $O(1)$ time.
\qed
\end{proof}
}

\subsection{Searching on a Line} \label{ssec_LineLocal}

Although computing wedges can be used to prune candidate points,
it does not serve as a stable prune-and-search tool,
since wedges of different points have indefinite angle intervals and spans.
However, Assumption \ref{ass_CA} makes it work fine with lines.
Here we show how to use the wedges to
compute a \emph{local optimal} point on a given line,
i.e. a point $x$ with $W^*(x) \le W^*(x')$ for any point $x'$ on the line.

Let $L$ be an arbitrary line, which is assumed
to be non-horizontal for ease of discussion.
For any point $x$ on $L$, we can compute $Wedge(x)$
and make use of it for pruning purposes
by defining its \emph{direction} with respect to $L$.
Since $\delta(Wedge(x)) < \pi$ by definition,
there are only three categories of directions
according to the intersection of $Wedge(x)$ and $L$:
\begin{description}
    \item [\rm \emph{Upward}]--
        the intersection is the half-line of $L$ above and including $x$;
    \item [\rm \emph{Downward}]--
        the intersection is the half-line of $L$ below and including $x$;
    \item [\rm \emph{Sideward}]--
        the intersection is $x$ itself.
\end{description}
If $Wedge(x)$ is sideward, $x$ is a local optimal point on $L$,
since by Lemma \ref{lem_point_wedge}
$W^*(x) \le W^*(x') \;\forall\; x' \in L$.
Otherwise, either $Wedge(x)$ is upward or downward,
the points on the opposite half of $L$
can be pruned by Lemma \ref{lem_point_wedge}.
It shows that computing wedges acts as
a predictable tool for pruning on $L$.

Next, we list sets of \emph{breakpoints} on $L$
in which a local optimal point locates.
Recall that $\mathcal{T}$ is the set of outer tangent lines
of all pairs of circles in $\mathcal{C}(V)$.
We define the \emph{$\mathcal{T}$-breakpoints} as the set $L \times \mathcal{T}$
of intersection points between $L$ and lines in $\mathcal{T}$,
and the \emph{$\mathcal{C}$-breakpoints} as the set $L \times \mathcal{C}(V)$
of intersection points between $L$ and circles in $\mathcal{C}(V)$.
We have the following lemmas for breakpoints.

\begin{lem} \label{lem_neq_weight}
    Let $x_1, x_2$ be two distinct points on $L$. If $W^*(x_1) > W^*(x_2)$,
    there exists at least a breakpoint on the segment $\overline{x_1x_2} \backslash \{x_1\}$.
\end{lem}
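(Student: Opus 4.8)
The plan is to argue by contradiction on the contrapositive: suppose $W^*(x_1) > W^*(x_2)$ for two distinct points $x_1, x_2$ on $L$, yet there is \emph{no} breakpoint on the half-open segment $\overline{x_1x_2} \backslash \{x_1\}$. I want to derive a contradiction by showing that the weight loss must be constant (or at least non-increasing from $x_1$) along a breakpoint-free stretch of $L$. The key intuition is that breakpoints are exactly the places where the combinatorial structure underlying $W^*$ can change: a $\mathcal{C}$-breakpoint is where some $v \in V$ enters or leaves the disk bounded by $\Cr{x}$ (so whether $v$ even admits tangent lines to $\Cr{x}$ changes), and a $\mathcal{T}$-breakpoint is where the circular order of the tangent directions of two customers $v, v'$ around $x$ swaps. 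Between consecutive breakpoints, neither of these discrete events occurs, so the angle-sweep computation of $W^*(x)$ from Theorem~\ref{thm_find_med} runs through the same combinatorial sequence.

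First I would set up the local picture: let $x$ move along the breakpoint-free relatively open subsegment $\sigma$ of $L$ containing the interior of $\overline{x_1x_2}$ (and containing $x_2$ but not $x_1$, per the statement). For each $v \in V$ lying outside $\Cr{x}$, the two tangent lines from $v$ to $\Cr{x}$ have tangent points whose polar angles with respect to $x$ vary continuously with $x$; since no $\mathcal{C}$-breakpoint is crossed, the set of such "active" customers is fixed on $\sigma$, and since no $\mathcal{T}$-breakpoint is crossed, the cyclic order of all these $2n$-ish tangent angles around $x$ is fixed on $\sigma$. The maximum-weight tangent half-plane is determined by a contiguous arc in this cyclic order, and $W^*(x)$ is the weight captured by the best such arc. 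Then I would argue that $W^*$ is continuous on $\sigma$ (the capturing weight of each candidate arc is locally constant as customers stay strictly on one side of a moving bisector, changing only at breakpoints) and, being an integer-type combinatorial maximum that cannot jump without crossing a breakpoint, is in fact constant on $\sigma$. Combining this with Lemma~\ref{lem_convexity} (convexity of $W^*$ along $L$) to rule out the boundary endpoint $x_1$ contributing a strictly larger value — since $x_1 \notin \sigma$ but is an endpoint, and $W^*$ is convex, if $W^*$ is constant on $(x_1, x_2]$ it must also be $\le$ that constant at $x_1$ — yields $W^*(x_1) \le W^*(x_2)$, the desired contradiction.

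An alternative, and perhaps cleaner, route avoids analyzing the sweep directly and instead uses Lemma~\ref{lem_point_MAprune}: if $W^*(x_1) > W^*(x_2)$, pick $\theta \in MA(x_2)$; then $x_1 \in H^-(\F{x_2}{\theta}, \y{x_2}{\theta})$, so $x_2$ lies strictly between $x_1$ and the foot of the relevant perpendicular, and the bisector $\B{x_2}{\theta}$ — equivalently a tangent line to $\Cr{x_2}$ — separates more weight for $x_2$ than anything does for $x_1$. As $x$ slides from $x_2$ toward $x_1$ along $L$, track this particular optimal tangent line; the claim is that for $W^*$ to drop strictly (as it must, since $W^*(x_1) < W^*(x_2)$), some customer must cross a bisector that is tangent to $\Cr{x}$, and such an event is precisely a $\mathcal{T}$- or $\mathcal{C}$-breakpoint on $\overline{x_1 x_2}$; moreover the crossing happens strictly after leaving $x_2$'s position but at or before reaching $x_1$, i.e. on $\overline{x_1 x_2}\backslash\{x_1\}$. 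The asymmetry in the half-open interval (excluding $x_1$, including $x_2$) is explained by ties being broken in favour of the leader, so the weight function is "lower semicontinuous from the $x_2$ side" in the appropriate sense.

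The main obstacle, I expect, is making the breakpoint-crossing argument airtight at the level of \emph{individual customers and which side of a moving tangent line they fall on}, rather than hand-waving about "combinatorial changes." Concretely: I must show that if $x, x' \in L$ with $W^*(x) > W^*(x')$, then along $\overline{xx'}$ some customer $v$ transitions from being strictly inside the optimal tangent half-plane (contributing to the capture) to not being so, and pin down that the transition point is either where $v$'s disk $\Cr{\cdot}$ boundary sweeps across $v$ (a $\mathcal{C}$-breakpoint) or where the tangent line through two customers' tangent points rotates past alignment (a $\mathcal{T}$-breakpoint). Handling the degenerate ties — exactly the situation the half-open interval is designed to accommodate — and confirming the general-position assumption on $V$ (no three collinear, no shared coordinates) suffices to keep these events distinct and generic, will be the delicate bookkeeping. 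Once that correspondence between "weight strictly decreases" and "a breakpoint is crossed on $\overline{x_1x_2}\backslash\{x_1\}$" is established, the lemma follows immediately.
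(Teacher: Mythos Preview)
Both routes have gaps, and both miss the paper's three-line argument. In your first approach, the combinatorial claim that $W^*$ is constant on a breakpoint-free open subsegment is correct, but your closing step fails: Lemma~\ref{lem_convexity} yields only \emph{quasi}-convexity ($W^*(x)\le\max\{W^*(x_1),W^*(x_2)\}$ for $x$ strictly between), and a quasi-convex function constant on $(x_1,x_2]$ may still jump up at $x_1$ --- a step function equal to $c$ on $(x_1,\infty)$ and to $c'>c$ on $(-\infty,x_1]$ is quasi-convex. What you actually need is lower semi-continuity of $W^*$ at $x_1$; this does hold (because $\W{x}{\theta}$ counts an \emph{open} half-plane, so $\liminf_{x'\to x}\W{x'}{\theta}\ge\W{x}{\theta}$), but it is a separate argument you have not supplied, and it is not what Lemma~\ref{lem_convexity} says. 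In your second approach the roles of $x_1$ and $x_2$ are reversed throughout: you pick $\theta\in MA(x_2)$ and later write ``since $W^*(x_1) < W^*(x_2)$'', but the hypothesis is $W^*(x_1) > W^*(x_2)$, and the contrapositive of Lemma~\ref{lem_point_MAprune} runs the other way. Even after swapping the indices, the event ``customer $v$ crosses the translating bisector $\B{x}{\theta}$'' for a \emph{fixed} $\theta$ is not itself a $\mathcal{T}$- or $\mathcal{C}$-breakpoint of $L$, so the tracking argument does not land where you claim.

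The paper's proof bypasses all of this via Lemma~\ref{lem_point_roundch}. Pick any $\theta\in MA(x_1)$ and set $S=V\cap H^-(\B{x_1}{\theta},\y{x_1}{\theta})$, so that $W(S)=W^*(x_1)$ and $x_1$ lies strictly outside $CH(\mathcal{C}(S))$. Since $W^*(x_2)<W(S)$, the contrapositive of Lemma~\ref{lem_point_roundch} forces $x_2\in CH(\mathcal{C}(S))$. Hence $\overline{x_1x_2}\setminus\{x_1\}$ meets the boundary of $CH(\mathcal{C}(S))$, and every point of that boundary lies on a line of $\mathcal{T}$ or an arc of a circle in $\mathcal{C}(V)$ --- a breakpoint by definition. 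No sweep bookkeeping, no semi-continuity, no tracking of moving tangents.
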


\Xomit{
\begin{proof}
Let $\theta$ be an arbitrary angle in $MA(x_1)$ and
$S$ be the subset of $V$ located in the half-plane
$H^-(\B{x_1}{\theta}, \y{x_1}{\theta})$.
By definition, $x_1$ is outside the convex hull $CH(\mathcal{C}(S))$ and $W^*(x_1) = W(S)$.
On the other hand, since $W^*(x_2) < W^*(x_1) = W(S)$ by assumption,
we have that $x_2$ is inside $CH(\mathcal{C}(S))$ by Lemma \ref{lem_point_roundch}.
Thus, the segment $\overline{x_1x_2} \backslash \{x_1\}$
intersects with the boundary of $CH(\mathcal{C}(S))$.
Since the boundary of $CH(\mathcal{C}(S))$ consists of
segments of lines in $\mathcal{T}$ and arcs of circles in $\mathcal{C}(V)$,
the intersection point is either a $\mathcal{T}$-breakpoint or a $\mathcal{C}$-breakpoint,
thereby proves the lemma.
\qed
\end{proof}
}

\begin{lem} \label{lem_line_candidate}
    There exists a local optimal point $x^*_L$ which is also a breakpoint.
\end{lem}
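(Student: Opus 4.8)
The plan is to combine the convexity of the weight-loss function along $L$ (Lemma~\ref{lem_convexity}) with the fact, recorded in Lemma~\ref{lem_neq_weight}, that $W^*$ cannot change value between two points of $L$ unless a breakpoint lies between them. The starting observation is that $W^*$ takes only finitely many values --- it lies in the finite set $\mathcal{W}$, since $W^*(x) = W(S)$ for some $S \subseteq V$ --- so $W^*$ attains its minimum $c$ over $L$, and the set $A \subseteq L$ of points with $W^*(x) = c$, i.e.\ the local optimal points on $L$, is nonempty. By Lemma~\ref{lem_convexity}, $A$ is an interval: if $x_1, x_2 \in A$ and $x$ lies strictly between them on $L$, then $W^*(x) \le \max\{W^*(x_1), W^*(x_2)\} = c$ forces $W^*(x) = c$.

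The second ingredient is that $W^*$ is constant on every maximal open sub-interval $J \subseteq L$ containing no breakpoint. Indeed, for distinct $x_1, x_2 \in J$ the half-open segment $\overline{x_1x_2} \backslash \{x_1\}$ stays inside $J$ and hence contains no breakpoint, so Lemma~\ref{lem_neq_weight} rules out $W^*(x_1) > W^*(x_2)$; by symmetry $W^*$ is constant on $J$. Since $\mathcal{T}$ and $\mathcal{C}(V)$ are finite, the $\mathcal{T}$- and $\mathcal{C}$-breakpoints form a finite subset of $L$, cutting $L$ into finitely many maximal breakpoint-free open intervals together with the breakpoints themselves.

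I would then argue by contradiction, assuming that $A$ contains no breakpoint. Being a connected interval disjoint from all breakpoints, $A$ lies inside a single maximal breakpoint-free open interval $J$; since $W^*$ is constant on $J$ and equals $c$ on the nonempty subset $A \subseteq J$, it equals $c$ throughout $J$, whence $J \subseteq A$ and therefore $A = J$. As $L$ carries at least one breakpoint, $J \ne L$, so $J$ has a finite endpoint $p$, and $p$ must itself be a breakpoint by the maximality of $J$. Now apply Lemma~\ref{lem_neq_weight} with $x_1 = p$ and any $x_2 \in J$: if $W^*(p) > W^*(x_2) = c$ there would be a breakpoint on $\overline{px_2} \backslash \{p\} \subseteq J$, which is impossible; hence $W^*(p) \le c$, and by minimality $W^*(p) = c$, i.e.\ $p \in A = J$ --- contradicting $p \notin J$. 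Thus some breakpoint lies in $A$, and it can be taken as $x^*_L$.

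The step I expect to be the most delicate is the passage from the open interval $J$ to its endpoint: Lemma~\ref{lem_neq_weight} only guarantees a breakpoint on the half-open segment $\overline{x_1x_2} \backslash \{x_1\}$, so the contradiction closes only because the breakpoint $p$ is placed in the role of $x_1$ rather than $x_2$ --- getting this orientation right is the crux of the argument. A minor point to be disposed of separately is the degenerate situation in which $L$ meets no line of $\mathcal{T}$ and no circle of $\mathcal{C}(V)$: then $L$ has no breakpoint and $W^*$ is constant on all of $L$, so the statement must be read under the mild hypothesis that breakpoints on $L$ exist, which holds whenever the lemma is actually invoked.
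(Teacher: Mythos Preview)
Your proof is correct. Both arguments rest on Lemma~\ref{lem_neq_weight}, but the paper reaches the conclusion more directly: it picks a local optimal point $x^*_L$ that has an ``adjacent'' point $x'$ with $W^*(x') > W^*(x^*_L)$, applies Lemma~\ref{lem_neq_weight} to the pair $(x', x^*_L)$, and concludes that the resulting breakpoint on $\overline{x'x^*_L}\setminus\{x'\}$ must be $x^*_L$ itself (implicitly because breakpoints are discrete and $x'$ is arbitrarily close). Your route---showing that the optimal set $A$ is an interval via Lemma~\ref{lem_convexity}, that $W^*$ is constant on each maximal breakpoint-free interval, and then deriving a contradiction at the endpoint of $A=J$---is longer but considerably more careful: it makes explicit the topology that the paper leaves to the word ``adjacent'', and your observation about the orientation in Lemma~\ref{lem_neq_weight} (placing the breakpoint $p$ as $x_1$) is exactly the step the paper's shortcut hides. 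Your handling of the degenerate no-breakpoint case is also more honest than the paper's ``holds trivially'', which tacitly assumes breakpoints exist.
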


\Xomit{
\begin{proof}
Let $x^*_L$ be a local optimal point such that $W^*(x') > W^*(x^*_L)$
for some point $x'$ adjacent to $x^*_L$ on $L$.
Note that, if no such local optimal point exists,
every point on $L$ must have the same weight loss and be local optimal,
and the lemma holds trivially.
If such $x^*_L$ and $x'$ exist, by Lemma \ref{lem_neq_weight}
there is a breakpoint on $\overline{x'x^*_L} \backslash \{x'\}$,
which is $x^*_L$ itself.
Thus, the lemma holds.
\qed
\end{proof}
}

We remark that outer tangent lines parallel to $L$
are exceptional cases while considering breakpoints.
For any line $T \in \mathcal{T}$ that is parallel to $L$,
either $T$ does not intersect with $L$ or they just coincide.
In either case, $T$ is irrelevant to the finding of local optimal points,
and should not be counted for defining $\mathcal{T}$-breakpoints.

Now, by Lemma \ref{lem_line_candidate}, if we have all breakpoints on $L$
sorted in the decreasing order of their y-coordinates,
a local optimal point can be found by performing binary search using wedges.
Obviously, such sorted sequence can be obtained in $O(n^2 \log n)$ time,
since $|L \times \mathcal{T}| = O(n^2)$ and $|L \times \mathcal{C}(V)| = O(n)$.
However, in order to speed up the computations
of local optimal points on multiple lines,
alternatively we propose an $O(n^2 \log n)$-time preprocessing,
so that a local optimal point on any given line
can be computed in $O(n \log^2 n)$ time.

The preprocessing itself is very simple.
For each point $v \in V$, we compute a sequence $P(v)$, consisting of points in
$V \backslash \{v\}$ sorted in increasing order
of their polar angles with respect to $v$.
The computation for all $v \in V$ takes $O(n^2 \log n)$ time in total.
Besides, all outer tangent lines in $\mathcal{T}$ are computed in $O(n^2)$ time.
We will show that, for any given line $L$,
$O(n)$ sorted sequences can be obtained from
these pre-computed sequences in $O(n \log n)$ time,
which can be used to replace a sorted sequence of
all $\mathcal{T}$-breakpoints in the process of binary search.

For any two points $v \in V$ and $z \in \Plane$,
let $\Tr{v}{z}$ be the outer tangent line
of $\Cr{v}$ and $\Cr{z}$ to the right of the line from $v$ to $z$.
Similarly, let $\Tl{v}{z}$ be the outer tangent line to the left.
(See Figure \ref{fig_outertangent}.)
Moreover, let $\tr[L]{v}{z}$ and $\tl[L]{v}{z}$ be the points
at which $\Tr{v}{z}$ and $\Tl{v}{z}$ intersect with $L$, respectively.
We partition $\mathcal{T}$ into $O(n)$ sets
$\mathcal{T}^r(v) = \{\Tr{v}{v_i}|v_i \in V \backslash \{v\}\}$ 
and $\mathcal{T}^l(v) = \{\Tl{v}{v_i}|v_i \in V \backslash \{v\}\}$ for $v \in V$,
and consider their corresponding $\mathcal{T}$-breakpoints independently.
By symmetry, we only discuss the case about $L \times \mathcal{T}^r(v)$.

\begin{figure}[t]
    \centering
    \includegraphics[scale=1]{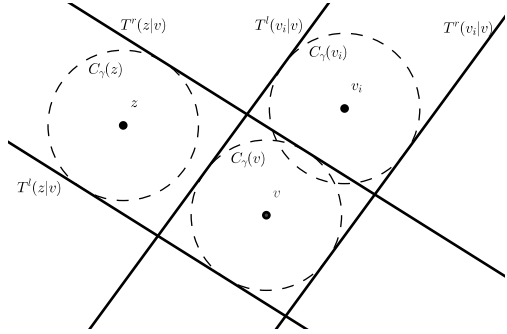}
    \caption{Outer tangent lines of $v$.}
    \label{fig_outertangent}
\end{figure}

\begin{lem} \label{lem_buildsequence}
    For each $v \in V$, we can compute $O(1)$ sequences of $\mathcal{T}$-breakpoints on $L$,
    which satisfy the following conditions:
    \begin{enumerate} [(a)]
        \item Each sequence is of length $O(n)$ and can be obtained in $O(\log n)$ time.
        \item Breakpoints in each sequence are sorted in decreasing y-coordinates.
        \item The union of breakpoints in all sequences form $L \times \mathcal{T}^r(v)$.
    \end{enumerate}
\end{lem}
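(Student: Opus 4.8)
\noindent\emph{Proof idea.}\quad The plan is to turn the family $\mathcal{T}^r(v)$ into a family of tangent lines of a \emph{single} circle, so that the precomputed angular order $P(v)$ already orders them usefully. Since every circle in $\mathcal{C}(V)$ has the same radius $\rr$, the outer tangent line $\Tr{v}{v_i}$ is parallel to the segment $\overline{vv_i}$ and lies at distance $\rr$ from $v$; hence it is tangent to $\Cr{v}$, its tangent point on $\Cr{v}$ having polar angle $\theta_i-\pi/2$ with respect to $v$, where $\theta_i$ denotes the polar angle of $v_i$ with respect to $v$. Consequently $\mathcal{T}^r(v)$ is precisely a family of tangent lines of $\Cr{v}$, and $P(v)$ lists its members in cyclic order of their tangent points.

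Next I would study the map $\phi\mapsto y(\phi)$ sending the polar angle $\phi$ of a point of $\Cr{v}$ to the $y$-coordinate of the point where the tangent line of $\Cr{v}$ at that point meets $L$ (undefined exactly when that tangent is parallel to $L$). Writing $L$ as $x=\mu y+c$ — possible since $L$ is non-horizontal — and the tangent line at $\phi$ as $(q-v)\cdot(\cos\phi,\sin\phi)=\rr$, a substitution gives $y(\phi)=N(\phi)/D(\phi)$, where $D(\phi)=\mu\cos\phi+\sin\phi$ vanishes exactly at the at most two angles of tangents parallel to $L$, and $N$ is another sinusoid. A short differentiation shows $y'(\phi)$ has the sign of $K-\rr D'(\phi)$ for a constant $K$ depending only on $v$ and $L$ (indeed $K/\rr=D'(\phi)$ iff the tangent point lies on $L$). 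Since $D$ and $D'$ are sinusoids with no common zero, the sets $\{D=0\}$ and $\{K-\rr D'=0\}$ together contain at most four angles, and these split $\Cr{v}$ into at most four arcs on each of which $y(\phi)$ is continuous and strictly monotone; there are only two such arcs when $\Cr{v}$ does not meet $L$, since then $K-\rr D'$ has no zero.

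Given this structure the construction is routine. From $v$ and $L$ I compute the at most four cut angles in $O(1)$ time, shift them by $+\pi/2$ to obtain thresholds among the polar angles $\theta_i$, and binary-search for them in $P(v)$ in $O(\log n)$ time; this splits $P(v)$ into at most four cyclically contiguous sub-runs, each represented implicitly by two indices into the precomputed array together with a direction flag. For a given sub-run the corresponding breakpoints $\tr[L]{v}{v_i}$ are monotone in $y$ by the previous paragraph, so reading the sub-run in the appropriate direction yields a sequence sorted in decreasing $y$-coordinate, of length at most $|V\setminus\{v\}|=O(n)$; this gives (a) and (b). Finally, every $v_i$ whose tangent $\Tr{v}{v_i}$ is not parallel to $L$ — equivalently, every $v_i$ contributing a $\mathcal{T}$-breakpoint — lies in exactly one sub-run, while the at most two omitted indices are exactly those with tangent parallel to $L$, which are not counted as breakpoints; hence the sub-runs' breakpoints form exactly $L\times\mathcal{T}^r(v)$, giving (c).

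I expect the crux to be the middle paragraph: verifying that $\phi\mapsto y(\phi)$ is monotone on each of only $O(1)$ arcs and pinning down the break angles cleanly — in particular treating the angles of tangents parallel to $L$ (where $y(\phi)$ escapes to $\pm\infty$) and the case in which $\Cr{v}$ crosses $L$ (which is exactly when two extra turning angles, the polar angles of $L\cap\Cr{v}$, appear). Once that is in place, the binary searches, the implicit encoding of the runs, and condition (c) are only bookkeeping.
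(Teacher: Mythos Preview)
Your proposal is correct and follows essentially the same approach as the paper: both recognize that $\mathcal{T}^r(v)$ is a pencil of tangent lines to $\Cr{v}$ indexed by the polar angle of $v_i$, split the angle range into at most two pieces when $L$ misses $\Cr{v}$ and at most four when $L$ crosses it (the extra cuts being exactly the polar angles of $L\cap\Cr{v}$), and then binary-search for these cut angles in the precomputed $P(v)$. The only cosmetic difference is that the paper establishes the monotonicity of the breakpoint $y$-coordinate on each arc by direct geometric observation, whereas you derive it analytically via the quotient $N(\phi)/D(\phi)$ and the identity $N'D-ND'=K-\rr D'(\phi)$; the resulting decomposition and algorithm are identical.
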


\Xomit{
\begin{proof}
Without loss of generality, suppose that $v$ is
either strictly to the right of $L$ or on $L$.
Note that each point $v_i \in V \backslash \{v\}$
corresponds to exactly one outer tangent line $\Tr{v}{v_i}$,
thereby exactly one breakpoint $\tr[L]{v}{v_i}$.
Such one-to-one correspondence can be easily done in $O(1)$ time.
Therefore, equivalently we are computing sequences of
points in $V \backslash \{v\}$, instead of breakpoints.

In the following, we consider two cases about
the relative position between $L$ and $\Cr{v}$,
(1) $L$ intersects with $\Cr{v}$ at zero or one point,
(2) $L$ intersects with $\Cr{v}$ at two points.

\begin{description}
    \item [Case (1):]
        Let $\theta_L$ be the angle of the upward direction along $L$.
        See Figure \ref{fig_LCintersect}(a).
        We classify the points in $V \backslash \{v\}$
        by their polar angles with respect to $v$.
        Let $P_1(v)$ denote the sequence of those points with polar angles
        in the interval $(\theta_L, \theta_L + \pi)$ and sorted in CCW order.
        Similarly, let $P_2(v)$ be the sequence of points with polar angles
        in $(\theta_L + \pi, \theta_L)$ and sorted in CCW order.
        Obviously, $P_1(v)$ and $P_2(v)$ together satisfy condition (c).
        (Note that points with polar angles $\theta_L$ and $\theta_L + \pi$ are ignored,
        since they correspond to outer tangent lines parallel to $L$.)

        By general position assumption, we can observe that,
        for any two distinct points $v_i, v_j$ in $P_1(v)$,
        $\tr[L]{v}{v_i}$ is strictly above $\tr[L]{v}{v_j}$
        if and only if $v_i$ precedes $v_j$ in $P_1(v)$.
        Thus, the ordering of points in $P_1(v)$ implicitly describes an ordering
        of their corresponding breakpoints in decreasing y-coordinates.
        Similarly, the ordering in $P_2(v)$ implies an ordering
        of corresponding breakpoints in decreasing y-coordinates.
        It follows that both $P_1(v)$ and $P_2(v)$ satisfy condition (b).

        As for condition (a), both $P_1(v)$ and $P_2(v)$ are of length $O(n)$ by definition.
        Also, since we have pre-computed the sequence $P(v)$
        as all points in $V \backslash \{v\}$ sorted in CCW order,
        $P_1(v)$ and $P_2(v)$ can be implicitly represented
        as concatenations of subsequences of $P(v)$.
        This can be done in $O(\log n)$ time by searching in $P(v)$ the foremost elements
        with polar angles larger than $\theta_L$ and $\theta_L + \pi$, respectively.

\begin{figure}[t]
    \centering
    \begin{minipage}[b]{0.48\textwidth}
        \centering
        \includegraphics[scale=0.75]{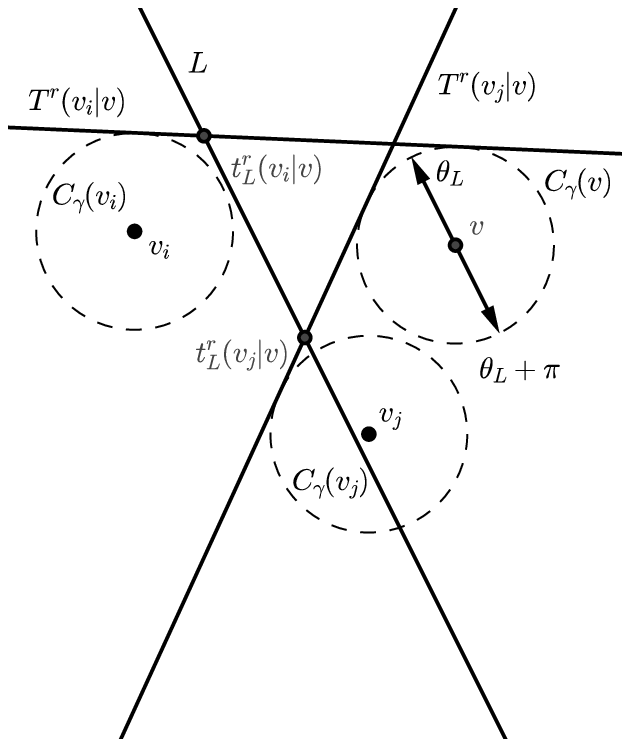}\\
        \scriptsize(a) no intersection
    \end{minipage}
    \hfill
    \begin{minipage}[b]{0.48\textwidth}
        \centering
        \includegraphics[scale=0.75]{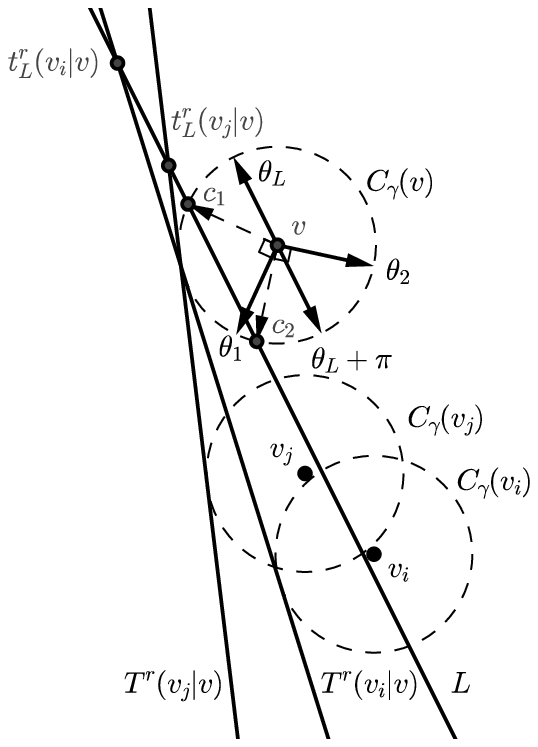}\\
        \scriptsize(b) two intersection points
    \end{minipage}
    \caption{Two subcases about how $\Cr{v}$ intersects $C$.}
    \label{fig_LCintersect}
\end{figure}

    \item [Case (2):]
        Suppose that the two intersection points between $L$ and $\Cr{v}$
        are $c_1$ and $c_2$, where $c_1$ is above $c_2$.
        Let $\theta_1 = \theta'_1 + \pi/2$ and $\theta_2 = \theta'_2 + \pi/2$,
        in which $\theta'_1$ and $\theta'_2$ are respectively
        the polar angles of $c_1$ and $c_2$ with respect to $v$.
        See Figure \ref{fig_LCintersect}(b).
        By assumption, we have that
        $\theta_L < \theta'_1 < \theta_L + \pi/2 < \theta'_2 < \theta_L + \pi$,
        which implies that $\theta_1 \in (\theta_L, \theta_L + \pi)$
        and $\theta_2 \in (\theta_L + \pi, \theta_L)$.

        We divide the points in $V \backslash \{v\}$ into four sequences
        $P_1(v)$, $P_2(v)$, $P_3(v)$, and $P_4(v)$ by their polar angles with respect to $v$.
        $P_1(v)$ consists of points with polar angles in $(\theta_L, \theta_1)$,
        $P_2(v)$ in $[\theta_1, \theta_L + \pi)$, $P_3(v)$ in $(\theta_L + \pi, \theta_2]$,
        and $P_4(v)$ in $(\theta_2, \theta_L)$, all sorted in CCW order.
        It follows that the four sequences satisfy conditions (c).

        Condition (a) and (b) hold for $P_1(v)$ and $P_4(v)$
        from similar discussion as above.
        However, for any two distinct points $v_i, v_j$ in $P_2(v)$,
        we can observe that $\tr[L]{v}{v_i}$ is strictly below $\tr[L]{v}{v_j}$
        if and only if $v_i$ precedes $v_j$ in $P_2(v)$.
        Similarly, the argument holds for $P_3(v)$.
        Thus, what satisfy condition (b) are actually
        the reverse sequences of $P_2(v)$ and $P_3(v)$,
        which can also be obtained in $O(\log n)$ time, satisfying condition (a).
        \qed
\end{description}
\end{proof}
}

By Lemma \ref{lem_buildsequence}.(c), searching in $L \times \mathcal{T}^r(v)$
is equivalent to searching in the $O(1)$ sequences of breakpoints,
which can be computed more efficiently than the obvious way.
Besides, we can also obtain a symmetrical lemma
constructing sequences for $L \times \mathcal{T}^l(v)$.
In the following, we show how to perform a binary search within these sequences.

\begin{lem} \label{lem_localopt}
    With an $O(n^2 \log n)$-time preprocessing, given an arbitrary line $L$,
    a local optimal point $x^*_L$ can be computed in $O(n \log^2 n)$ time.
\end{lem}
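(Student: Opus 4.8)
The plan is to search for a breakpoint that is a local optimal point of $L$, using one wedge computation as the oracle for a binary search, and to make that search take only $O(\log n)$ oracle calls even though the breakpoints are distributed over $O(n)$ separate sorted sequences.

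First I would reuse the stated preprocessing: compute $P(v)$ for every $v\in V$ and all outer tangent lines $\mathcal{T}$, in $O(n^{2}\log n)$ time. Given a line $L$, I would then assemble the breakpoint sequences to be searched: by Lemma~\ref{lem_buildsequence} (and its mirror for $\mathcal{T}^{l}(v)$), $O(1)$ sorted sequences of $\mathcal{T}$-breakpoints per point $v$, hence $O(n)$ in all, together with one sorted sequence of the $O(n)$ $\mathcal{C}$-breakpoints; this takes $O(n\log n)$ time. Each such sequence is an implicitly represented contiguous block of a precomputed array, so its median element --- and, via an $O(1)$ formula, the $y$-coordinate on $L$ of the corresponding breakpoint --- is available in $O(1)$ time.

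The oracle is this: given $x\in L$, compute $Wedge(x)$ in $O(n\log n)$ time by Lemma~\ref{lem_computeWedge}; if the computation certifies $x$ as a strong \CENR{1}{1}, or if $Wedge(x)$ is sideward, report $x$, which is a local optimal point by Lemma~\ref{lem_point_wedge}; if $Wedge(x)$ is downward, discard every breakpoint strictly above $x$; if upward, discard every breakpoint strictly below $x$. The validity of this discarding --- hence of the binary search --- rests on showing that, as $x$ sweeps $L$ from top to bottom, $Wedge(x)$ is downward on an upper part of $L$, sideward on a (possibly empty) middle part, and upward on a lower part. I would derive this from the convexity of $W^{*}$ along $L$ (Lemma~\ref{lem_convexity}) together with Lemma~\ref{lem_point_wedge}, using the fact that a breakpoint-valued local optimum exists (Lemma~\ref{lem_line_candidate}) so that the sideward block, or else the bracketing pair of consecutive breakpoints the search ends with, always contains the answer. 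I expect this correctness argument to be the main obstacle; the rest is fairly mechanical.

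To keep the number of oracle calls to $O(\log n)$, I would not binary-search the $O(n)$ sequences one at a time (that would cost $\Theta(n\log n)$ oracle calls) but use the standard weighted-median scheme for searching a union of sorted lists: in each round, take the median of every still-nonempty sequence, compute in $O(n)$ time the weighted median $m$ of these medians with weights equal to the current sequence lengths, call the oracle once at $m$, and from every sequence whose median lies on the discarded side delete the portion that is now excluded. A weighted-median argument --- with a little extra care in the case where one sequence holds more than a constant fraction of the surviving breakpoints, since then the query at $m$ already removes half of that sequence --- shows that a constant fraction of the at most $O(n^{2})$ remaining breakpoints disappears each round, so after $O(\log n)$ rounds only $O(1)$ breakpoints are left and can be compared directly with $O(1)$ further oracle calls. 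Each round costs $O(n)+O(n\log n)=O(n\log n)$, giving $O(n\log^{2}n)$ per line on top of the $O(n^{2}\log n)$ preprocessing.
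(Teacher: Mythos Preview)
Your proposal is correct and matches the paper's proof almost exactly: build $O(n)$ sorted breakpoint sequences via Lemma~\ref{lem_buildsequence} (plus one sorted sequence for the $\mathcal{C}$-breakpoints), then run a weighted-median--driven parallel binary search with the $O(n\log n)$ wedge oracle, pruning a constant fraction of breakpoints per round for $O(\log n)$ rounds. One simplification: the downward/sideward/upward monotonicity you flag as ``the main obstacle'' is not actually needed for this lemma --- Lemma~\ref{lem_point_wedge} alone justifies each pruning step (any breakpoint with strictly smaller weight loss than the query point $x$ must lie in $Wedge(x)$ and hence survives, and if no such breakpoint exists then $x$ itself is a local-optimal breakpoint and is retained), so the paper dispenses with that argument here.
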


\Xomit{
\begin{proof}
By Lemma \ref{lem_line_candidate}, the searching of $x^*_L$
can be done within $L \times \mathcal{T}$ and $L \times \mathcal{C}(V)$.
$L \times \mathcal{T}$ can be further divided into
$L \times \mathcal{T}^r(v)$ and $L \times \mathcal{T}^l(v)$ for each $v \in V$.
By Lemma \ref{lem_buildsequence}, these $2n$ sets can be replaced
by $O(n)$ sorted sequences of breakpoints on $L$.
Besides, $L \times \mathcal{C}(V)$ consists of no more than $2n$ breakpoints,
which can be computed and arranged into a
sorted sequence in decreasing y-coordinates.
Therefore, we can construct $N_0 = O(n)$ sequences $P_1, P_2, \cdots, P_{N_0}$ of breakpoints,
each of length $O(n)$ and sorted in decreasing y-coordinates.

The searching in the $N_0$ sorted sequences is done by
performing parametric search for parallel binary searches,
introduced in \cite{Cole-87}.
The technique we used here is similar to the algorithm in \cite{Cole-87},
but uses a different weighting scheme.
For each sorted sequence $P_j$, $1 \le j \le N_0$,
we first obtain its middle element $x_j$,
and associate $x_j$ with a weight $m_j$
equal to the number of elements in $P_j$.
Then, we compute the \emph{weighted median} \cite{Reiser-78}
of the $N_0$ middle elements, defined as the element $x$ such that
$\sum \{m_j|x_j\text{ is above }x\} \ge \sum m_j/2$
and $\sum \{m_j|x_j\text{ is below }x\} \ge \sum m_j/2$.
Finally, we apply Lemma \ref{lem_computeWedge} on the point $x$.
If $x$ is a strong \CENR{1}{1}, of course it is local optimal.
If not, Assumption \ref{ass_CA} holds and $Wedge(x)$ can be computed.
If $Wedge(x)$ is sideward,
a local optimal point $x^*_L = x$ is directly found.
Otherwise, $Wedge(x)$ is either upward or downward,
and thus all breakpoints on the opposite half
can be pruned by Lemma \ref{lem_point_wedge}.
The pruning makes a portion of sequences,
that possesses over half of total breakpoints
by the definition of weighted median,
lose at least a quarter of their elements.
Hence, at least one-eighths of breakpoints are pruned.
By repeating the above process,
we can find $x^*_L$ in at most $O(\log n)$ iterations.

The time complexity for finding $x^*_L$ is analyzed as follows.
By Lemma \ref{lem_buildsequence}, constructing sorted sequences for
$L \times \mathcal{T}^r(v)$ and $L \times \mathcal{T}^l(v)$
for all $v \in V$ takes $O(n \log n)$ time.
Computing and sorting $L \times \mathcal{C}(V)$ also takes $O(n \log n)$ time.
There are at most $O(\log n)$ iterations of the pruning process.
At each iteration, the $N_0$ middle elements and
their weighted median $x$ can be obtained in $O(N_0) = O(n)$ time
by the linear-time weighted selection algorithm \cite{Reiser-78}.
Then, the computation of $Wedge(x)$ takes
$O(n \log n)$ time by Lemma \ref{lem_computeWedge}.
Finally, the pruning of those sequences can be done in $O(n)$ time.
In summary, the searching of $x^*_L$ requires
$O(n \log n) + O(\log n) \times O(n \log n) = O(n \log^2 n)$ time.
\qed
\end{proof}
}

We remark that, by Lemma \ref{lem_localopt}, it is easy to obtain
an intermediate result for the \CENR{1}{1} problem on the plane.
By Lemma \ref{lem_TTCCTC}, there exists a \CENR{1}{1} in $\mathcal{T} \times \mathcal{T}$,
$\mathcal{T} \times \mathcal{C}(V)$, and $\mathcal{C}(V) \times \mathcal{C}(V)$.
By applying Lemma \ref{lem_localopt} to the $O(n^2)$ lines in $\mathcal{T}$,
the local optimum among the intersection points in
$\mathcal{T} \times \mathcal{T}$ and $\mathcal{T} \times \mathcal{C}(V)$
can be obtained in $O(n^3 \log^2 n)$ time.
By applying Theorem \ref{thm_find_med} on the $O(n^2)$
intersection points in $\mathcal{C}(V) \times \mathcal{C}(V)$,
the local optimum among them can be obtained in $O(n^3 \log n)$ time.
Thus, we can find a \CENR{1}{1} in $O(n^3 \log^2 n)$ time,
a nearly $O(n^2)$ improvement over the $O(n^5 \log n)$ bound in \cite{Drezner-82}.



\section{$(1|1)_R$-Centroid on the Plane} \label{sec_Alg_plane}

In this section, we study the \CENR{1}{1} problem
and propose an improved algorithm of time complexity $O(n^2 \log n)$.
This algorithm is as efficient as the best-so-far algorithm for the \CEN{1}{1} problem,
but based on a completely different approach.

In Subsection \ref{ssec_Line}, we extend the algorithm of Lemma \ref{lem_localopt}
to develop a procedure allowing us to prune
candidate points with respect to a given vertical line.
Then, in Subsection \ref{ssec_PlaneSearch}, we show how to compute a \CENR{1}{1}
in $O(n^2 \log n)$ time based on this newly-developed pruning procedure.

\subsection{Pruning with Respect to a Vertical Line} \label{ssec_Line}

Let $L$ be an arbitrary vertical line on the plane.
We call the half-plane strictly to the left of $L$
the \emph{left plane} of $L$
and the one strictly to its right the \emph{right plane} of $L$.
A sideward wedge of some point on $L$ is
said to be \emph{rightward} (\emph{leftward})
if it intersects the right (left) plane of $L$.
We can observe that, if there is some point $x \in L$
such that $Wedge(x)$ is rightward,
every point $x'$ on the left plane of $L$ can be pruned,
since $W^*(x') \ge W^*(x)$ by Lemma \ref{lem_point_wedge}.
Similarly, if $Wedge(x)$ is leftward,
points on the right plane of $L$ can be pruned.
Although the power of wedges is not fully exerted in this way,
pruning via vertical lines and sideward wedges is superior than
directly via wedges due to predictable pruning regions.

Therefore, in this subsection we describe
how to design a procedure that enables us to prune
either the left or the right plane of a given vertical line $L$.
As mentioned above, the key point is the searching of sideward wedges on $L$.
It is achieved by carrying out three conditional phases.
In the first phase, we try to find some proper breakpoints with sideward wedges.
If failed, we pick some representative point in the second phase
and check its wedge to determine whether or not sideward wedges exist.
Finally, in case of their nonexistence,
we show that their functional alternative can be computed,
called the \emph{pseudo wedge},
that still allows us to prune the left or right plane of $L$.
In the following, we develop a series of lemmas
to demonstrate the details of the three phases.


\begin{pro} \label{pro_wedge_dir}
   Given a point $x \in L$, for each possible direction of $Wedge(x)$,
   the corresponding $CA(x)$ satisfies the following conditions:
    \begin{description}
        \item [\rm Upward]-- $CA(x) \subseteq (0, \pi)$,
        \item [\rm Downward]-- $CA(x) \subseteq (\pi, 2\pi)$,
        \item [\rm Rightward]-- $0 \in CA(x)$,
        \item [\rm Leftward]-- $\pi \in CA(x)$.
    \end{description}
\end{pro}

\Xomit{
\begin{proof}
When $Wedge(x)$ is upward,
by definition the beginning angle $\theta_b$
and the ending angle $\theta_e$ of $CA(x)$
must satisfy that both half-planes
$H(\F{x}{\theta_b}, \y{x}{\theta_b})$
and $H(\F{x}{\theta_e}, \y{x}{\theta_e})$
include the half-line of $L$ above $x$.
It follows that $0 \le \theta_b, \theta_e \le \pi$,
and thus $CA(x) \subseteq (0, \pi)$.
(Recall that $\theta_b, \theta_e \notin CA(x)$.)
The case that $Wedge(x)$ is downward
can be proved in a symmetric way.

When $Wedge(x)$ is rightward, we can see that
$H(\F{x}{\theta_b}, \y{x}{\theta_b})$ must not
contain the half-line of $L$ above $x$,
and thus $\pi < \theta_b < 2\pi$.
By similar arguments, $0 < \theta_e < \pi$.
Therefore, counterclockwise covering angles from $\theta_b$ to $\theta_e$,
$CA(x)$ must include the angle $0$.
The case that $Wedge(x)$ is leftward
can be symmetrically proved.
\qed
\end{proof}
}

\begin{lem} \label{lem_weight_shift}
    Let $x_1, x_2$ be two points on $L$, where $x_1$ is strictly above $x_2$.
    For any angle $0 \le \theta \le \pi$, $\W{x_1}{\theta} \le \W{x_2}{\theta}$.
    Symmetrically, for $\pi \le \theta \le 2\pi$, $\W{x_2}{\theta} \le \W{x_1}{\theta}$.
\end{lem}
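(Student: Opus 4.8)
The plan is to pass to coordinates and reduce the statement to an elementary nesting of half-planes. Place $L$ along the $y$-axis and write $x_1 = (0, a_1)$ and $x_2 = (0, a_2)$ with $a_1 > a_2$ (since $x_1$ is strictly above $x_2$). By definition $\y{x}{\theta} = x + R(\cos\theta, \sin\theta)$, and $\B{x}{\theta}$ is the line with unit normal vector $(\cos\theta, \sin\theta)$ through the midpoint $x + (R/2)(\cos\theta, \sin\theta)$, so that
\[
    H^-(\B{x}{\theta}, \y{x}{\theta}) = \{\, p \in \Plane : \langle\, p - x,\ (\cos\theta, \sin\theta)\,\rangle > R/2 \,\}.
\]
Recalling that $\W{x}{\theta} = \W{x}{\y{x}{\theta}} = W\big(H^-(\B{x}{\theta}, \y{x}{\theta})\big)$, the whole dependence on $x$ sits in a single scalar: with $x = (0, a)$ the defining inequality reads $\langle\, p,\ (\cos\theta, \sin\theta)\,\rangle > R/2 + a\sin\theta$.

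Now fix an angle $\theta$ with $0 \le \theta \le \pi$, so that $\sin\theta \ge 0$. Since $a_1 > a_2$, we get $R/2 + a_1\sin\theta \ge R/2 + a_2\sin\theta$, hence the inclusion $H^-(\B{x_1}{\theta}, \y{x_1}{\theta}) \subseteq H^-(\B{x_2}{\theta}, \y{x_2}{\theta})$; geometrically, moving the leader's point upward along $L$ translates this half-plane parallel to $L$ and, because its outward normal points (weakly) upward, can only shrink it. Since every weight $w(v)$ is positive, $W$ is monotone under set inclusion, and therefore $\W{x_1}{\theta} \le \W{x_2}{\theta}$. The range $\pi \le \theta \le 2\pi$ is handled symmetrically: there $\sin\theta \le 0$, the inclusion reverses to $H^-(\B{x_2}{\theta}, \y{x_2}{\theta}) \subseteq H^-(\B{x_1}{\theta}, \y{x_1}{\theta})$, and $\W{x_2}{\theta} \le \W{x_1}{\theta}$.

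The only angles needing a separate remark are $\theta \in \{0, \pi\}$, where $\sin\theta = 0$: the two half-planes then coincide (their bounding bisectors are parallel to $L$), so both asserted inequalities hold with equality, consistent with the statement listing these angles in both ranges. I expect no real obstacle here; the argument is essentially one line once the half-plane is written in normal form, and the only thing to watch is the direction of the inclusion (translating toward the open side of a half-plane shrinks it) and hence the orientation of the resulting weight inequality.
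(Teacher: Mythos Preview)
Your proof is correct and follows essentially the same approach as the paper: both arguments establish the inclusion $H^-(\B{x_1}{\theta}, \y{x_1}{\theta}) \subseteq H^-(\B{x_2}{\theta}, \y{x_2}{\theta})$ for $0 \le \theta \le \pi$ and then invoke monotonicity of $W$. The paper states this inclusion as an observation without computation, whereas you verify it explicitly in coordinates and are in fact more careful about the boundary angles $\theta \in \{0,\pi\}$, where the two half-planes coincide.
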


\Xomit{
\begin{proof}
For any angle $0 \le \theta \le \pi$, we can observe that
$H^-(\B{x_1}{\theta}, \y{x_1}{\theta}) \subset H^-(\B{x_2}{\theta}, \y{x_2}{\theta})$,
since $x_1$ is strictly above $x_2$.
It follows that $\W{x_1}{\theta} \le \W{x_2}{\theta}$.
The second claim also holds by symmetric arguments.
\qed
\end{proof}
}

\begin{lem} \label{lem_weight_trans}
    Let $x$ be an arbitrary point on $L$.
    If $Wedge(x)$ is either upward or downward,
    for any point $x' \in L \backslash Wedge(x)$,
    $Wedge(x')$ has the same direction as $Wedge(x)$.
\end{lem}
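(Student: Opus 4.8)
The plan is to handle the case that $Wedge(x)$ is upward and then invoke symmetry for the downward case (reflecting the whole picture across a horizontal line interchanges the two, and reverses the relevant inequalities). So assume $Wedge(x)$ is upward. First I would recall, via Property \ref{pro_wedge_dir}, that this means $CA(x) \subseteq (0,\pi)$, hence $MA(x) \subseteq (0,\pi)$; and that $Wedge(x)\cap L$ is exactly the half-line of $L$ lying on and above $x$, so $L \backslash Wedge(x)$ is the open half-line of $L$ strictly below $x$. Fix an arbitrary $x' \in L\backslash Wedge(x)$, so that $x$ is strictly above $x'$ on $L$. Lemma \ref{lem_point_wedge}, applied to $x$ and the point $x'\notin Wedge(x)$, already gives $W^*(x') \ge W^*(x)$, which I will use below.

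The heart of the argument is to show $MA(x') \subseteq (0,\pi)$. I would argue by contradiction: suppose some $\theta' \in MA(x')$ lies in the closed lower semicircle $\{0\}\cup[\pi,2\pi)$. Since $x$ is strictly above $x'$, Lemma \ref{lem_weight_shift} yields $\W{x'}{\theta'} \le \W{x}{\theta'}$, while trivially $\W{x}{\theta'} \le W^*(x)$; combining these with $W^*(x) \le W^*(x')$ and the equality $\W{x'}{\theta'} = W^*(x')$ coming from $\theta'\in MA(x')$, I get the squeezed chain
\[
    W^*(x') = \W{x'}{\theta'} \le \W{x}{\theta'} \le W^*(x) \le W^*(x').
\]
Hence equality holds throughout, so in particular $\W{x}{\theta'} = W^*(x)$, i.e. $\theta' \in MA(x) \subseteq (0,\pi)$, contradicting $\theta' \in \{0\}\cup[\pi,2\pi)$. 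Thus $MA(x')\subseteq(0,\pi)$.

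Finally I would translate this back into a statement about $CA(x')$ and the direction of $Wedge(x')$. Since $MA(x')$ avoids the entire closed arc $[\pi,2\pi]$, which has angular length $\pi$, the shortest arc covering $MA(x')$ cannot cross $0$ or $\pi$; that is, $CA(x')\subseteq(0,\pi)$, and in particular $\delta(CA(x'))\le\pi$, so $x'$ is not a strong \CENR{1}{1} and $Wedge(x')$ is legitimately defined under Assumption \ref{ass_CA}. By Property \ref{pro_wedge_dir}, $CA(x')\subseteq(0,\pi)$ forces $Wedge(x')$ to be upward, the same direction as $Wedge(x)$. I expect this last packaging step to be the only subtle point: everything else is a one-line combination of Lemmas \ref{lem_point_wedge} and \ref{lem_weight_shift} with Property \ref{pro_wedge_dir}, but one must be a little careful that "$MA(x')$ lies in an open half-circle" really pins the minimal covering interval $CA(x')$ inside $(0,\pi)$ rather than letting it wrap around, and the semicircular gap in the complement of $MA(x')$ is exactly what guarantees this.
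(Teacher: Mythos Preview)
Your proof is correct and follows essentially the same route as the paper: reduce by symmetry to the upward case, use Property~\ref{pro_wedge_dir} to get $MA(x)\subseteq(0,\pi)$, apply Lemma~\ref{lem_weight_shift} to force $MA(x')\subseteq(0,\pi)$, and read off the direction of $Wedge(x')$. The only cosmetic difference is that you invoke Lemma~\ref{lem_point_wedge} to obtain $W^*(x')\ge W^*(x)$, whereas the paper gets this implicitly from Lemma~\ref{lem_weight_shift} applied at any $\theta_0\in MA(x)\subseteq(0,\pi)$; your contradiction packaging and your explicit care with the endpoint angles and with why $CA(x')$ cannot wrap around are, if anything, more detailed than the paper's one-line ``It follows that $MA(x')\subset(0,\pi)$.''
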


\Xomit{
\begin{proof}
By symmetry, we prove that, if $Wedge(x)$ is upward,
$Wedge(x')$ is also upward for every $x' \in L$ strictly below $x$.
By Property \ref{pro_wedge_dir}, the fact that $Wedge(x)$ is upward
means that $CA(x) \subset (0, \pi)$ and thus $MA(x) \subset (0, \pi)$.
Let $x'$ be a point on $L$ strictly below $x$.
By Lemma \ref{lem_weight_shift},
we have that $\W{x'}{\theta} \ge \W{x}{\theta}$ for $0 < \theta < \pi$
and $\W{x'}{\theta} \le \W{x}{\theta}$ for $\pi \le \theta \le 2\pi$.
It follows that $MA(x') \subset (0, \pi)$ and $CA(x') \subset (0, \pi)$,
so $Wedge(x')$ is upward as well.
\qed
\end{proof}
}

Following from this lemma, if there exist two arbitrary points
$x_1$ and $x_2$ on $L$ with their wedges downward and upward, respectively,
we can derive that $x_1$ must be strictly above $x_2$,
and that points with sideward wedges or even strong \CENR{1}{1}s
can locate only between $x_1$ and $x_2$.
Thus, we can find sideward wedges between
some specified downward and upward wedges.
Let $x_D$ be the lowermost breakpoint on $L$ with its wedge downward,
$x_U$ the uppermost breakpoint on $L$ with its wedge upward,
and $G_{DU}$ the open segment $\overline{x_Dx_U} \backslash \{x_D, x_U\}$.
(For ease of discussion, we assume that both $x_D$ and $x_U$ exist on $L$,
and show how to resolve this assumption later by constructing a bounded box.)
Again, $x_D$ is strictly above $x_U$.
Also, we have the following corollary by their definitions.

\begin{cor} \label{cor_breakpoint}
    If there exist breakpoints in the segment $G_{DU}$,
    for any such breakpoint $x$, either $x$ is a strong \CENR{1}{1}
    or $Wedge(x)$ is sideward.
\end{cor}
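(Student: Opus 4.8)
The plan is a short argument by contradiction that exploits nothing more than the \emph{extremal} definitions of $x_D$ and $x_U$. Fix a breakpoint $x \in G_{DU}$. If $x$ happens to be a strong \CENR{1}{1}, the first alternative of the statement holds and there is nothing to prove, so from now on I would assume that $x$ is not a strong \CENR{1}{1}. By Assumption \ref{ass_CA} (equivalently, $\delta(CA(x)) \le \pi$), $Wedge(x)$ is then well defined, and by the classification of wedge directions with respect to $L$ it is exactly one of \emph{upward}, \emph{downward}, or \emph{sideward}. The goal is to exclude the first two.

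Suppose $Wedge(x)$ were downward. Since $x$ lies in the open segment $G_{DU} = \overline{x_Dx_U} \backslash \{x_D, x_U\}$ and $x_D$ is strictly above $x_U$ on $L$, the point $x$ is strictly below $x_D$. But $x$ is a breakpoint on $L$ whose wedge is downward, so $x$ would be a breakpoint with a downward wedge lying strictly below $x_D$, contradicting the choice of $x_D$ as the \emph{lowermost} breakpoint on $L$ with a downward wedge. The symmetric argument — $x$ is strictly above $x_U$, yet $x_U$ is the \emph{uppermost} breakpoint on $L$ with an upward wedge — rules out the possibility that $Wedge(x)$ is upward. Hence $Wedge(x)$ must be sideward, which is the second alternative of the statement.

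I do not expect any genuine obstacle here: the substantive content has already been front-loaded into Lemma \ref{lem_weight_trans} and the remark following it, which is what guarantees that $x_D$ is strictly above $x_U$ (so that $G_{DU}$ is a genuine interval all of whose points are simultaneously below $x_D$ and above $x_U$), and into Assumption \ref{ass_CA}, which is what makes the three-way case split on the direction of $Wedge(x)$ legitimate and exhaustive once $x$ is known not to be a strong \CENR{1}{1}. The only point demanding a little care is precisely this bookkeeping — treating the strong-centroid case separately so that the disjunction in the statement matches the trichotomy of wedge directions.
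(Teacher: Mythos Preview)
Your argument is correct and is exactly what the paper intends: the corollary is stated there as following ``by their definitions,'' and your contradiction via the extremality of $x_D$ (lowermost downward breakpoint) and $x_U$ (uppermost upward breakpoint) is precisely that unpacking. The only content beyond the definitions is the well-definedness and trichotomy of $Wedge(x)$ once $x$ is not a strong \CENR{1}{1}, which you correctly source to Assumption~\ref{ass_CA}.
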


Given $x_D$ and $x_U$, the first phase can thus be done by
checking whether there exist breakpoints in $G_{DU}$
and picking any of them if exist.
Supposing that the picked one is not a strong \CENR{1}{1},
a sideward wedge is found by Corollary \ref{cor_breakpoint}
and can be used for pruning.
Notice that, when there are two or more such breakpoints,
one may question whether their wedges are of the same direction,
as different directions result in inconsistent pruning results.
The following lemma answers the question in the positive.



\begin{lem} \label{lem_same_side}
    Let $x_1$, $x_2$ be two distinct points on $L$,
    where $x_1$ is strictly above $x_2$
    and none of them is a strong \CENR{1}{1}.
    If $Wedge(x_1)$ and $Wedge(x_2)$ are both sideward,
    they are either both rightward or both leftward.
\end{lem}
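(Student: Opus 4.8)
The plan is to argue by contradiction. Suppose the two sideward wedges point in opposite directions; without loss of generality say $Wedge(x_1)$ is leftward and $Wedge(x_2)$ is rightward, with $x_1$ strictly above $x_2$ (the opposite mixed case is the mirror image). First I would record two easy facts. Since $Wedge(x_1)$ is sideward, $Wedge(x_1)\cap L=\{x_1\}$, so $x_2\notin Wedge(x_1)$ and Lemma~\ref{lem_point_wedge} gives $W^*(x_2)\ge W^*(x_1)$; symmetrically $W^*(x_1)\ge W^*(x_2)$, hence $W^*(x_1)=W^*(x_2)=:W^*$. Combining this equality with Lemma~\ref{lem_weight_shift} (applied with $x_1$ above $x_2$): for any $\theta\in MA(x_1)\cap(0,\pi)$ we get $W^*=\W{x_1}{\theta}\le\W{x_2}{\theta}\le W^*$ (the last step since $W^*$ is the maximum of $\W{x_2}{\cdot}$), so $\theta\in MA(x_2)$; likewise any $\theta\in MA(x_2)\cap(\pi,2\pi)$ lies in $MA(x_1)$.

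Next I would pin down the arcs $CA(x_1)$ and $CA(x_2)$. By Property~\ref{pro_wedge_dir}, $\pi\in CA(x_1)$ and $0\in CA(x_2)$; and since neither point is a strong \CENR{1}{1}, Assumption~\ref{ass_CA} gives $\delta(CA(x_i))\le\pi$. Writing $CA(x_i)=(\theta^i_b,\theta^i_e)$: an open arc of length at most $\pi$ containing $\pi$ in its interior cannot also contain $0$, so $CA(x_1)$ does not wrap past $0$, whence $\theta^1_b\in(0,\pi)$, $\theta^1_e\in(\pi,2\pi)$, and $\theta^1_e-\theta^1_b\le\pi$. Dually, an open arc of length at most $\pi$ with $0$ in its interior must wrap past $0$, so $\theta^2_b\in(\pi,2\pi)$, $\theta^2_e\in(0,\pi)$, and $\theta^2_b-\theta^2_e\ge\pi$.

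Now I would exploit minimality of the covering arcs. By minimality, $MA(x_1)$ contains points in $(\theta^1_b,\theta^1_b+\epsilon)$ for every small $\epsilon$, and such points lie in $(0,\pi)$; by the first fact they lie in $MA(x_2)\subseteq CA(x_2)$, whose portion inside $(0,\pi)$ is exactly $(0,\theta^2_e)$, forcing $\theta^1_b<\theta^2_e$. Symmetrically, $MA(x_2)$ contains points in $(\theta^2_b,\theta^2_b+\epsilon)\subseteq(\pi,2\pi)$, which lie in $MA(x_1)\subseteq CA(x_1)$, whose portion inside $(\pi,2\pi)$ is $(\pi,\theta^1_e)$, forcing $\theta^2_b<\theta^1_e$. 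Chaining these, $\theta^1_e-\theta^1_b>\theta^2_b-\theta^1_b>\theta^2_b-\theta^2_e\ge\pi$, contradicting $\theta^1_e-\theta^1_b\le\pi$. The opposite configuration ($Wedge(x_1)$ rightward, $Wedge(x_2)$ leftward) runs through the identical steps with the symmetric halves of Lemma~\ref{lem_weight_shift} and Property~\ref{pro_wedge_dir}, this time yielding $\theta^1_b-\theta^1_e<\theta^1_b-\theta^2_b<\theta^2_e-\theta^2_b\le\pi$ against $\theta^1_b-\theta^1_e\ge\pi$.

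The main obstacle I anticipate is the cyclic bookkeeping: correctly locating each of $\theta^i_b,\theta^i_e$ relative to $0$ and $\pi$, and tracking which ``half'' of each $MA(x_i)$ transfers to the other point under Lemma~\ref{lem_weight_shift}. Once that is set up, the contradiction is clean — it pits the span bound $\delta(CA(x_i))\le\pi$ (exactly where the non-strong hypothesis is used) against the minimality fact that $MA(x_i)$ accumulates at both endpoints of $CA(x_i)$.
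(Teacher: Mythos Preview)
Your proof is correct and uses the same toolkit as the paper: the equality $W^*(x_1)=W^*(x_2)$ via Lemma~\ref{lem_point_wedge}, the angle transfer via Lemma~\ref{lem_weight_shift}, the location of $0$ or $\pi$ in $CA(x_i)$ via Property~\ref{pro_wedge_dir}, and the span bound from Assumption~\ref{ass_CA}. The difference is purely organizational. The paper (taking the mirror case $Wedge(x_1)$ rightward, $Wedge(x_2)$ leftward) argues by a subcase split on whether $CA(x_1)\cap CA(x_2)$ is empty: if empty, a transferred angle lands in both $CA$'s, contradicting emptiness; if nonempty, the intersection must lie entirely in $(0,\pi)$ or entirely in $(\pi,2\pi)$, and a transferred angle is then produced on the opposite side. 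Your version sidesteps this case split by tracking the arc endpoints numerically and chaining the two transfer inequalities ($\theta^1_b<\theta^2_e$ and $\theta^2_b<\theta^1_e$) directly into $\delta(CA(x_1))>\pi$. This is a slightly cleaner packaging of the same idea; nothing is gained or lost mathematically.
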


\Xomit{
\begin{proof}
We prove this lemma by contradiction.
By symmetry, suppose the case that
$Wedge(x_1)$ is rightward and $Wedge(x_2)$ is leftward.
This case can be further divided into two subcases
by whether or not $CA(x_1)$ and $CA(x_2)$ intersect.

Consider first that $CA(x_1)$ does not intersect $CA(x_2)$.
Because $Wedge(x_1)$ is rightward,
$0 \in CA(x_1)$ by Property \ref{pro_wedge_dir}.
Thus, there exists an angle $\theta$, $0 < \theta < \pi$,
such that $\theta \in MA(x_1)$.
Since $x_1$ is strictly above $x_2$, by Lemma \ref{lem_weight_shift}
we have that $W^*(x_1) = \W{x_1}{\theta} \le \W{x_2}{\theta} \le W^*(x_2)$.
Furthermore, since $Wedge(x_2)$ is leftward,
we can see that $x_1 \notin Wedge(x_2)$ and therefore
$W^*(x_1) \ge W^*(x_2)$ by Lemma \ref{lem_point_wedge}.
It follows that $\W{x_2}{\theta} = W^*(x_2)$ and thus $\theta \in MA(x_2)$.
By definition, $MA(x_1) \subseteq CA(x_1)$ and $MA(x_2) \subseteq CA(x_2)$,
which implies that $CA(x_1)$ and $CA(x_2)$ intersect at $\theta$,
contradicting the subcase assumption.

When $CA(x_1)$ intersects $CA(x_2)$,
their intersection must be completely included in either $(0, \pi)$ or $(\pi, 2\pi)$
due to Assumption \ref{ass_CA}.
By symmetry, we assume the latter subcase.
Using similar arguments as above, we can find an angle $\theta'$,
where $0 < \theta' < \pi$, such that $\theta' \in MA(x_1)$ and $\theta' \in MA(x_2)$.
This is a contradiction, since $\theta' \notin (\pi, 2\pi)$.

Since both subcases do not hold, the lemma is proved.
\qed
\end{proof}
}

The second phase deals with the case that
no breakpoint exists between $x_D$ and $x_U$
by determining the wedge direction of an arbitrary inner point in $G_{DU}$.
We begin with several auxiliary lemmas.



\begin{lem} \label{lem_MA_change}
    Let $x_1, x_2$ be two distinct points on $L$
    such that $W^*(x_1) = W^*(x_2)$ and $x_1$ is strictly above $x_2$.
    There exists at least one breakpoint in the segment
    \begin{enumerate} [(a)]
        \item $\overline{x_1x_2} \backslash \{x_2\}$,
            if $MA(x_2)$ intersects $(0, \pi)$ but $MA(x_1)$ does not,
        \item $\overline{x_1x_2} \backslash \{x_1\}$,
            if $MA(x_1)$ intersects $(\pi, 2\pi)$ but $MA(x_2)$ does not.
    \end{enumerate}
\end{lem}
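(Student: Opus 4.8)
The plan is to prove part~(a) by contradiction; part~(b) then follows by reflecting the configuration in a horizontal line and replacing every angle $\theta$ by $2\pi-\theta$, which interchanges $x_1$ and $x_2$, swaps the roles of $(0,\pi)$ and $(\pi,2\pi)$, and preserves the set of breakpoints. So assume the hypotheses of~(a) and, for contradiction, that $\overline{x_1x_2}\setminus\{x_2\}$ contains no breakpoint. Fix $\theta^*\in MA(x_2)\cap(0,\pi)$ and set $S:=V(\y{x_2}{\theta^*}|x_2)$, so that $W(S)=\W{x_2}{\theta^*}=W^*(x_2)=W^*(x_1)$. For a point $x\in L$ I would introduce the arc
$A(x):=\{\theta\mid S\subseteq H^-(\B{x}{\theta},\y{x}{\theta})\}$;
writing $u_\theta$ for the unit vector of angle $\theta$, this equals $\{\theta\mid \min_{p\in CH(S)}\langle p-x,u_\theta\rangle>\rr\}$. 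I record three facts: (i) $A(x)$ is empty exactly when $x\in CH(\mathcal{C}(S))$, and otherwise is an open arc of angular span $<\pi$ that varies continuously with $x$; (ii) if $\theta\in A(x_1)$ then $\W{x_1}{\theta}\ge W(S)=W^*(x_1)$, so $\theta\in MA(x_1)$, i.e.\ $A(x_1)\subseteq MA(x_1)$; (iii) for $\theta\in\{0,\pi\}$ the half-plane $H^-(\B{x}{\theta},\y{x}{\theta})$ is a fixed vertical half-plane depending only on the $x$-coordinate of $L$, so the statement ``$\theta\in A(x)$'' has the same truth value at every $x\in L$.

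Next I would use the no-breakpoint assumption to pin down $A$ along the segment. Since $\theta^*\in A(x_2)$ we have $A(x_2)\neq\emptyset$, hence $x_2\notin CH(\mathcal{C}(S))$. The boundary $\partial\,CH(\mathcal{C}(S))$ consists of arcs of circles in $\mathcal{C}(V)$ and of segments of outer tangent lines in $\mathcal{T}$, so it is contained in $\bigcup\mathcal{T}\cup\bigcup\mathcal{C}(V)$; thus the no-breakpoint assumption forces $\overline{x_1x_2}$ to avoid $\partial\,CH(\mathcal{C}(S))$, and, being connected and meeting the exterior at $x_2$, the whole segment stays outside $CH(\mathcal{C}(S))$. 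By~(i), $A(x)\neq\emptyset$ everywhere on $\overline{x_1x_2}$; in particular $A(x_1)\neq\emptyset$, and with~(ii) and $MA(x_1)\cap(0,\pi)=\emptyset$ this gives $\emptyset\neq A(x_1)\subseteq MA(x_1)\subseteq(\pi,2\pi)$. Now $A(x_2)$ is an arc meeting $(0,\pi)$, so either it contains $0$ or $\pi$, or $A(x_2)\subseteq(0,\pi)$ (it cannot contain both $0$ and $\pi$, its span being $<\pi$). If $0\in A(x_2)$ or $\pi\in A(x_2)$, then by~(iii) the same holds for $A(x_1)$, so by~(ii) $0\in MA(x_1)$ or $\pi\in MA(x_1)$; as $MA(x_1)$ is a union of open intervals this forces $MA(x_1)\cap(0,\pi)\neq\emptyset$, a contradiction.

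In the remaining case $A(x_2)\subseteq(0,\pi)$, by~(iii) neither $0$ nor $\pi$ lies in $A(x)$ for any $x\in\overline{x_1x_2}$, so by~(i) the map $x\mapsto A(x)$ along the segment is a continuous family of nonempty open arcs inside the two-component set $\{u_\theta:\theta\in(0,\pi)\}\cup\{u_\theta:\theta\in(\pi,2\pi)\}$; a continuous selection from it (e.g.\ the arc's midpoint) cannot change component, so $A(x_1)$ lies in the same component as $A(x_2)$, i.e.\ $A(x_1)\subseteq(0,\pi)$, contradicting $\emptyset\neq A(x_1)\subseteq(\pi,2\pi)$. This contradiction proves~(a). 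The step I expect to be most delicate is fact~(i): that $A(x)=\emptyset$ iff $\mathrm{dist}(x,CH(S))\le\rr$ iff $x\in CH(\mathcal{C}(S))$, and that each defining half-plane cuts off a sub-arc of span $<\pi$, so that the intersection $A(x)$ is again an arc depending continuously on $x$; one also has to dispose of the degenerate possibility that a tangent segment of $\partial\,CH(\mathcal{C}(S))$ lies along $L$ itself, where one checks that its endpoints are still $\mathcal{C}$-breakpoints.
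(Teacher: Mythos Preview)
Your argument is correct and shares the paper's starting point (the set $S$ determined by an angle $\theta^*\in MA(x_2)\cap(0,\pi)$), but the paper proceeds directly rather than by contradiction and continuity. It shows that $\partial CH(\mathcal{C}(S))$ must meet $\overline{x_1x_2}\setminus\{x_2\}$ explicitly: first, $CH(\mathcal{C}(S))$ meets $L$, since otherwise a separating line $\F{x_1}{\theta'}$ with $\theta'\in(0,\pi)$ exists and forces $\theta'\in MA(x_1)$; then either $x_1$ lies inside the hull, so $\overline{x_1x_2}$ crosses its boundary, or $x_1$ is outside and some $\theta''\in MA(x_1)$ separates, necessarily $\theta''\in(\pi,2\pi)$, whence the hull is sandwiched on $L$ between the lines $\F{x_2}{\theta^*}$ (``above'') and $\F{x_1}{\theta''}$ (``below''), so its intersection with $L$ lies strictly between $x_1$ and $x_2$. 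Your route replaces this geometric sandwich with tracking the separating-direction arc $A(x)$ along the segment and a connectedness argument; this makes the passage through $\theta\in\{0,\pi\}$ completely transparent via your fact~(iii), but it costs you the continuity machinery and the delicate verification of fact~(i), whereas the paper's proof is shorter, uses only the hull itself, and actually exhibits the breakpoint.
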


\Xomit{
\begin{proof}
By symmetry, we only show the correctness of condition (a).
From its assumption, there exists an angle $\theta$, where
$0 < \theta < \pi$, such that $\theta \in MA(x_2)$.
Let $S = V \bigcap H^-(\B{x_2}{\theta}, \y{x_2}{\theta})$.
By definition, we have that $W(S) = \W{x_2}{\theta} = W^*(x_2) = W^*(x_1)$
and $CH(\mathcal{C}(S)) \subset H^-(\F{x_2}{\theta}, \y{x_2}{\theta})$,
which implies that $CH(\mathcal{C}(S))$ is strictly above $\F{x_2}{\theta}$.
(See Figure \ref{fig_CH_intersect_L}.)

\begin{figure}[t]
    \centering
    \includegraphics[scale=0.75]{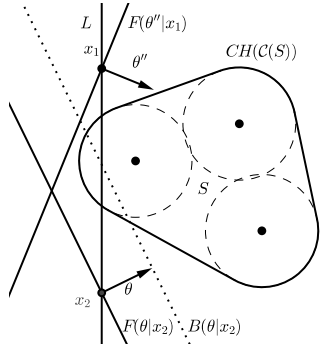}
    \caption{$CH(\mathcal{C}(S))$ intersects $L$ between $x_1$ and $x_2$.}
    \label{fig_CH_intersect_L}
\end{figure}

We first claim that $CH(\mathcal{C}(S))$ intersects $L$.
If not, there must exist an angle $\theta'$,
where $0 < \theta' < \pi$, such that
$CH(\mathcal{C}(S)) \subset H^-(\F{x_1}{\theta'}, \y{x_1}{\theta'})$,
that is, $S \subset H^-(\B{x_1}{\theta'}, \y{x_1}{\theta'})$.
By definition, $W^*(x_1) \ge \W{x_1}{\theta'} \ge W(S)$.
Since $W^*(x_1) = W(S)$, $\W{x_1}{\theta'} = W^*(x_1)$
and thus $\theta' \in MA(x_1)$,
which contradicts the condition that
$MA(x_1)$ does not intersect $(0, \pi)$.
Thus, the claim holds.

When $CH(\mathcal{C}(S))$ intersects $L$,
$x_1$ locates either inside or outside $CH(\mathcal{C}(S))$.
Since $x_2$ locates outside $CH(\mathcal{C}(S))$,
in the former case the boundary of $CH(\mathcal{C}(S))$
intersects $\overline{x_1x_2} \backslash \{x_2\}$
and forms a breakpoint, thereby proves condition (a).
On the other hand, if $x_1$ is outside $CH(\mathcal{C}(S))$,
again there exists an angle $\theta''$ such that
$CH(\mathcal{C}(S)) \subset H^-(\F{x_1}{\theta''}, \y{x_1}{\theta''})$.
By similar arguments, we can show that $\theta'' \in MA(x_1)$.
By assumption, $\theta''$ must belong to $(\pi, 2\pi)$,
which implies that $CH(\mathcal{C}(S))$ is strictly below $\F{x_1}{\theta''}$.
Since $CH(\mathcal{C}(S))$ is strictly above $\F{x_2}{\theta}$ as mentioned,
any intersection point between $CH(\mathcal{C}(S))$ and $L$
should be inner to $\overline{x_1x_2}$.
Therefore, the lemma holds.
\qed
\end{proof}
}

\begin{lem} \label{lem_same_loss}
    Let $G$ be a line segment connecting two consecutive breakpoints on $L$.
    For any two distinct points $x_1, x_2$ inner to $G$, $W^*(x_1) = W^*(x_2)$.
\end{lem}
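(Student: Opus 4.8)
The plan is to derive the statement directly from Lemma~\ref{lem_neq_weight}, essentially as its contrapositive applied to a segment with no interior breakpoints. Suppose for contradiction that $W^*(x_1) \ne W^*(x_2)$ for two distinct points $x_1, x_2$ inner to $G$. Without loss of generality assume $W^*(x_1) > W^*(x_2)$. Then Lemma~\ref{lem_neq_weight} guarantees a breakpoint $b$ lying on the segment $\overline{x_1x_2} \backslash \{x_1\}$.

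Next I would argue that the existence of $b$ contradicts the choice of $G$. Since $x_1$ and $x_2$ are both inner to $G$, the whole segment $\overline{x_1x_2}$ is contained in $G$, and hence $b \in G$. We know $b \ne x_1$; and $b$ cannot equal $x_2$ either, because $x_2$ is inner to $G$ and therefore is not a breakpoint (the breakpoints delimiting $G$ are its two endpoints). Thus $b$ lies strictly between the two endpoints of $G$, i.e. $G$ contains a breakpoint in its interior, which contradicts the assumption that its endpoints are \emph{consecutive} breakpoints on $L$. Hence $W^*(x_1) = W^*(x_2)$.

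I do not expect any real obstacle here; the only point requiring a little care is the endpoint bookkeeping — namely observing that the breakpoint produced by Lemma~\ref{lem_neq_weight} lies on $\overline{x_1x_2}\backslash\{x_1\}$, a set that includes $x_2$, so one must separately rule out $b = x_2$ using the fact that $x_2$ is an interior point of $G$. Everything else is an immediate consequence of the definition of "consecutive breakpoints" and the inclusion $\overline{x_1x_2} \subseteq G$.
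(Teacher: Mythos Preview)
Your proposal is correct and follows essentially the same approach as the paper: assume $W^*(x_1)\neq W^*(x_2)$, invoke Lemma~\ref{lem_neq_weight} to produce a breakpoint in $\overline{x_1x_2}$, and note that this contradicts the consecutiveness of $G$'s endpoints. Your version is in fact slightly more careful than the paper's, which does not explicitly rule out the possibility $b=x_2$; your observation that $x_2$ is inner to $G$ and hence not a breakpoint fills that small gap.
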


\Xomit{
\begin{proof}
Suppose to the contrary that $W^*(x_1) \neq W^*(x_2)$.
By Lemma \ref{lem_neq_weight},
there exists at least one breakpoint in $\overline{x_1x_2}$,
which contradicts the definition of $G$.
Thus, the lemma holds.
\qed
\end{proof}
}

\begin{lem} \label{lem_no_breakpoint}
    When there is no breakpoint between $x_D$ and $x_U$,
    any two distinct points $x_1, x_2$ in $G_{DU}$
    have the same wedge direction,
    if they are not strong \CENR{1}{1}s.
\end{lem}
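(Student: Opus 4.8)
The plan is to prove that, inside $G_{DU}$, the property ``$Wedge(x)$ is downward'' is inherited by \emph{every} point of $G_{DU}$ as soon as one point has it, that the symmetric statement holds for ``upward'', and that in the only remaining situation — all wedges in $G_{DU}$ are sideward — Lemma~\ref{lem_same_side} forces a common direction. Before that I would record two facts that will be used repeatedly. Since there is no breakpoint between $x_D$ and $x_U$, the segment $\overline{x_Dx_U}$ joins two consecutive breakpoints, so by Lemma~\ref{lem_same_loss} all points of $G_{DU}$ share one weight-loss value $W_0$; and, since $G_{DU}$ is a segment (hence convex), for any $p,q\in G_{DU}$ the whole segment $\overline{pq}$ lies in $G_{DU}$ and therefore contains no breakpoint.

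The heart of the argument is the claim: if some $p\in G_{DU}$ has $Wedge(p)$ downward, then $Wedge(q)$ is downward for \emph{every} $q\in G_{DU}$. For $q$ strictly above $p$, we have $q\in L\setminus Wedge(p)$ (a downward wedge meets $L$ only at and below $p$), so Lemma~\ref{lem_weight_trans} gives that $Wedge(q)$ is downward. For $q$ strictly below $p$: Property~\ref{pro_wedge_dir} together with $Wedge(p)$ downward gives $MA(p)\subseteq CA(p)\subseteq(\pi,2\pi)$, so $MA(p)$ is disjoint from $(0,\pi)$, and also $W^*(p)=W^*(q)=W_0$. If $MA(q)$ met $(0,\pi)$, then Lemma~\ref{lem_MA_change}(a), applied with $x_1=p$ above $x_2=q$, would produce a breakpoint on $\overline{pq}\setminus\{q\}\subseteq G_{DU}$, contradicting that $G_{DU}$ is breakpoint-free. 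Hence $MA(q)\cap(0,\pi)=\emptyset$; since $MA(q)$ is a nonempty union of open arcs it cannot contain $0$ or $\pi$ either (a neighbourhood of either would intrude into $(0,\pi)$), so $MA(q)\subseteq(\pi,2\pi)$ and therefore $CA(q)\subseteq(\pi,2\pi)$. Because the four alternatives of Property~\ref{pro_wedge_dir} are mutually exclusive and exhaustive for a point of $L$ that is not a strong \CENR{1}{1}, and $\delta(CA(q))\le\pi$ rules out the strong case, this pins $Wedge(q)$ down as downward. Interchanging ``above''/``below'' and $(0,\pi)$/$(\pi,2\pi)$ and invoking Lemma~\ref{lem_MA_change}(b) in place of (a) yields the symmetric claim for upward wedges.

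It then remains to assemble the cases. Let $x_1\ne x_2$ be points of $G_{DU}$, neither a strong \CENR{1}{1}, say $x_1$ strictly above $x_2$. If $Wedge(x_1)$ is downward or upward, the claim immediately makes $Wedge(x_2)$ have the same direction. If $Wedge(x_1)$ is sideward, then $Wedge(x_2)$ can be neither downward nor upward, since either would, by the claim applied from $x_2$, force the same direction on $Wedge(x_1)$; hence $Wedge(x_2)$ is sideward, and Lemma~\ref{lem_same_side} makes $Wedge(x_1)$ and $Wedge(x_2)$ both rightward or both leftward. In every case the two wedges point the same way.

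The step I expect to be the real obstacle is the ``downward spreads downward'' half of the claim: Lemma~\ref{lem_weight_trans} only propagates a direction \emph{upward} along $L$, so the downward spread must be extracted from Lemma~\ref{lem_MA_change} together with the weight-loss equality of Lemma~\ref{lem_same_loss}, and one must be careful to upgrade the weak conclusion ``$MA(q)$ misses $(0,\pi)$'' to the clean containment ``$CA(q)\subseteq(\pi,2\pi)$'', which relies on the openness of $MA(q)$ and on the exhaustive, mutually exclusive nature of the four cases in Property~\ref{pro_wedge_dir}.
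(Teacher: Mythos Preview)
Your argument is correct and uses the same toolkit as the paper's proof---Lemma~\ref{lem_same_loss} for the common weight loss on $G_{DU}$, Property~\ref{pro_wedge_dir} to translate wedge directions into containments of $CA(\cdot)$, Lemma~\ref{lem_MA_change} to manufacture a forbidden breakpoint, and Lemmas~\ref{lem_weight_trans} and~\ref{lem_same_side} to handle the remaining directional combinations. The only difference is organizational: the paper argues directly by contradiction, using Lemmas~\ref{lem_weight_trans} and~\ref{lem_same_side} to reduce ``$Wedge(x_1)\ne Wedge(x_2)$'' (with $x_1$ above $x_2$) to the two cases (1) $Wedge(x_1)$ downward versus $Wedge(x_2)$ sideward/upward and (2) $Wedge(x_1)$ sideward versus $Wedge(x_2)$ upward, and then dispatches each case with the appropriate half of Lemma~\ref{lem_MA_change}; you instead first prove a ``spreading'' claim (one downward/upward wedge forces all of $G_{DU}$ to share it) and then assemble. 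Your route is a bit longer but has the merit of making explicit why $q$ cannot be a strong \CENR{1}{1} in the downward-spread step, via the openness of $MA(q)$ and the resulting containment $CA(q)\subseteq(\pi,2\pi)$.
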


\Xomit{
\begin{proof}
Suppose by contradiction that the directions
of their wedges are different.
By Lemmas \ref{lem_weight_trans} and \ref{lem_same_side},
there are only two possible cases. 
\begin{enumerate} [(1)]
    \item $Wedge(x_1)$ is downward,
    and $Wedge(x_2)$ is either sideward or upward.
    \item $Wedge(x_1)$ is sideward,
    and $Wedge(x_2)$ is upward.
\end{enumerate}
In the following, we show that both cases do not hold.

\begin{description}
    \item [Case (1):]
        Because $Wedge(x_1)$ is downward, we have that
        $CA(x_1) \subseteq (\pi, 2\pi)$ by Property \ref{pro_wedge_dir}
        and thus $MA(x_1)$ does not intersect $(0, \pi)$.
        On the other hand, whether $Wedge(x_2)$ is sideward or upward,
        we can see that $CA(x_2)$ and $MA(x_2)$ intersect $(0, \pi)$
        by again Property \ref{pro_wedge_dir}.
        Since $W^*(x_1) = W^*(x_2)$ by Lemma \ref{lem_same_loss}, 
        the status of the two points satisfies
        the condition (a) of Lemma \ref{lem_MA_change},
        so at least one breakpoint exists between $x_1$ and $x_2$.
        By definitions of $x_1$ and $x_2$, this breakpoint is inner to $G_{DU}$,
        thereby contradicts the assumption.
        Therefore, Case (1) does not hold.
    \item [Case (2):]
        The proof of Case (2) is symmetric to that of Case (1).
        The condition (b) of Lemma \ref{lem_MA_change} can be applied similarly
        to show the existence of at least one breakpoint between $x_1$ and $x_2$,
        again a contradiction.
\end{description}
Combining the above discussions,
we prove that the wedges of $x_1$ and $x_2$ are of the same direction,
thereby completes the proof of this lemma.
\qed
\end{proof}
}

This lemma enables us to pick an arbitrary point in $G_{DU}$,
e.g., the bisector point $x_B$ of $x_D$ and $x_U$,
as the representative of all inner points in $G_{DU}$.
If $x_B$ is not a strong \CENR{1}{1} and $Wedge(x_B)$ is sideward,
the second phase finishes with a sideward wedge found.
Otherwise, if $Wedge(x_B)$ is downward or upward,
we can derive the following and have to invoke the third phase.

\begin{lem} \label{lem_no_strong}
    If there is no breakpoint between $x_D$ and $x_U$
    and $Wedge(x_B)$ is not sideward,
    there exist neither strong \CENR{1}{1}s
    nor points with sideward wedges on $L$.
\end{lem}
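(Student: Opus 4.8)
The plan is to fix the possible wedge directions along $L$ region by region, combining the monotonicity-type facts (Lemmas \ref{lem_weight_trans} and \ref{lem_no_breakpoint}) with the breakpoint-existence criterion of Lemma \ref{lem_MA_change}, and to split on the two possible directions of $Wedge(x_B)$, which are symmetric; I treat the upward case. Since $Wedge(x_B)$ is not sideward, $x_B$ is not a strong \CENR{1}{1}. First I would classify $L$ outside the closed segment $\overline{x_Dx_B}$: applying Lemma \ref{lem_weight_trans} to $x_D$, whose wedge is downward by definition, shows that every point of $L$ strictly above $x_D$ has a downward wedge; applying the same lemma to $x_B$ shows that every point of $L$ strictly below $x_B$ has an upward wedge. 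In particular none of these points is a strong \CENR{1}{1} nor has a sideward wedge. What remains is $\overline{x_Dx_B}$, whose endpoints are already understood ($Wedge(x_D)$ downward, $Wedge(x_B)$ upward) and whose interior lies in $G_{DU}$, since $x_U$ is strictly below $x_B$, which is strictly below $x_D$.

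The \emph{core step} is to show that no interior point of $\overline{x_Dx_B}$ is a strong \CENR{1}{1}. Suppose some such point $x$ were one. Then $\delta(CA(x)) > \pi$; since $MA(x)$ is a union of open arcs, it cannot be contained in $(0,\pi)$ (otherwise $CA(x)\subseteq(0,\pi)$ and $\delta(CA(x))\le\pi$), so $MA(x)$ meets $(\pi,2\pi)$. On the other hand, $Wedge(x_B)$ being upward gives $CA(x_B)\subseteq(0,\pi)$ by Property \ref{pro_wedge_dir}, hence $MA(x_B)\subseteq(0,\pi)$ is disjoint from $(\pi,2\pi)$. Since $x$ and $x_B$ are interior points of the segment $\overline{x_Dx_U}$ joining the consecutive breakpoints $x_D$ and $x_U$, Lemma \ref{lem_same_loss} gives $W^*(x)=W^*(x_B)$. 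Now $x$ is strictly above $x_B$, $W^*(x)=W^*(x_B)$, $MA(x)$ meets $(\pi,2\pi)$ and $MA(x_B)$ does not, so Lemma \ref{lem_MA_change}(b) yields a breakpoint on $\overline{xx_B}\backslash\{x\}$. But that segment lies in $G_{DU}$, which by hypothesis contains no breakpoint --- a contradiction. Hence the interior of $\overline{x_Dx_B}$ contains no strong \CENR{1}{1}, and combined with the preceding paragraph, $L$ contains no strong \CENR{1}{1} at all.

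It then remains to rule out sideward wedges. The interior of $\overline{x_Dx_B}$ lies in $G_{DU}$ and, as just shown, contains no strong \CENR{1}{1}; hence Lemma \ref{lem_no_breakpoint} forces every point there to share the wedge direction of $x_B$, namely upward. Together with the classification of the rest of $L$, every point of $L$ then has an upward or a downward wedge, so none has a sideward wedge. This finishes the upward case; the downward case is symmetric, invoking Lemma \ref{lem_weight_trans} at $x_U$ and $x_B$, the inclusion $MA(x_B)\subseteq(\pi,2\pi)$ from Property \ref{pro_wedge_dir}, and part (a) of Lemma \ref{lem_MA_change}.

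I expect the core step to be the only real obstacle. The tempting approach is to contradict a strong \CENR{1}{1} $x$ in the interior of $\overline{x_Dx_B}$ by comparing it directly with the breakpoint $x_D$ above it, but this fails because $W^*$ may jump up at $x_D$, so $MA(x_D)$ cannot be controlled. The fix is to compare $x$ instead with the safely interior point $x_B$: there $W^*$ is provably equal to $W^*(x)$ by Lemma \ref{lem_same_loss}, and $MA(x_B)$ is provably confined to $(0,\pi)$ by Property \ref{pro_wedge_dir} --- exactly the hypothesis pattern Lemma \ref{lem_MA_change}(b) needs to manufacture a forbidden breakpoint inside $G_{DU}$.
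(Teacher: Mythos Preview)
Your proof is correct and follows essentially the same approach as the paper's. You treat the case where $Wedge(x_B)$ is upward while the paper treats the symmetric downward case; in both, one first disposes of the easy portion of $L$ via Lemma~\ref{lem_weight_trans} applied at $x_B$ (and at $x_D$ or $x_U$), then rules out strong \CENR{1}{1}s in the remaining subsegment by combining Lemma~\ref{lem_same_loss} with Lemma~\ref{lem_MA_change} to manufacture a forbidden breakpoint, and finally invokes Lemma~\ref{lem_no_breakpoint} to exclude sideward wedges there.
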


\Xomit{
\begin{proof}
By Lemma \ref{lem_weight_trans}, this lemma holds for points not in $G_{DU}$.
Without loss of generality, suppose that $Wedge(x_B)$ is downward.
For all points in $G_{DU}$ above $x_B$,
the lemma holds by again Lemma \ref{lem_weight_trans}.

Consider an arbitrary point $x \in \overline{x_Bx_U} \backslash \{x_B,x_U\}$.
We first show that $x$ is not a strong \CENR{1}{1}.
Suppose to the contrary that $x$ really is.
By definition, we have that $\delta(CA(x)) > \pi$,
and thus $CA(x)$ and $MA(x)$ intersect $(0, \pi)$.
On the other hand, $CA(x_B)$ and $MA(x_B)$ do not intersect $(0, \pi)$
due to downward $Wedge(x_B)$ and Property \ref{pro_wedge_dir}.
Since $W^*(x_B) = W^*(x)$ by Lemma \ref{lem_same_loss},
applying the condition (a) of Lemma \ref{lem_MA_change} to $x_B$ and $x$
shows that at least one breakpoint exists between them,
which contradicts the no-breakpoint assumption.
Now that $x$ is not a strong \CENR{1}{1}, it must have a downward wedge,
as $x_B$ does by Lemma \ref{lem_no_breakpoint}.
Therefore, the lemma holds for all points on $L$.
\qed
\end{proof}
}

When $L$ satisfies Lemma \ref{lem_no_strong},
it consists of only points with downward or upward wedges,
and is said to be \emph{non-leaning}.
Obviously, our pruning strategy via sideward wedges
could not apply to such non-leaning lines.
The third phase overcomes this obstacle
by constructing a functional alternative of sideward wedges,
called the pseudo wedge, on either $x_D$ or $x_U$,
so that pruning with respect to $L$ is still achievable.
Again, we start with auxiliary lemmas.

%

\begin{lem} \label{lem_no_sideward}
    If $L$ is non-leaning, the following statements hold:
    \begin{enumerate} [(a)]
        \item $W^*(x_D) \neq W^*(x_U)$,
        \item $W^*(x) = \max\{W^*(x_D), W^*(x_U)\}$
            for all points $x \in G_{DU}$.
    \end{enumerate}
\end{lem}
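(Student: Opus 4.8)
The plan is to reduce everything to the single structural fact built into the definition of \emph{non-leaning}: $L$ carries no breakpoint strictly between $x_D$ and $x_U$, so these are consecutive breakpoints and, by Lemma~\ref{lem_same_loss}, there is a common value $W_G := W^*(x)$ for every $x$ in the open segment $G_{DU}$. Thus statement (b) amounts to $W_G = \max\{W^*(x_D), W^*(x_U)\}$, and (a) to $W^*(x_D) \neq W^*(x_U)$. Moreover, since $L$ is non-leaning, Lemma~\ref{lem_no_strong} rules out strong \CENR{1}{1}s on $L$, so every point of $L$ has a wedge; $Wedge(x_D)$ is downward and $Wedge(x_U)$ is upward by the definitions of $x_D$ and $x_U$; and by Lemma~\ref{lem_no_breakpoint} all points of $G_{DU}$ share a common wedge direction, which (again by Lemma~\ref{lem_no_strong}) is not sideward. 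By a top--bottom symmetry I may assume this common direction is downward, so that Property~\ref{pro_wedge_dir} yields $MA(x_D) \subseteq (\pi, 2\pi)$ and $MA(p) \subseteq (\pi, 2\pi)$ for every $p \in G_{DU}$, whereas $MA(x_U) \subseteq (0, \pi)$.

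First I would show $W^*(x_D) = W_G$. Fix any $p \in G_{DU}$. Since $Wedge(p)$ is downward and $x_D$ is strictly above $p$, we have $x_D \notin Wedge(p)$, hence $W^*(x_D) \ge W^*(p) = W_G$ by Lemma~\ref{lem_point_wedge}. Conversely, were $W^*(x_D) > W_G = W^*(p)$, then Lemma~\ref{lem_neq_weight} applied to the ordered pair $(x_D, p)$ would produce a breakpoint on $\overline{x_Dp} \backslash \{x_D\}$; but that set lies strictly between $x_D$ and $x_U$ (because $p$ does), contradicting the absence of breakpoints in $G_{DU}$. The very same argument with the ordered pair $(x_U, p)$ shows $W^*(x_U) \le W_G$, since a strict inequality $W^*(x_U) > W_G = W^*(p)$ would force a breakpoint on $\overline{x_Up} \backslash \{x_U\} \subseteq G_{DU}$. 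Combining these, $\max\{W^*(x_D), W^*(x_U)\} = W_G = W^*(x)$ for every $x \in G_{DU}$, which is statement (b).

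For (a) it remains to exclude the equality $W^*(x_U) = W^*(x_D)$, i.e.\ (using $W^*(x_D) = W_G$) the case $W^*(x_U) = W_G$. Here I would invoke Lemma~\ref{lem_MA_change}: choosing any $p \in G_{DU}$ we have $W^*(p) = W_G = W^*(x_U)$ with $p$ strictly above $x_U$, and since $MA(x_U)$ meets $(0, \pi)$ while $MA(p)$ does not, part~(a) of that lemma supplies a breakpoint on $\overline{px_U} \backslash \{x_U\} \subseteq G_{DU}$ --- the desired contradiction. The symmetric situation, in which the common wedge direction on $G_{DU}$ is upward, is obtained by interchanging the roles of $x_D$ and $x_U$, of $(0, \pi)$ and $(\pi, 2\pi)$, and of parts~(a) and~(b) of Lemma~\ref{lem_MA_change}.

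The point requiring the most care is ensuring that the breakpoints returned by Lemmas~\ref{lem_neq_weight} and~\ref{lem_MA_change} land strictly inside $G_{DU}$ rather than coinciding with $x_D$ or $x_U$; this is exactly why each invocation is set up with an interior point $p$ of $G_{DU}$ as one endpoint, so that the endpoint excluded by the cited lemma is $x_D$ (respectively $x_U$) and the produced breakpoint is trapped in the open segment. A smaller subtlety is that the lower bound $W^*(x_D) \ge W_G$ cannot be read off the downward wedge of $x_D$ itself --- which only constrains points above $x_D$ --- and must instead be obtained from the downward wedge of a point of $G_{DU}$ lying just below it.
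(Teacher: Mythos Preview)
Your proof is correct and follows essentially the same approach as the paper: both reduce to the no-breakpoint structure of $G_{DU}$ via Lemma~\ref{lem_same_loss}, assume by symmetry that the common wedge direction on $G_{DU}$ is downward, use Lemma~\ref{lem_neq_weight} to pin $W^*(x_D)$ to the common value $W_G$, and invoke Lemma~\ref{lem_MA_change}(a) on a pair $(p,x_U)$ to rule out $W^*(x_U)=W_G$. Your ordering (establishing $W_G=W^*(x_D)$ first, then deriving both (b) and (a)) is in fact slightly cleaner than the paper's, which proves (a) first and tacitly uses the equality $W^*(x_B)=W^*(x_U)$ when applying Lemma~\ref{lem_MA_change}.
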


\Xomit{
\begin{proof}
We prove the correctness of statement (a) by contradiction,
and suppose that $W^*(x_D) = W^*(x_U)$.
Besides, the fact that $L$ is non-leaning
implies that no breakpoint exists in $G_{DU}$.
By Lemmas \ref{lem_no_strong} and \ref{lem_no_breakpoint},
the wedges of all points in $G_{DU}$
are of the same direction, either downward or upward.
Suppose the downward case by symmetry,
and pick an arbitrary point in $G_{DU}$, say, $x_B$.
Since $Wedge(x_B)$ is downward,
we have that $MA(x_B)$ does not intersect $(0, \pi)$.
Oppositely, by definition $Wedge(x_U)$ is upward,
so $CA(x_U)$ and $MA(x_U)$ are included in $(0, \pi)$.
Because $x_B$ is strictly above $x_U$ and $W^*(x_D) = W^*(x_U)$,
according to the condition (a) of Lemma \ref{lem_MA_change},
there exists at least one breakpoint in $\overline{x_Bx_U} \backslash \{x_U\}$,
which is a contradiction.
Therefore, statement (a) holds.

The proof of statement (b) is also done by contradiction.
By symmetry, assume that $W^*(x_D) > W^*(x_U)$ in statement (a).
Consider an arbitrary point $x \in G_{DU}$.
By Lemma \ref{lem_convexity}, we have that
$W^*(x) \le \max\{W^*(x_D), W^*(x_U)\} = W^*(x_D)$.
Suppose that the equality does not hold.
Then, by Lemma \ref{lem_neq_weight}, at least one breakpoint
exists in the segment $\overline{x_Dx} \backslash \{x_D\}$,
contradicting the no-breakpoint fact.
Thus, $W^*(x) = W^*(x_D)$ and statement (b) holds.
\qed
\end{proof}
}

Let $W_1 = \max\{W^*(x_D), W^*(x_U)\}$.
We are going to define the pseudo wedge on either $x_U$ or $x_D$,
depending on which one has the smaller weight loss.
We consider first the case that $W^*(x_D) > W^*(x_U)$,
and obtain the following.

%
%

\begin{lem} \label{lem_xu_boundary}
    If $L$ is non-leaning and $W^*(x_D) > W^*(x_U)$,
    there exists one angle $\theta$ for $x_U$,
    where $\pi \le \theta \le 2\pi$, such that 
    $W(H(\B{x_U}{\theta}, \y{x_U}{\theta})) \ge W_1$.
\end{lem}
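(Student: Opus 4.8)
The plan is to realize the required angle as the inward normal direction of a supporting line of $CH(\mathcal{C}(S))$ at $x_U$, where $S\subseteq V$ is a subset of weight $W_1$ derived from $x_D$. Since $Wedge(x_D)$ is downward, Property~\ref{pro_wedge_dir} gives $MA(x_D)\subseteq(\pi,2\pi)$; fix an angle $\theta_D\in MA(x_D)$ and put $S=V\cap H^-(\B{x_D}{\theta_D},\y{x_D}{\theta_D})$. Then $W(S)=\W{x_D}{\theta_D}=W^*(x_D)=W_1$, the last equality using $W^*(x_D)>W^*(x_U)$. Because every $v\in S$ lies strictly on the $\y{x_D}{\theta_D}$-side of $\B{x_D}{\theta_D}$, it lies at distance exceeding $\rr$ from the parallel line $\F{x_D}{\theta_D}$ on that side, so the whole disk $\Cr{v}$ lies strictly on the $\y{x_D}{\theta_D}$-side of $\F{x_D}{\theta_D}$; hence $CH(\mathcal{C}(S))\subseteq H^-(\F{x_D}{\theta_D},\y{x_D}{\theta_D})$. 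In particular $x_D\notin CH(\mathcal{C}(S))$, and, since $\theta_D\in(\pi,2\pi)$, this open half-plane --- and therefore $CH(\mathcal{C}(S))$ --- contains no point of $L$ at or above $x_D$.

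The crux is to show that $x_U$ is a boundary point of $CH(\mathcal{C}(S))$ and that $CH(\mathcal{C}(S))\cap L$ lies weakly below $x_U$. From $W^*(x_U)<W_1=W(S)$ and Lemma~\ref{lem_point_roundch} we get $x_U\in CH(\mathcal{C}(S))$. Suppose some point of $G_{DU}$ also lay in $CH(\mathcal{C}(S))$; then, since $x_D\notin CH(\mathcal{C}(S))$, the segment $\overline{x_Dx_U}$ would meet $\partial CH(\mathcal{C}(S))$ at a point interior to $G_{DU}$. But $\partial CH(\mathcal{C}(S))$ is made up of arcs of circles in $\mathcal{C}(V)$ and of segments of outer tangent lines in $\mathcal{T}$, so that meeting point is a breakpoint --- impossible, as $L$ is non-leaning and thus has no breakpoint in $G_{DU}$. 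Hence no point of $G_{DU}$ lies in $CH(\mathcal{C}(S))$; combined with the first paragraph this leaves $CH(\mathcal{C}(S))\cap L$ with no point strictly above $x_U$, while $x_U$, being a limit of points of $G_{DU}$ outside $CH(\mathcal{C}(S))$, lies on $\partial CH(\mathcal{C}(S))$.

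Now take $\theta$ to be the inward normal direction of the supporting line of $CH(\mathcal{C}(S))$ at $x_U$; it is well defined because $\partial CH(\mathcal{C}(S))$, being a concatenation of radius-$\rr$ circular arcs and their common tangent segments, is $C^1$. Thus $CH(\mathcal{C}(S))\subseteq H(\F{x_U}{\theta},\y{x_U}{\theta})$. I claim $\pi\le\theta\le 2\pi$: otherwise $\y{x_U}{\theta}$ would lie strictly above the horizontal through $x_U$, so $H(\F{x_U}{\theta},\y{x_U}{\theta})$ would exclude every point of $L$ strictly below $x_U$; together with the previous paragraph this forces $CH(\mathcal{C}(S))\cap L=\{x_U\}$, making $L$ a supporting line of $CH(\mathcal{C}(S))$ at $x_U$, whence the unique inward normal is horizontal --- contradicting the assumption on $\theta$. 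Finally, $CH(\mathcal{C}(S))\subseteq H(\F{x_U}{\theta},\y{x_U}{\theta})$ yields $\Cr{v}\subseteq H(\F{x_U}{\theta},\y{x_U}{\theta})$ for each $v\in S$, so $v$ lies at distance at least $\rr$ from $\F{x_U}{\theta}$ on the $\y{x_U}{\theta}$-side and therefore inside $H(\B{x_U}{\theta},\y{x_U}{\theta})$, the bisector $\B{x_U}{\theta}$ being the parallel to $\F{x_U}{\theta}$ at distance $\rr$. Consequently $W(H(\B{x_U}{\theta},\y{x_U}{\theta}))\ge W(S)=W_1$, which proves the lemma.

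The step I expect to be the real obstacle is the middle one: extracting from the fact that $L$ is non-leaning exactly the breakpoint-freeness of $G_{DU}$, and combining it with the arc-and-tangent structure of $\partial CH(\mathcal{C}(S))$ to pin $x_U$ onto the boundary while also keeping $CH(\mathcal{C}(S))$ from rising above $x_U$ along $L$. The borderline case in which $L$ is merely tangent to $CH(\mathcal{C}(S))$ at $x_U$ must also be handled, which is why the conclusion is stated with the closed range $\pi\le\theta\le2\pi$.
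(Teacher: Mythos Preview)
Your argument is correct and follows essentially the same route as the paper: construct a subset $S\subseteq V$ with $W(S)=W_1$, show that $x_U$ lies on the upper boundary of $CH(\mathcal{C}(S))$, and read off $\theta$ from a supporting line there. The only notable variation is the anchor used to build $S$: the paper takes $S$ from a point $x\in G_{DU}$ arbitrarily close to $x_U$ (using Lemma~\ref{lem_no_sideward}(b) to get $W^*(x)=W_1$) and uses the ``arbitrarily close'' limit to place $x_U$ on $\partial CH(\mathcal{C}(S))$, whereas you take $S$ from $x_D$ and instead invoke Corollary~\ref{cor_breakpoint} explicitly (non-leaning $\Rightarrow$ no breakpoint in $G_{DU}$) to reach the same conclusion; your $C^1$/uniqueness justification for $\theta\in[\pi,2\pi]$ is also more explicit than the paper's informal ``upper boundary'' claim.
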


\Xomit{
\begin{proof}
We first show that there exists at least
a subset $S \subseteq V$ with $W(S) = W_1$,
such that $x_U$ locates on the upper boundary of $CH(\mathcal{C}(S))$.
Let $x$ be the point strictly above but arbitrarily close to $x_U$ on $L$.
By Lemma \ref{lem_no_sideward}, $W^*(x) = W_1$,
hence $W^*(x) > W^*(x_U)$ by case assumption.
It follows that $x_U \in Wedge(x)$ by Lemma \ref{lem_point_wedge}
and $Wedge(x)$ must be downward.
By Property \ref{pro_wedge_dir},
we have that $CA(x) \subseteq (\pi, 2\pi)$.
Thus, there exists an angle $\theta' \in MA(x)$,
where $\pi < \theta' < 2\pi$,
such that $W(H^-(\B{x}{\theta'}, \y{x}{\theta'})) = \W{x}{\theta'} = W_1$.

Let $S = V \bigcap H^-(\B{x}{\theta'}, \y{x}{\theta'})$.
Since $W^*(x_U) < W_1 = W(S)$, $x_U$ is inside $CH(\mathcal{C}(S))$
by Lemma \ref{lem_point_roundch}.
Oppositely, by the definition of $S$,
$x$ is outside the convex hull $CH(\mathcal{C}(S))$.
It implies that $x_U$ is the topmost intersection point between
$CH(\mathcal{C}(S))$ and $L$,
hence on the upper boundary of $CH(\mathcal{C}(S))$.
(It is possible that $x_U$ locates at the
leftmost or the rightmost point of $CH(\mathcal{C}(S))$.)

The claimed angle $\theta$ is obtained as follows.
Since $x_U$ is a boundary point of $CH(\mathcal{C}(S))$,
there exists a line $F$ passing through $x_U$
and tangent to $CH(\mathcal{C}(S))$.
Let $\theta$ be the angle satisfying that
$\F{x_U}{\theta} = F$, $\pi \le \theta \le 2\pi$,
and $CH(\mathcal{C}(S)) \subset H(\F{x_U}{\theta}, \y{x_U}{\theta})$.
Obviously, we have that $S \subset H(\B{x_U}{\theta}, \y{x_U}{\theta})$
and thus $W(H(\B{x_U}{\theta}, \y{x_U}{\theta})) \ge W^*(S) = W_1$.
\qed
\end{proof}
}

Let $\theta_U$ be an arbitrary angle satisfying
the conditions of Lemma \ref{lem_xu_boundary}.
We apply the line $\F{x_U}{\theta_U}$
for trimming the region of $Wedge(x_U)$,
so that a sideward wedge can be obtained.
Let $PW(x_U)$, called the \emph{pseudo wedge} of $x_U$, denote the
intersection of $Wedge(x_U)$ and $H(\F{x_U}{\theta_U}, \y{x_U}{\theta_U})$.
Deriving from the three facts that $Wedge(x_U)$ is upward,
$\delta(Wedge(x_U)) < \pi$, and $\pi \le \theta_U \le 2\pi$,
we can observe that either $PW(x_U)$ is $x_U$ itself,
or it intersects only one of the right and left plane of $L$.
In the two circumstances, $PW(x_U)$ is said to be
\emph{null} or \emph{sideward}, respectively.
The pseudo wedge has similar functionality as wedges,
as shown in the following corollary.


\begin{cor} \label{cor_xu_PWprune}
    For any point $x' \notin PW(x_U)$, $W^*(x') \ge W^*(x_U)$.
\end{cor}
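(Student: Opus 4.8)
The plan is to split the claim according to the location of $x'$ relative to the two regions whose intersection defines $PW(x_U)$, namely $Wedge(x_U)$ and the half-plane $H(\F{x_U}{\theta_U}, \y{x_U}{\theta_U})$. Since $x' \notin PW(x_U)$, at least one of the following holds: (i) $x' \notin Wedge(x_U)$, or (ii) $x' \notin H(\F{x_U}{\theta_U}, \y{x_U}{\theta_U})$. In case (i), Lemma \ref{lem_point_wedge} applied to $x_U$ immediately gives $W^*(x') \ge W^*(x_U)$, so nothing more is needed there.

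The substance is case (ii), where $x'$ lies strictly in the open half-plane on the far side of $\F{x_U}{\theta_U}$ from $\y{x_U}{\theta_U}$. First I would invoke Lemma \ref{lem_xu_boundary} to fix the subset $S \subseteq V$ with $W(S) = W_1$ witnessing the angle $\theta_U$; by the construction in that lemma, $S \subset H(\B{x_U}{\theta_U}, \y{x_U}{\theta_U})$, and in fact $CH(\mathcal{C}(S)) \subset H(\F{x_U}{\theta_U}, \y{x_U}{\theta_U})$ — the convex hull of the $\rr$-circles of $S$ lies entirely in the closed half-plane bounded by $\F{x_U}{\theta_U}$ containing $\y{x_U}{\theta_U}$. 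Since $x'$ is strictly on the opposite open side of $\F{x_U}{\theta_U}$, it follows that $x'$ is outside $CH(\mathcal{C}(S))$. Then Lemma \ref{lem_point_roundch} yields $W^*(x') \ge W(S) = W_1 = \max\{W^*(x_D), W^*(x_U)\} \ge W^*(x_U)$, which is exactly what we want.

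Combining the two cases completes the proof. The main obstacle — really the only delicate point — is making sure in case (ii) that "$x'$ strictly beyond $\F{x_U}{\theta_U}$" genuinely forces "$x'$ outside $CH(\mathcal{C}(S))$"; this needs the containment $CH(\mathcal{C}(S)) \subset H(\F{x_U}{\theta_U}, \y{x_U}{\theta_U})$, which one must read off from the tangent-line construction of $\theta_U$ in Lemma \ref{lem_xu_boundary} (the line $\F{x_U}{\theta_U}$ was chosen tangent to $CH(\mathcal{C}(S))$ with the hull on the $\y{x_U}{\theta_U}$-side). A minor subtlety is the boundary: if $x'$ lies exactly on $\F{x_U}{\theta_U}$ it is not in case (ii) at all (that half-plane is closed), so no issue arises there; and if $PW(x_U)$ is null, case (i) and case (ii) together still cover every $x' \ne x_U$, while $x' = x_U$ is trivial. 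No convexity or wedge-direction machinery beyond what is already available is required.
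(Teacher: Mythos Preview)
Your argument is correct and uses the same two-case split as the paper. Case (i) is identical. In case (ii), however, the paper takes a more direct route: rather than unpacking the construction inside Lemma~\ref{lem_xu_boundary} to recover the witnessing set $S$ and then appealing to Lemma~\ref{lem_point_roundch}, it uses only the \emph{statement} of Lemma~\ref{lem_xu_boundary}. From $x' \notin H(\F{x_U}{\theta_U}, \y{x_U}{\theta_U})$ the paper observes that the parallel bisector shifts the right way, giving $H(\B{x_U}{\theta_U}, \y{x_U}{\theta_U}) \subseteq H^-(\B{x'}{\theta_U}, \y{x'}{\theta_U})$, and hence
\[
W^*(x') \;\ge\; \W{x'}{\theta_U} \;\ge\; W\bigl(H(\B{x_U}{\theta_U}, \y{x_U}{\theta_U})\bigr) \;\ge\; W_1 \;\ge\; W^*(x_U).
\]
Your detour through $CH(\mathcal{C}(S))$ is equally valid but depends on the particular tangent-line construction that produced $\theta_U$ (so that $\F{x_U}{\theta_U}$ supports the hull), whereas the paper's argument works for \emph{any} $\theta_U$ satisfying the inequality in Lemma~\ref{lem_xu_boundary}. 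A minor point: for an arbitrary such $\theta_U$ you would only get $W(S) \ge W_1$ rather than equality, but this is harmless for the conclusion.
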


\Xomit{
\begin{proof}
If $x' \notin Wedge(x_U)$,
the lemma directly holds by Lemma \ref{lem_point_wedge}.
Otherwise, we have that $x' \notin H(\F{x_U}{\theta}, \y{x_U}{\theta})$
and thus $H(\B{x'}{\theta}, \y{x'}{\theta})$
contains $H(\B{x_U}{\theta}, \y{x_U}{\theta})$.
Then, by Lemma \ref{lem_xu_boundary}, 
$W^*(x') \ge W(H(\B{x'}{\theta}, \y{x'}{\theta}))
\ge W(H(\B{x_U}{\theta}, \y{x_U}{\theta})) \ge W_1$, 
thereby completes the proof.
\qed
\end{proof}
}

By this lemma, if $PW(x_U)$ is found to be sideward,
points on the opposite half-plane with respect to $L$ can be pruned.
If $PW(x_U)$ is null, $x_U$ becomes another kind of strong \CENR{1}{1}s,
in the meaning that it is also an
immediate solution to the \CENR{1}{1} problem.
Without confusion, we call $x_U$ a
\emph{conditional \CENR{1}{1}} in the latter case.

On the other hand, considering the reverse case that $W^*(x_D) < W^*(x_U)$,
we can also obtain an angle $x_D$ and a pseudo wedge $PW(x_D)$
for $x_D$ by symmetric arguments.
Then, either $PW(x_D)$ is sideward and the opposite side of $L$ can be pruned,
or $PW(x_D)$ itself is a conditional \CENR{1}{1}.
Thus, the third phase solves the problem
of the nonexistence of sideward wedges.

%
%

Recall that the three phases of searching sideward wedges
is based on the existence of $x_D$ and $x_U$ on $L$,
which was not guaranteed before.
Here we show that, by constructing appropriate border lines,
we can guarantee the existence of $x_D$ and $x_U$
while searching between these border lines.
The \emph{bounding box} is defined as the smallest axis-aligned rectangle
that encloses all circles in $\mathcal{C}(V)$.
Obviously, any point $x$ outside the box
satisfies that $W^*(x) = W(V)$ and must not be a \CENR{1}{1}.
Thus, given a vertical line not intersecting the box,
the half-plane to be pruned is trivially decided.
Moreover, let $T_\text{top}$ and $T_\text{btm}$ be two arbitrary
horizontal lines strictly above and below the bounding box, respectively.
We can obtain the following.

\begin{lem} \label{lem_bounding}
    Let $L$ be an arbitrary vertical line intersecting the bounding box,
    and $x'_D$ and $x'_U$ denote its intersection points
    with $T_\text{top}$ and $T_\text{btm}$, respectively.
    $Wedge(x'_D)$ is downward and $Wedge(x'_U)$ is upward.
\end{lem}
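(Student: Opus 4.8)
The plan is to show that $x'_D = L \cap T_{\text{top}}$ lies strictly above the bounding box, so that all weight of $V$ can be captured by a follower's point placed just below $x'_D$ along $L$, and then to translate this into a statement about $CA(x'_D)$ and hence about the direction of $Wedge(x'_D)$. Since $T_{\text{top}}$ is strictly above the bounding box, which encloses every circle in $\mathcal{C}(V)$, the point $x'_D$ is strictly above $\Cr{v}$ for every $v \in V$. First I would exhibit a feasible follower's point for $x'_D$ whose half-plane contains all of $V$: consider the angle $\theta = 3\pi/2$ (pointing straight down), so that $\B{x'_D}{\theta}$ is the horizontal tangent line to $\Cr{x'_D}$ from below. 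Because every circle $\Cr{v}$ lies strictly below this tangent line (as $x'_D$ is above the whole box), every $v \in V$ lies in $H^-(\B{x'_D}{\theta}, \y{x'_D}{\theta})$, so $\W{x'_D}{\theta} = W(V) = W^*(x'_D)$, giving $\theta = 3\pi/2 \in MA(x'_D)$.

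Next I would argue that in fact a whole neighbourhood of downward-pointing angles achieves $W^*(x'_D)$, and more importantly that $MA(x'_D)$ is confined to angles whose associated half-plane $H^-(\F{x'_D}{\theta}, \y{x'_D}{\theta})$ points ``downward'' relative to $L$. The cleanest way is via Property~\ref{pro_wedge_dir}: it suffices to show $CA(x'_D) \subseteq (\pi, 2\pi)$, which by that property is exactly equivalent to $Wedge(x'_D)$ being downward. So the real task is to rule out any $\theta \in MA(x'_D)$ with $0 \le \theta \le \pi$. Suppose such a $\theta$ existed. Then $H^-(\B{x'_D}{\theta}, \y{x'_D}{\theta})$ is a half-plane bounded by a tangent line to $\Cr{x'_D}$ whose interior normal points (weakly) upward, i.e. the half-plane lies on or above a line through a point of $\Cr{x'_D}$; since $\Cr{x'_D}$ is already strictly above the bounding box, this half-plane misses the entire box and hence contains no point of $V$, so $\W{x'_D}{\theta} = 0 < W(V) = W^*(x'_D)$, contradicting $\theta \in MA(x'_D)$. (One must handle the boundary angles $\theta = 0$ and $\theta = \pi$ too, but there the bisector is vertical and again the relevant half-plane lies entirely to one side of $L$, hence misses the box; and recall $\theta_b, \theta_e \notin CA(x'_D)$ so the endpoints of the interval need not themselves be scrutinized, only interior points of $MA(x'_D)$.) Thus $MA(x'_D) \subseteq (0,\pi)$ — wait, more precisely $MA(x'_D)$ avoids $[0,\pi]$, so $MA(x'_D) \subseteq (\pi, 2\pi)$, whence $CA(x'_D) \subseteq (\pi, 2\pi)$ and $Wedge(x'_D)$ is downward by Property~\ref{pro_wedge_dir}.

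The argument for $x'_U = L \cap T_{\text{btm}}$ is entirely symmetric: $x'_U$ lies strictly below the whole bounding box, the angle $\theta = \pi/2$ (straight up) has $\W{x'_U}{\theta} = W(V) = W^*(x'_U)$, and no $\theta \in [\pi, 2\pi]$ can be in $MA(x'_U)$ because its half-plane would miss the box entirely; hence $MA(x'_U) \subseteq (0,\pi)$, so $CA(x'_U) \subseteq (0,\pi)$, and $Wedge(x'_U)$ is upward by Property~\ref{pro_wedge_dir}. I expect the main obstacle to be stating carefully why a ``weakly upward'' half-plane tangent to $\Cr{x'_D}$ genuinely avoids the bounding box — i.e. pinning down the geometry that $T_{\text{top}}$ being strictly above the box forces $\Cr{x'_D}$, and therefore every tangent half-plane of $\Cr{x'_D}$ with upward-ish normal, to be disjoint from the box — and in making sure the degenerate boundary angles $\theta \in \{0, \pi\}$ and the corners of $CA$ are dispatched cleanly rather than hand-waved; the rest is a direct application of Property~\ref{pro_wedge_dir}.
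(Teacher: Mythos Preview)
Your overall strategy matches the paper's: establish $W^*(x'_D) = W(V)$, then show $MA(x'_D) \cap [0,\pi] = \emptyset$ so that $CA(x'_D) \subseteq (\pi,2\pi)$ and Property~\ref{pro_wedge_dir} applies. However, the justification you give for the second step is wrong. You claim that for $\theta \in (0,\pi)$ the half-plane $H^-(\B{x'_D}{\theta},\y{x'_D}{\theta})$ ``misses the entire box'' and hence $\W{x'_D}{\theta} = 0$. This is false: for $\theta$ close to $0$ or $\pi$ the bounding tangent line is nearly vertical, and the half-plane sweeps far to one side and can contain many points of $V$. (Concretely, take $V = \{(-10,0),(10,0)\}$, $\gamma = 1$, $L$ at $x=0$, $x'_D = (0,1.1)$, and $\theta = \pi/6$; then $(10,0)$ lies well inside the half-plane.) Your auxiliary claim that ``$\Cr{x'_D}$ is already strictly above the bounding box'' is also not guaranteed: only $x'_D$ itself lies strictly above the box, and $\Cr{x'_D}$ extends $\gamma$ below it, possibly into the box.

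What actually suffices --- and what the paper argues --- is the weaker statement $\W{x'_D}{\theta} < W(V)$, i.e., the half-plane $H^-(\F{x'_D}{\theta},\y{x'_D}{\theta})$ cannot contain \emph{all} circles in $\mathcal{C}(V)$. This is where the hypothesis that $L$ intersects the bounding box is genuinely used: for $\theta \in (0,\pi)$ the line $\F{x'_D}{\theta}$ passes through $x'_D$ and the portion of $L$ below $x'_D$ lies outside the half-plane, while for $\theta \in \{0,\pi\}$ the line $\F{x'_D}{\theta}$ equals $L$, and since $L$ meets the box not all circles can lie strictly on one side. In each case some $v \in V$ is excluded, giving $\W{x'_D}{\theta} < W(V) = W^*(x'_D)$ and hence $\theta \notin MA(x'_D)$. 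Once you replace ``$= 0$'' with ``$< W(V)$'' and supply this argument, the rest of your outline goes through.
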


\Xomit{
\begin{proof}
Consider the case about $Wedge(x'_D)$.
As described above, we know the fact that $W^*(x_D) = W(V)$.
Let $\theta$ be an arbitrary angle with $0 \le \theta \le \pi$.
We can observe that $H^-(\F{x'_D}{\theta}, \y{x'_D}{\theta})$
cannot contain all circles in $\mathcal{C}(V)$,
that is, $V \not\subset H^-(\B{x'_D}{\theta}, \y{x'_D}{\theta})$.
This implies that $\W{x'_D}{\theta} < W^*(x'_D)$ and $\theta \notin MA(x'_D)$.
Therefore, we have that $MA(x'_D) \subset (\pi, 2\pi)$
and $Wedge(x'_D)$ is downward by Property \ref{pro_wedge_dir}.
By similar arguments, we can show that $Wedge(x'_U)$ is upward.
Thus, the lemma holds.
\qed
\end{proof}
}

According to this lemma,
by inserting $T_\text{top}$ and $T_\text{btm}$ into $\mathcal{T}$,
the existence of $x_D$ and $x_U$ is enforced
for any vertical line intersecting the bounding box.
Besides, it is obvious to see that the insertion does not affect
the correctness of all lemmas developed so far.

Summarizing the above discussion,
the whole picture of our desired pruning procedure can be described as follows.
In the beginning, we perform a preprocessing to obtain the bounding box
and then add $T_\text{top}$ and $T_\text{btm}$ into $\mathcal{T}$.
Now, given a vertical line $L$,
whether to prune its left or right plane
can be determined by the following steps.
\begin{enumerate}
    \item If $L$ does not intersect the bounding box,
        prune the half-plane not containing the box.
    \item Compute $x_D$ and $x_U$ on $L$.
    \item Find a sideward wedge or pseudo wedge via three forementioned phases.
        (Terminate whenever a strong or conditional \CENR{1}{1} is found.)
    \begin{enumerate}
        \item If breakpoints exist between $x_D$ and $x_U$,
            pick any of them and check it.
        \item If no such breakpoint,
            decide whether $L$ is non-leaning by checking $x_B$.
        \item If $L$ is non-leaning, compute $PW(x_U)$ or $PW(x_D)$
            depending on which of $x_U$ and $x_D$ has smaller weight loss.
    \end{enumerate}
    \item Prune the right or left plane of $L$ according to
        the direction of the sideward wedge or pseudo wedge.
\end{enumerate}

The correctness of this procedure follows from the developed lemmas.
Any vertical line not intersecting the bounding box
is trivially dealt with in Step 1, due to the property of the box.
When $L$ intersects the box, by Lemma \ref{lem_bounding},
$x_D$ and $x_U$ can certainly be found in Step 2.
The three sub-steps of Step 3 correspond to the three searching phases.
When $L$ is not non-leaning, a sideward wedge is found,
either at some breakpoint between $x_D$ and $x_U$ in Step 3(a)
by Corollary \ref{cor_breakpoint},
or at $x_B$ in Step 3(b) by Lemma \ref{lem_no_breakpoint}.
Otherwise, according to Lemma \ref{lem_xu_boundary} or its symmetric version,
a pseudo wedge can be built in Step 3(c) for $x_U$ or $x_D$, respectively.
Finally in Step 4, whether to prune the left or right plane of $L$
can be determined via the just-found sideward wedge or pseudo wedge,
by respectively Lemma \ref{lem_point_wedge} or Corollary \ref{cor_xu_PWprune}.


The time complexity of this procedure is analyzed as follows.
The preprocessing for computing the bounding box
trivially takes $O(n)$ time.
In Step 1, any vertical line not intersecting the box
can be identified and dealt with in $O(1)$ time.
Finding $x_D$ and $x_U$ in Step 2 requires the help
of the binary-search algorithm developed in \ref{ssec_LineLocal}.
Although the algorithm is designed to find a local optimal point,
we can easily observe that slightly modifying its objective
makes it applicable to this purpose without changing its time complexity.
Thus, Step 2 can be done in $O(n \log^2 n)$ time by Lemma \ref{lem_localopt}.

In Step 3(a), all breakpoints between $x_D$ and $x_U$
can be found in $O(n \log n)$ time as follows.
As done in Lemma \ref{lem_localopt},
we first list all breakpoints on $L$ by
$O(n)$ sorted sequences of length $O(n)$,
which takes $O(n \log n)$ time.
Then, by performing binary search with
the y-coordinates of $x_D$ and $x_U$,
we can find within each sequence the breakpoints
between them in $O(\log n)$ time.
In Step 3(a) or 3(b),
checking a picked point $x$ is done by computing $CA(x)$,
that requires $O(n \log n)$ time by Lemma \ref{lem_computeWedge}.
To compute the pseudo wedge in Step 3(c), the angle
$\theta_U$ satisfying Lemma \ref{lem_xu_boundary},
or symmetrically $\theta_D$, can be computed in $O(n \log n)$ time
by sweeping technique as in Lemma \ref{lem_computeWedge}.
Thus, $PW(x_U)$ or $PW(x_D)$ can be computed in $O(n \log n)$ time.
Finally, the pruning decision in Step 4 takes $O(1)$ time.
Summarizing the above, these steps require $O(n \log^2 n)$ time in total.
Since the invocation of Lemma \ref{lem_localopt}
needs an additional $O(n^2 \log n)$-time preprocessing,
we have the following result.

\begin{lem} \label{lem_linepruning}
    With an $O(n^2 \log n)$-time preprocessing,
    whether to prune the right or left plane of a given vertical line $L$
    can be determined in $O(n \log^2 n)$ time.
\end{lem}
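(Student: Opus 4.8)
The plan is to assemble the routine sketched above and check that each step meets the stated bounds. The preprocessing has two parts: the $O(n^2 \log n)$-time preprocessing of Lemma~\ref{lem_localopt} (the sorted angular sequences $P(v)$ for all $v \in V$ together with all outer tangent lines of $\mathcal{T}$), and an $O(n)$-time computation of the bounding box enclosing $\mathcal{C}(V)$, after which the two auxiliary horizontal lines $T_\text{top}$ and $T_\text{btm}$ are inserted into $\mathcal{T}$. By Lemma~\ref{lem_bounding}, this insertion forces every vertical line meeting the box to have a breakpoint with a downward wedge above it and one with an upward wedge below it, so that $x_D$ and $x_U$ are well defined; and it plainly does not invalidate any earlier lemma.

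Given a vertical line $L$, I would run the four steps. Step~1 disposes in $O(1)$ time of the case that $L$ misses the box, where $W^*$ is the constant $W(V)$ off the box and the half-plane to prune is the one not containing the box. Step~2 finds $x_D$ and $x_U$: this is the binary search of Lemma~\ref{lem_localopt} with its objective changed to seek the extreme breakpoints carrying downward/upward wedges instead of a sideward one, so it still costs $O(n \log^2 n)$ time. Step~3 searches for a sideward wedge (or a pseudo wedge) through the three phases. In Step~3(a) all breakpoints on $L$ are listed as $O(n)$ sorted sequences in $O(n \log n)$ time as in Lemma~\ref{lem_localopt}, and those in $G_{DU}$ are extracted by binary search with the $y$-coordinates of $x_D,x_U$ in $O(\log n)$ time per sequence; if one exists, Corollary~\ref{cor_breakpoint} makes it a strong \CENR{1}{1} or the carrier of a sideward wedge, found by one $O(n \log n)$-time call to Lemma~\ref{lem_computeWedge}. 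If no such breakpoint exists, Step~3(b) checks the bisector point $x_B$ of $x_D$ and $x_U$ in $O(n \log n)$ time; by Lemma~\ref{lem_no_breakpoint} its wedge direction is shared by every inner point of $G_{DU}$, so either a sideward wedge appears or, by Lemma~\ref{lem_no_strong}, $L$ is non-leaning. Step~3(c) then treats a non-leaning $L$: Lemma~\ref{lem_no_sideward} gives $W^*(x_D) \neq W^*(x_U)$, and for whichever of $x_D, x_U$ has the smaller weight loss (say $x_U$ when $W^*(x_D) > W^*(x_U)$) the angle $\theta_U$ of Lemma~\ref{lem_xu_boundary} is obtained in $O(n \log n)$ time by the same sweeping as in Lemma~\ref{lem_computeWedge}, yielding the pseudo wedge $PW(x_U)$ in $O(n \log n)$ time. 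Finally, Step~4 decides the half-plane to prune in $O(1)$ time, using Lemma~\ref{lem_point_wedge} for a sideward wedge and Corollary~\ref{cor_xu_PWprune} for a pseudo wedge; a strong or conditional \CENR{1}{1} met anywhere along the way terminates the procedure at once.

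Summing the costs, only Step~2 is $O(n \log^2 n)$ and every other step is $O(n \log n)$ or $O(1)$, so the per-line work is $O(n \log^2 n)$ on top of the $O(n^2 \log n)$ preprocessing inherited from Lemma~\ref{lem_localopt}, which is the claimed bound. The correctness reduces to verifying that Step~3 always ends with a usable object — a strong/conditional \CENR{1}{1}, a sideward wedge, or a sideward pseudo wedge — and this is exactly the content of the chain already in place: existence of $x_D,x_U$ (Lemma~\ref{lem_bounding}), the breakpoint dichotomy on $G_{DU}$ with Corollary~\ref{cor_breakpoint} and the direction-consistency results (Lemmas~\ref{lem_same_side} and~\ref{lem_no_breakpoint}), the characterization of non-leaning lines (Lemmas~\ref{lem_no_strong} and~\ref{lem_no_sideward}), and the pseudo-wedge construction with its pruning property (Lemma~\ref{lem_xu_boundary}, Corollary~\ref{cor_xu_PWprune}). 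I expect the only real obstacle to be the bookkeeping that, in each branch, the resulting region meets exactly one of the two open half-planes of $L$ so that a definite side can be pruned; this follows from $\delta(Wedge(x)) < \pi$ together with Property~\ref{pro_wedge_dir}, and in the non-leaning case from the constraint $\pi \le \theta_U \le 2\pi$ which forces $PW(x_U)$ to be either null or sideward.
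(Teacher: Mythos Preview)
Your proposal is correct and follows essentially the same approach as the paper: the same preprocessing (the $O(n^2\log n)$ angular sequences of Lemma~\ref{lem_localopt} plus the bounding box with $T_\text{top},T_\text{btm}$), the same four-step procedure with the three conditional phases for locating a sideward wedge or pseudo wedge, and the same time accounting that singles out Step~2 as the $O(n\log^2 n)$ bottleneck. Your correctness summary invokes exactly the chain of lemmas the paper uses, and your closing remark about why the resulting wedge or pseudo wedge meets only one side of $L$ is precisely the observation the paper relies on.
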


\subsection{Searching on the Euclidean Plane} \label{ssec_PlaneSearch}

In this subsection, we come back to the \CENR{1}{1} problem.
Recall that, by Lemma \ref{lem_TTCCTC},
at least one \CENR{1}{1} can be found
in the three sets of intersection points
$\mathcal{T} \times \mathcal{T}$, $\mathcal{C}(V) \times \mathcal{T}$,
and $\mathcal{C}(V) \times \mathcal{C}(V)$,
which consist of total $O(n^4)$ points.
Let $\mathcal{L}$ denote the set of all vertical lines
passing through these $O(n^4)$ intersection points. 
By definition, there exists a vertical line $L^* \in \mathcal{L}$
such that its local optimal point is a \CENR{1}{1}.
Conceptually, with the help of Lemma \ref{lem_linepruning},
$L^*$ can be derived by applying prune-and-search approach to $\mathcal{L}$:
pick the vertical line $L$ from $\mathcal{L}$ with median x-coordinates,
determine by Lemma \ref{lem_linepruning}
whether the right or left plane of $L$ should be pruned,
discard lines of $\mathcal{L}$ in the pruned half-plane,
and repeat above until two vertical lines left.
Obviously, it costs too much if this approach is carried out
by explicitly generating and sorting the $O(n^4)$ lines.
However, by separately dealing with each of the three sets,
we can implicitly maintain sorted sequences of these lines
and apply the prune-and-search approach.

Let $\mathcal{L_T}$, $\mathcal{L_M}$, and $\mathcal{L_C}$
be the sets of all vertical lines passing through the intersection points
in $\mathcal{T} \times \mathcal{T}$, $\mathcal{C}(V) \times \mathcal{T}$,
and $\mathcal{C}(V) \times \mathcal{C}(V)$, respectively.
A \emph{local optimal line} of $\mathcal{L_T}$ is a vertical line $L_t^*$
such that its local optimal point has weight loss no larger than
those of points in $\mathcal{T} \times \mathcal{T}$.
The local optimal lines $L_m^*$ and $L_c^*$ can be similarly defined
for $\mathcal{L_M}$ and $\mathcal{L_C}$, respectively.
We will adopt different prune-and-search techniques
to find the local optimal lines in the three sets,
as shown in the following lemmas.

\begin{lem} \label{lem_TT}
    A local optimal line $L_t^*$ of $\mathcal{L_T}$
    can be found in $O(n^2 \log n)$ time.
\end{lem}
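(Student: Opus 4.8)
The plan is to find $L_t^*$ by a prune-and-search over the $O(n^4)$ vertical lines of $\mathcal{L_T}$, using the pruning procedure of Lemma~\ref{lem_linepruning} as the decision oracle while never listing the lines explicitly. First I would run the $O(n^2\log n)$-time preprocessing required by Lemmas~\ref{lem_localopt} and~\ref{lem_linepruning}, which also produces the set $\mathcal{T}$. Each line of $\mathcal{L_T}$ is identified with its x-coordinate, and the set of these x-coordinates is exactly the set of abscissae of pairwise intersection points of the $O(n^2)$ lines of $\mathcal{T}$. I would maintain an open interval $(\alpha,\beta)$ of candidate x-coordinates, initially $(-\infty,+\infty)$, with the invariant that it still contains the x-coordinate of some vertical line whose local optimal point has weight loss no larger than that of every point in $\mathcal{T}\times\mathcal{T}$. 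Such a line exists at the outset --- for instance the vertical line through a minimum-weight-loss point of $\mathcal{T}\times\mathcal{T}$ --- and the invariant is preserved because the half-plane discarded by Lemma~\ref{lem_linepruning} never contains a global minimizer of $W^*$ and, by Lemma~\ref{lem_convexity}, the weight loss is convex along every line. In each round I would choose a pivot abscissa $\xi\in(\alpha,\beta)$, build the vertical line $L_\xi$, call Lemma~\ref{lem_linepruning} on it, and shrink $(\alpha,\beta)$ to the half it keeps, stopping at once if a strong or conditional \CENR{1}{1} is reported. When $(\alpha,\beta)$ contains at most one intersection abscissa, the sought line is one of the at most two vertical lines through the endpoints, and I would output whichever of them has the smaller-weight-loss local optimum, each computed by Lemma~\ref{lem_localopt}.

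The subtle part is doing this in $O(n^2\log n)$ rather than $O(n^2\log^2 n)$. The naive choice of pivot --- the median of the intersection abscissae still inside $(\alpha,\beta)$ --- can be found by slope selection: under the standard dual transform the abscissa of $\ell_i\cap\ell_j$ is, up to sign, the slope of the segment joining the points dual to $\ell_i$ and $\ell_j$, so we are selecting among the $O(n^4)$ slopes determined by $O(n^2)$ dual points, which costs $O(n^2\log n)$ per round; with $O(\log n)$ rounds this yields $O(n^2\log^2 n)$. To remove the extra logarithmic factor I would instead embed the oracle of Lemma~\ref{lem_linepruning} into a \emph{single} run of a recursive slope-selection procedure whose total cost telescopes (in the spirit of the parametric search of \cite{Cole-87}): at each point where the procedure would compare a running intersection count against a target rank in order to recurse on one side of its pivot, we instead invoke Lemma~\ref{lem_linepruning} on the corresponding vertical line and recurse on the side it keeps. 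That procedure makes only $O(\log n)$ such decisions and its remaining work telescopes to $O(n^2\log n)$; since each oracle call costs $O(n\log^2 n)$, the whole search takes $O(n^2\log n)+O(\log n)\cdot O(n\log^2 n)=O(n^2\log n)$, which together with the preprocessing proves the lemma.

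The main obstacle, and the step I expect to require the most care, is exactly this embedding. One must check that the ``keep the left / keep the right plane'' answers returned by Lemma~\ref{lem_linepruning} at the successive pivots are mutually consistent, so that the shrinking candidate interval always retains a line that is optimal over $\mathcal{T}\times\mathcal{T}$; here the unimodality that follows from the convexity of Lemma~\ref{lem_convexity}, together with the fact that the pruning of Lemma~\ref{lem_linepruning} never discards a \CENR{1}{1}, is what makes the argument go through. One must also verify that substituting the oracle for the rank comparison does not disturb the $O(m\log m)$ accounting of the slope-selection recursion with $m=O(n^2)$. The remaining ingredients --- the dual transform, evaluating intersection abscissae, and the two final calls to Lemma~\ref{lem_localopt} --- are routine.
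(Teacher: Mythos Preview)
Your high-level strategy --- use the oracle of Lemma~\ref{lem_linepruning} to prune-and-search over the $O(n^4)$ intersection abscissae without listing them --- matches the paper's. The gap is in the mechanism you propose for shaving the extra $\log n$ factor. You assert the existence of ``a recursive slope-selection procedure whose total cost telescopes'' to $O(n^2\log n)$ and that ``makes only $O(\log n)$'' oracle-replaceable rank comparisons, but no standard slope-selection algorithm has this shape. The $O(m\log m)$ algorithms (Cole--Salowe--Steiger--Szemer\'edi and its relatives) achieve their bound through sampling and contraction schemes that do not reduce to a short chain of rank decisions one could swap for an external left/right oracle; and the simple recursive scheme you actually describe --- pick a pivot abscissa, count how many intersections lie to each side, recurse --- spends $\Theta(m\log m)$ on inversion counting at \emph{every} level, giving precisely the $O(n^2\log^2 n)$ you are trying to avoid. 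The appeal to \cite{Cole-87} does not rescue this: Cole's technique reduces oracle calls inside Megiddo-style parametric search over a parallel algorithm; it is not a statement about telescoping slope selection.

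The paper sidesteps the issue by using Megiddo's parametric search directly. It treats each intersection $T_g\cap T_h$ as a \emph{comparison} between the two tangent lines $T_g,T_h\in\mathcal{T}$ --- applying Lemma~\ref{lem_linepruning} to the vertical line through $T_g\cap T_h$ fixes which of $T_g,T_h$ is above the other on the (unknown) optimal line $L_t^*$ --- and then simulates a cost-optimal parallel sorting algorithm (e.g.\ Cole's parallel merge sort \cite{Cole-88}) on the $N_1=O(n^2)$ lines of $\mathcal{T}$. That sort has $O(\log N_1)$ parallel steps, each presenting a batch of $k=O(N_1)$ independent comparisons; a batch is resolved by repeated weighted-median selection and pruning, costing $O(\log k)$ oracle calls plus $O(k)$ bookkeeping, i.e.\ $O(n\log^3 n)+O(n^2)=O(n^2)$ per step and $O(n^2\log n)$ overall. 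Throughout, the surviving strip is bounded by two vertical lines, one of which is $L_t^*$. This parallel-sorting simulation is the concrete engine your proposal is missing; your duality observation is correct but does not by itself deliver the telescoping you claim.
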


\Xomit{
\begin{proof}
Let $N_1 = |\mathcal{T}|$.
By definition, there are $(N_1)^2$ intersection points
in $\mathcal{T} \times \mathcal{T}$
and $(N_1)^2$ vertical lines in $\mathcal{L_T}$.
For efficiently searching within these vertical lines,
we apply the ingenious idea of parametric search
via parallel sorting algorithms,
proposed by Megiddo \cite{Megiddo-83-1}.

Consider two arbitrary lines $T_g, T_h \in \mathcal{T}$.
If they are not parallel, let $t_{gh}$ be their intersection point
and $L_{gh}$ be the vertical line passing through $t_{gh}$.
Suppose that $T_g$ is above $T_h$ in the left plane of $L_{gh}$.
If applying Lemma \ref{lem_linepruning} to $L_{gh}$ prunes its right plane,
$T_g$ is above $T_h$ in the remained left plane.
On the other hand, if the left plane of $L_{gh}$ is pruned,
$T_h$ is above $T_g$ in the remained right plane.
Therefore, $L_{gh}$ can be treated
as a ``\emph{comparison}'' between $T_g$ and $T_h$,
in the sense that applying Lemma \ref{lem_linepruning} to $L_{gh}$
determines their ordering in the remained half-plane.
It also decides the ordering of their intersection points
with the undetermined local optimal line $L_t^*$,
since the pruning ensures that
a local optimal line stays in the remained half-plane.

It follows that, by resolving comparisons,
the process of pruning vertical lines in $\mathcal{L_T}$ to find $L_t^*$
can be reduced to the problem of determining the ordering of
the intersection points of the $N_1$ lines with $L_t^*$,
or say, the sorting of these intersection points on $L_t^*$.
While resolving comparisons during the sorting process,
we can simultaneously maintain the remained half-plane
by two vertical lines as its boundaries.
Thus, after resolving all comparisons in $\mathcal{L_T}$,
one of the two boundaries must be a local optimal line. 
As we know, the most efficient way to obtain the ordering
is to apply some optimal sorting algorithm $A_S$,
which needs to resolve only $O(N_1 \log N_1)$ comparisons,
instead of $(N_1)^2$ comparisons.
Since resolving each comparison takes
$O(n \log^2 n)$ time by Lemma \ref{lem_linepruning},
the sorting is done in $O(n \log^2 n) \times O(N_1 \log^2 N_1)
= O(n^3 \log ^3 n)$ time, so is the finding of $L_t^*$.

However, Megiddo \cite{Megiddo-83-1} observed that,
when multiple comparisons can be indirectly resolved in a batch,
simulating parallel sorting algorithms in a sequential way
naturally provides the scheme for batching comparisons,
thereby outperform the case of applying $A_S$.
Let $A_P$ be an arbitrary cost-optimal parallel sorting algorithm
that runs in $O(\log n)$ steps on $O(n)$ processors,
e.g., the parallel merge sort in \cite{Cole-88}.
Using $A_P$ to sort the $N_1$ lines in $\mathcal{L_T}$
on $L_t^*$ takes $O(\log N_1)$ parallel steps.
At each parallel step, there are $k = O(N_1)$ comparisons
$L_1, L_2, \cdots, L_k$ to be resolved.
We select the one with median x-coordinate among them,
which is supposed to be some $L_i$.
If applying Lemma \ref{lem_linepruning} to $L_i$ prunes its left plane,
for each comparison $L_j$ to the left of $L_i$,
the ordering of the corresponding lines of $L_j$
in the remained right plane of $L_i$ is directly known.
Thus, the $O(k/2)$ comparisons to the left of $L_i$
are indirectly resolved in $O(k/2)$ time.
If otherwise the right plane of $L_i$ is pruned,
the $O(k/2)$ comparisons to its right are resolved in $O(k/2)$ time.
By repeating this process of selecting medians and pruning
on the remaining elements $O(\log k)$ times,
all $k$ comparisons can be resolved,
which takes $O(n \log^2 n) \times O(\log k) + O(k + k/2 + k/4 + \cdots)
= O(n \log^3 n) + O(N_1) = O(n^2)$ time.
Therefore, going through $O(\log N_1)$ parallel steps of $A_P$
requires $O(n^2 \log N_1) = O(n^2 \log n)$ time,
which determines the ordering of lines in $\mathcal{L_T}$ on $L_t^*$
and also computes a local optimal line $L_t^*$.
\qed
\end{proof}
}

\begin{lem} \label{lem_TC}
    A local optimal line $L_m^*$ of $\mathcal{L_M}$
    can be found in $O(n^2 \log n)$ time.
\end{lem}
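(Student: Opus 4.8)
The plan is to parallel the strategy of Lemma~\ref{lem_TT}, adapted to the asymmetry of $\mathcal{C}(V) \times \mathcal{T}$: there are only $O(n)$ circles in $\mathcal{C}(V)$ but $O(n^2)$ tangent lines in $\mathcal{T}$, so $|\mathcal{L_M}| = O(n^3)$, far too many vertical lines to list explicitly within the target bound. Instead I would run a single prune-and-search over $\mathcal{L_M}$ that touches these lines only through $O(n^2)$ \emph{implicitly} maintained sorted sequences, using Lemma~\ref{lem_linepruning} as the pruning test and carrying the surviving region as a slab bounded by two vertical lines, exactly as in Lemma~\ref{lem_TT}. I would first reuse the $O(n^2 \log n)$-time preprocessing of Lemma~\ref{lem_linepruning} (in particular $\mathcal{T}$, the augmented bounding box, and the angular orders $P(u)$ for all $u \in V$).

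The central preparatory step is a circle analogue of Lemma~\ref{lem_buildsequence}. Fix a circle $\Cr{v}$ and a family $\mathcal{T}^r(u) = \{\Tr{u}{v_i} \mid v_i \in V \setminus \{u\}\}$. As $v_i$ sweeps through the precomputed order $P(u)$, the point(s) where $\Tr{u}{v_i}$ meets $\Cr{v}$ move monotonically along $\Cr{v}$ except at $O(1)$ breakpoints (where the tangent line becomes tangent to $\Cr{v}$, ceases to meet it, or the intersection passes the top or bottom of $\Cr{v}$). Splitting at these $O(1)$ breakpoints, and at the top and bottom of $\Cr{v}$ so that each piece is x-monotone, the vertical lines through $\Cr{v} \cap \mathcal{T}^r(u)$ decompose into $O(1)$ sequences sorted by x-coordinate, each obtainable in $O(\log n)$ time as a (possibly reversed) concatenation of subranges of $P(u)$; the symmetric statement holds for $\mathcal{T}^l(u)$. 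Ranging over all $v \in V$ and all $u \in V$, this produces $O(n^2)$ implicit sorted sequences whose union is exactly the vertical lines of $\mathcal{C}(V)\times\mathcal{T}$, and builds them in $O(n^2 \log n)$ time.

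Next I would run the weighted-median prune-and-search of Lemma~\ref{lem_localopt} over these $O(n^2)$ sequences: in each round, take the middle element of every sequence (weighted by its current length), compute their weighted-median vertical line $L$, apply Lemma~\ref{lem_linepruning} to decide whether to discard the left or right plane of $L$, advance the corresponding slab boundary to $L$, and truncate every sequence to its surviving part. A constant fraction of all surviving vertical lines is pruned per round, so $O(\log n)$ rounds leave only $O(1)$ lines; each round costs $O(n^2)$ for the weighted median and the truncations plus $O(n \log^2 n)$ for the single test, hence $O(n^2 \log n)$ in total. Finally I would apply Lemma~\ref{lem_localopt} to the $O(1)$ surviving vertical lines together with the two slab boundaries and return the one whose local optimal point has smallest weight loss. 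That this is a local optimal line of $\mathcal{L_M}$ follows, as in Lemma~\ref{lem_TT}, from the fact that each application of Lemma~\ref{lem_linepruning} retains a point at least as good as every point it prunes, so the slab always keeps a vertical line beating all of $\mathcal{C}(V)\times\mathcal{T}$, realized in the end by one of its boundaries.

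I expect the main obstacle to be the circle analogue of Lemma~\ref{lem_buildsequence}: proving that the intersections of a single tangent-line family $\mathcal{T}^r(u)$ with a fixed circle $\Cr{v}$ form only $O(1)$ x-monotone runs consistent with the order $P(u)$, which requires a careful case analysis on the relative position of $\Cr{u}$ and $\Cr{v}$ and on which tangent lines actually meet $\Cr{v}$ (and from which side). A secondary technical point, again inherited from Lemma~\ref{lem_TT}, is to argue that the pruning never discards every vertical line that could serve as a local optimal line of $\mathcal{L_M}$, which is precisely why the two slab boundaries must be carried along as candidates throughout.
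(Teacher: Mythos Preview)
Your proposal is correct and matches the paper's approach essentially point for point: the paper also builds a circle analogue of Lemma~\ref{lem_buildsequence} (with the same case analysis on the relative position of $\Cr{u}$ and $\Cr{v}$, yielding $O(1)$ x-monotone sequences per pair), obtains $O(n^2)$ implicit sorted sequences covering $\mathcal{C}(V)\times\mathcal{T}$ in $O(n^2\log n)$ time, and then runs the weighted-median parallel binary search of Lemma~\ref{lem_localopt} with Lemma~\ref{lem_linepruning} as the test, for $O(\log n)$ rounds at $O(n^2)$ each. Your anticipated main obstacle is exactly where the paper spends its effort.
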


\Xomit{
\begin{proof}
To deal with the set $\mathcal{L_M}$,
we use the ideas similar to the proofs of Lemmas
\ref{lem_buildsequence} and \ref{lem_localopt}
in order to divide $\mathcal{C}(V) \times \mathcal{T}$
into sorted sequences of points.
Given a fixed circle $C = \Cr{u_0}$ for some point $u_0 \in V$,
we show that the intersection points in $C \times \mathcal{T}^r(v)$
and $C \times \mathcal{T}^l(v)$ for each $v \in V$
can be grouped into $O(1)$ sequences of length $O(n)$,
which are sorted in increasing x-coordinates.
Summarizing over all circles in $\mathcal{C}(V)$,
there will be total $O(n^2)$ sequences of length $O(n)$,
each of which maps to a sequence of $O(n)$ vertical lines
sorted in increasing x-coordinates.
Then, finding a local optimal line $L_m^*$ can be done
by performing prune-and-search to the $O(n^2)$ sequences
of vertical lines via parallel binary searches.
The details of these steps are described as follows.

First we discuss about the way for grouping intersection points in
$C \times \mathcal{T}^r(v)$ and $C \times \mathcal{T}^l(v)$
for a fixed point $v \in V$, so that each of them
can be represented by $O(1)$ subsequences of $P(v)$.
By symmetry, only $C \times \mathcal{T}^r(v)$ is considered.
Similar to Lemma \ref{lem_buildsequence}, we are actually computing
sequences of points in $V \backslash \{v\}$
corresponding to these intersection points.
For each $v_i \in V \backslash \{v\}$,
the outer tangent line $\Tr{v}{v_i}$
may intersect $C$ at two, one, or zero point.
Let $\tr[C,1]{v}{v_i}$ and $\tr[C,2]{v}{v_i}$ denote
the first and second points, respectively,
at which $\Tr{v}{v_i}$ intersects $C$
along the direction from $v$ to $v_i$.
Note that, when $\Tr{v}{v_i}$ intersects $C$ at less than two points,
$\tr[C,2]{v}{v_i}$ or both of them will be \emph{null}.

In the following, we consider the sequence computation
under two cases about the relationship between $u_0$ and $v$,
(1) $u_0 = v$, and (2) $u_0 \neq v$.

\begin{description}
    \item [Case (1):]
        Since $v$ coincides with $u_0$,
        $C \times \mathcal{T}^r(v)$ is just the set of tangent points
        $\tr[C,1]{v}{v_i}$ for all $v_i \in V \backslash \{v\}$.
        It is easy to see the the angular sorted sequence $P(v)$
        directly corresponds to a sorted sequence of
        these $n-1$ tangent points in CCW order.
        $P(v)$ can be further partitioned into
        two sub-sequences $P_1(v)$ and $P_2(v)$,
        which consist of points in $V \backslash \{v\}$
        with polar angles (with respect to $v$)
        in the intervals $[0, \pi)$ and $[\pi, 2\pi)$, respectively.
        Since they are sorted in CCW order,
        we have that intersection points corresponding to
        $P_2(v)$ and to the reverse of $P_1(v)$
        are sorted in increasing x-coordinates, as we required.
        Obviously, $P_1(v)$ and $P_2(v)$ are of length $O(n)$
        and can be obtained in $O(\log n)$ time.

    \item [Case (2):]
        Suppose without loss of generality that
        $v$ locates on the lower left quadrant with respect to $u_0$,
        and let $\theta_0$ be the polar angle of $u_0$ with respect to $v$.
        This case can be further divided into two subcases
        by whether or not $\Cr{v}$ intersects $C$ 
        at less than two points.

        \begin{figure}[t]
            \centering
            \begin{minipage}[b]{0.58\textwidth}
                \centering
                \includegraphics[scale=0.75]{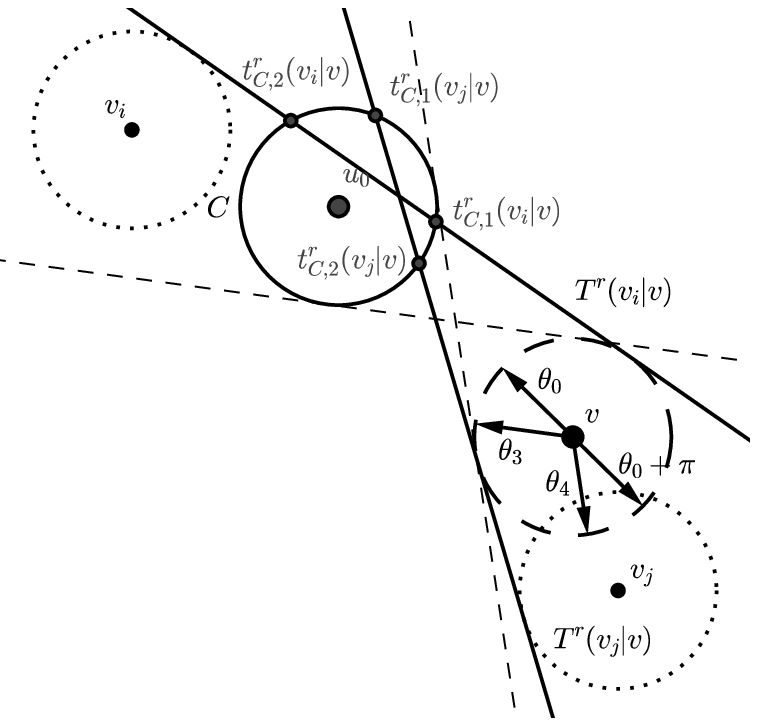}\\
                \scriptsize(a) no intersection
            \end{minipage}
            \hfill
            \begin{minipage}[b]{0.38\textwidth}
                \centering
                \includegraphics[scale=0.75]{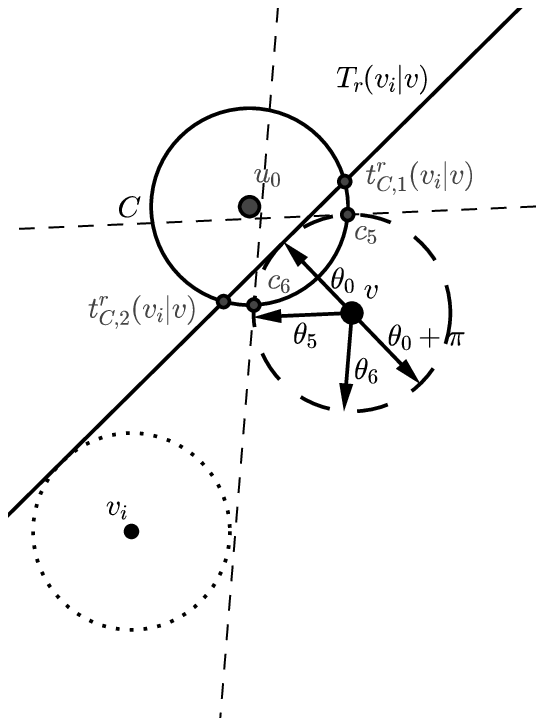}\\
                \scriptsize(b) two intersection points
            \end{minipage}
            \caption{Two subcases about how $\Cr{v}$ intersects $C$.}
            \label{fig_CCintersect}
        \end{figure}

        Consider first the subcase that they intersect
        at none or one point (see Figure \ref{fig_CCintersect}(a).)
        Let $\theta_3$ and $\theta_4$ be the angles
        such that $\Tr{v}{\y{v}{\theta_3}}$
        and $\Tr{v}{\y{v}{\theta_4}}$ are
        inner tangent to $\Cr{v}$ and $C$, 
        where $\theta_3 \le \theta_4$
        (Note that $\theta_3 = \theta_4$ only when
        the two circles intersect at one point.)
        For each $v_i \in V \backslash \{v\}$, $\Tr{v}{v_i}$
        does not intersect $C$,
        if the polar angle of $v_i$ with respect to $v$ is
        neither in $[\theta_0, \theta_3]$
        nor in $[\theta_4, \theta_0 + \pi]$.
        We can implicitly obtain from $P(v)$
        two subsequences $P_3(v)$ and $P_4(v)$,
        consisting of points with polar angles
        in $[\theta_0, \theta_3]$ and
        in $[\theta_4, \theta_0 + \pi]$, respectively.
        It can be observed that
        the sequence of points $v_i$ listed in $P_3(v)$
        corresponds to a sequence of intersection points
        $\tr[C,1]{v}{v_i}$ listed in clockwise (CW) order on $C$
        and, moreover, a sequence of $\tr[C,2]{v}{v_i}$
        listed in CCW order on $C$.
        Symmetrically, the sequence of points $v_j$ in $P_4(v)$
        corresponds to a sequence of $\tr[C,1]{v}{v_j}$ in CCW order
        and a sequence of $\tr[C,2]{v}{v_j}$ in CW order.
        The four implicit sequences of intersection points on $C$
        can be further partitioned by a horizontal line
        $L_h$ passing through its center $u_0$,
        so that the resulted sequences are naturally sorted
        in either increasing or decreasing x-coordinates.
        Therefore, we can implicitly obtain
        at most eight sorted sequences of length $O(n)$
        in replace of $C \times \mathcal{T}^r(v)$,
        by appropriately partitioning $P(v)$ in $O(\log n)$ time.

        Consider that $\Cr{v}$ intersects $\Cr{u_0}$
        at two points $c_5$ and $c_6$,
        where $c_5$ is to the upper right of $c_6$
        (see Figure \ref{fig_CCintersect}(b).)
        Let $\theta_5$ and $\theta_6$ be the angles such that
        $\Tr{v}{\y{v}{\theta_5}}$ and $\Tr{v}{\y{v}{\theta_6}}$
        are tangent to $\Cr{v}$ at $c_5$ and $c_6$, respectively.
        Again, $P(v)$ can be implicitly partitioned into three subsequences
        $P_5(v)$, $P_6(v)$, and $P_7(v)$, which consists of points
        with polar angles in $[\theta_0, \theta_5)$, $[\theta_5, \theta_6)$,
        and $[\theta_6, \theta_0 + \pi]$, respectively.
        By similar observations, $P_5(v)$ corresponds to
        two sequences of intersection points
        listed in CW and CCW order, respectively,
        and $P_7(v)$ corresponds to two sequences
        listed in CCW and CW order, respectively.
        However, the sequence of points $v_i$ in $P_6(v)$ corresponds to
        the sequences of $\tr[C,1]{v}{v_i}$ and
        $\tr[C,2]{v}{v_i}$ listed in both CCW order.
        These sequences can also be partitioned by $L_h$
        into sequences sorted in x-coordinates.
        It follows that we can implicitly obtain
        at most twelve sorted sequences of length $O(n)$
        in replace of $C \times \mathcal{T}^r(v)$ in $O(\log n)$ time.
\end{description}

According to the above discussion, for any two points $u, v \in V$,
$\Cr{u} \times \mathcal{T}^r(v)$ and $\Cr{u} \times \mathcal{T}^l(v)$
can be divided into $O(1)$ sequences in $O(\log n)$ time,
each of which consists of $O(n)$ intersection points on $\Cr{u}$
sorted in increasing x-coordinates.
Thus, $\mathcal{C}(V) \times \mathcal{T}$ can be re-organized as
$O(n^2)$ sorted sequences of length $O(n)$ in $O(n^2 \log n)$ time,
which correspond to $O(n^2)$ sorted sequences of $O(n)$ vertical lines.
Now, we can perform parametric search for parallel binary search
to these sequences of vertical lines,
by similar techniques used in Lemma \ref{lem_localopt}.
For each of the $O(n^2)$ sequences, its middle element is first obtained
and assigned with a weight equal to the sequence length in $O(1)$ time.
Then, the weighted median $L$ of these $O(n^2)$ elements are
computed in $O(n^2)$ time \cite{Reiser-78}.
By applying Lemma \ref{lem_linepruning} to $L$ in $O(n \log^2 n)$ time,
at least one-eighths of total elements can be pruned from these sequences,
taking another $O(n^2)$ time.
Therefore, a single iteration of pruning requires $O(n^2)$ time.
After $O(\log n)$ such iterations,
a local optimal line $L_m^*$ can be found in total $O(n^2 \log n)$ time,
thereby proves the lemma.
\qed
\end{proof}
}

\begin{lem} \label{lem_CC}
    A local optimal line $L_c^*$ of $\mathcal{L_C}$
    can be found in $O(n^2 \log n)$ time.
\end{lem}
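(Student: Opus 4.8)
The plan is to exploit the fact that $\mathcal{L_C}$ is far smaller than $\mathcal{L_T}$ or $\mathcal{L_M}$: a pair of circles in $\mathcal{C}(V)$ meets in at most two points, so $|\mathcal{C}(V)\times\mathcal{C}(V)| = O(n^2)$ and hence $|\mathcal{L_C}| = O(n^2)$. This is small enough that the vertical lines can be generated and sorted explicitly within the target budget, so neither the implicit sorted sequences used for Lemma \ref{lem_TC} nor the parallel-sorting parametric search used for Lemma \ref{lem_TT} is needed; a plain prune-and-search driven by Lemma \ref{lem_linepruning} suffices.

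First I would compute all $O(n^2)$ intersection points of pairs of circles in $\mathcal{C}(V)$ in $O(n^2)$ time (each pair in $O(1)$), take the vertical line through each, and sort the resulting $O(n^2)$ lines by $x$-coordinate in $O(n^2\log n)$ time; I would also perform once the $O(n^2\log n)$-time preprocessing required by Lemma \ref{lem_linepruning}. Then I would run the prune-and-search: while more than a constant number of lines survive, take the surviving line $L$ of median $x$-coordinate, invoke Lemma \ref{lem_linepruning} on $L$ in $O(n\log^2 n)$ time to decide whether to discard the left or the right plane of $L$, and discard every surviving line strictly on that side while keeping $L$ itself. Since the surviving lines always form a contiguous block of the sorted array (each round removes a prefix or a suffix), the median is located in $O(1)$ time per round, so the loop runs in $O(\log n)\cdot O(n\log^2 n)=O(n\log^3 n)$ time over its $O(\log n)$ rounds; along the way I would remember the $O(\log n)$ lines used as pruning lines. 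Finally I would apply the local-optimal-point algorithm of Lemma \ref{lem_localopt} to each of those $O(\log n)$ pruning lines and to the $O(1)$ lines surviving at the end, i.e.\ $O(\log n)$ calls costing $O(n\log^3 n)$ in total, and output the one whose local optimal point has the smallest weight loss as $L_c^*$. The overall time is $O(n^2\log n)$, dominated by the initial sort and the preprocessing of Lemma \ref{lem_linepruning}.

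The step needing genuine care is correctness of the pruning: I must argue that the collection of $O(\log n)$ lines finally evaluated always contains a true local optimal line of $\mathcal{L_C}$. Let $p^*\in\mathcal{C}(V)\times\mathcal{C}(V)$ attain the minimum weight loss over $\mathcal{C}(V)\times\mathcal{C}(V)$, and let $L^*\in\mathcal{L_C}$ be the vertical line through it. The key point I would extract from the proof of Lemma \ref{lem_linepruning} is that, in each of its three phases, the half-plane discarded for a line $L$ consists only of points whose weight loss is at least $W^*(x)$ for a point $x\in L$ that is itself a local optimal point of $L$: a sideward wedge at a breakpoint of $G_{DU}$ or at $x_B$ realizes the minimum of $W^*$ on $L$ by Lemma \ref{lem_point_wedge}, and on a non-leaning line the pseudo-wedge endpoint ($x_U$ or $x_D$, whichever has the smaller weight loss) is exactly where $W^*$ is minimized on $L$ by Lemmas \ref{lem_no_sideward}, \ref{lem_weight_trans} and \ref{lem_point_wedge}. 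Granting this, either $L^*$ survives to the end, in which case it is a local optimal line of $\mathcal{L_C}$ and is among the final survivors; or at the first round in which $L^*$ is discarded, $p^*$ lies in the open half-plane pruned for the current pruning line $L$, so the minimum of $W^*$ over $L$ is at most $W^*(p^*)$, which makes $L$ itself a local optimal line of $\mathcal{L_C}$ and $L$ is one of the pruning lines I kept. Either way the final evaluation returns a valid $L_c^*$. The main obstacle is therefore not the running time -- which falls out cheaply from the $O(n^2)$ size of $\mathcal{L_C}$ -- but pinning down this invariant, namely reading off from Lemma \ref{lem_linepruning} the precise guarantee that the pruned half-plane never contains a point beating the local optimum of the pruning line.
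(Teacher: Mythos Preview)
Your approach is essentially the paper's: since $|\mathcal{C}(V)\times\mathcal{C}(V)|=O(n^2)$, generate and sort $\mathcal{L_C}$ explicitly in $O(n^2\log n)$ time, then binary-search with Lemma~\ref{lem_linepruning} in $O(\log n)$ rounds of $O(n\log^2 n)$ each, for a total dominated by the sort. The paper's proof is a terse paragraph that does not spell out the correctness invariant you worry about; your reading of Lemma~\ref{lem_linepruning}---that the apex of the sideward wedge (or of the pseudo wedge) is a local optimal point of $L$, so the discarded open half-plane contains no point beating the local optimum of $L$---is exactly right and is what makes the binary search sound.

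One simplification: because you already retain the median $L$ after each round, you can argue \emph{inductively} that the surviving block always contains a local optimal line of $\mathcal{L_C}$: if the previously surviving local optimal line is pruned, your own argument shows the current median $L$ is itself local optimal, and $L$ stays. Hence the extra bookkeeping of remembering and evaluating all $O(\log n)$ pruning lines via Lemma~\ref{lem_localopt} is unnecessary---the $O(1)$ final survivors already contain an $L_c^*$---though it costs only $O(n\log^3 n)$ and does no harm to the overall bound.
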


\Xomit{
\begin{proof}
There are at most $O(n^2)$ points
in $\mathcal{C}(V) \times \mathcal{C}(V)$.
Thus, $\mathcal{L_C}$ can be obtained
and sorted according to x-coordinates in $O(n^2 \log n)$ time.
Then, by simply performing binary search
with Lemma \ref{lem_linepruning},
a local optimal line $L_c^*$ can be easily found
in $O(\log n)$ iterations of pruning,
which require total $O(n \log^3 n)$ time.
In summary, the computation takes $O(n^2 \log n)$ time,
and the lemma holds.
\qed
\end{proof}
}

By definition, $L^*$ can be found among $L_t^*$, $L_m^*$, and $L_c^*$,
which can be computed in $O(n^2 \log n)$ time by Lemmas
\ref{lem_TT}, \ref{lem_TC}, and \ref{lem_CC}, respectively.
Then, a \CENR{1}{1} can be computed as the local optimal point
of $L^*$ in $O(n \log^2 n)$ time by Lemma \ref{lem_localopt}.
Combining with the $O(n^2 \log n)$-time preprocessing
for computing the angular sorted sequence $P(v)$s 
and the bounding box enclosing $\mathcal{C}(V)$,
we have the following theorem.

\begin{thm}
    The \CENR{1}{1} problem can be solved in $O(n^2 \log n)$ time.
\end{thm}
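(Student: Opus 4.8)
The plan is to assemble the machinery built up in Sections~\ref{sec_Alg_line} and~\ref{sec_Alg_plane}. By Lemma~\ref{lem_TTCCTC}, at least one \CENR{1}{1} lies among the $O(n^4)$ intersection points forming $\mathcal{T} \times \mathcal{T}$, $\mathcal{C}(V) \times \mathcal{T}$, and $\mathcal{C}(V) \times \mathcal{C}(V)$. Fix such a \CENR{1}{1}, call it $p^*$, and let $L^*$ be the vertical line through it. Then $L^* \in \mathcal{L}$, and by Lemma~\ref{lem_localopt} any local optimal point of $L^*$ has weight loss at most $W^*(p^*)$; since $p^*$ is globally optimal, such a point is itself a \CENR{1}{1}. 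Hence it suffices to identify \emph{any} vertical line whose local optimal point has weight loss no larger than that of every one of the $O(n^4)$ candidate points, and then run the algorithm of Lemma~\ref{lem_localopt} on it.

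First I would split $\mathcal{L}$ into the three families $\mathcal{L_T}$, $\mathcal{L_M}$, $\mathcal{L_C}$ of vertical lines through the points of $\mathcal{T} \times \mathcal{T}$, $\mathcal{C}(V) \times \mathcal{T}$, and $\mathcal{C}(V) \times \mathcal{C}(V)$ respectively, and compute a local optimal line of each via Lemmas~\ref{lem_TT}, \ref{lem_TC}, and~\ref{lem_CC}, in $O(n^2 \log n)$ time apiece. Conceptually each subroutine runs the same prune-and-search skeleton (repeatedly pick a median-$x$ line among those still in play, apply the vertical-line pruning procedure of Lemma~\ref{lem_linepruning}, discard the lines on the pruned side); they differ only in how the lines of the family are represented implicitly so as to avoid explicitly listing all of them, using Megiddo's parallel-sorting parametric search for $\mathcal{L_T}$, parallel binary search over pre-grouped sorted subsequences for $\mathcal{L_M}$, and an ordinary sorted array for $\mathcal{L_C}$. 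Taking whichever of $L_t^*$, $L_m^*$, $L_c^*$ yields the smallest local-optimal weight loss gives a line $L$ whose local optimal point dominates every candidate, and in particular $p^*$; by the opening remark this point is a \CENR{1}{1}. A single further call to Lemma~\ref{lem_localopt} on $L$ then outputs the \CENR{1}{1} explicitly in $O(n \log^2 n)$ time (alternatively one lets each subroutine carry along the local optimal point of its chosen line).

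For the running time I would charge the shared preprocessing once: the angular sorted lists $P(v)$ for all $v \in V$ in $O(n^2 \log n)$ time, the set $\mathcal{T}$ of outer tangent lines in $O(n^2)$ time, and the bounding box together with the horizontal guard lines $T_\text{top}$ and $T_\text{btm}$ (which force $x_D$ and $x_U$ to exist in Lemma~\ref{lem_linepruning}) in $O(n)$ time. Adding the three $O(n^2 \log n)$ searches and the final $O(n \log^2 n)$ extraction yields an $O(n^2 \log n)$ total. If along the way any pruning call discovers a strong or conditional \CENR{1}{1}, the algorithm halts with a correct answer at once, which only improves the bound.

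The main obstacle is not this final bookkeeping, which has all been front-loaded into Lemmas~\ref{lem_TT}--\ref{lem_CC}; rather it is making the several ``domination'' statements line up exactly. Lemma~\ref{lem_TTCCTC} asserts only that \emph{some} candidate is optimal, so a ``local optimal line'' of a family must be defined relative to the weight losses of the \emph{candidate points} of that family, not merely of points lying on the line; and one must verify that the correctness of each step of the pruning procedure (Lemma~\ref{lem_linepruning}) guarantees that a line dominating all candidates always survives the prune-and-search within each family. Once this is pinned down, taking the minimum over the three families and applying Lemma~\ref{lem_localopt} completes the argument.
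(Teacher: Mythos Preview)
Your proposal is correct and follows essentially the same approach as the paper: invoke Lemmas~\ref{lem_TT}, \ref{lem_TC}, and~\ref{lem_CC} to obtain $L_t^*$, $L_m^*$, $L_c^*$ in $O(n^2\log n)$ time each, pick among them the line whose local optimal point has smallest weight loss, and output that point via Lemma~\ref{lem_localopt}, charging the $O(n^2\log n)$ preprocessing once. The subtlety you flag in the last paragraph---that a local optimal line must dominate all \emph{candidate points} of its family---is exactly how the paper defines it, and the survival of such a line through the prune-and-search is precisely what Lemma~\ref{lem_linepruning} (via Lemma~\ref{lem_point_wedge} and Corollary~\ref{cor_xu_PWprune}) guarantees.
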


\section{Concluding Remarks} \label{conclusion}

In this paper, we revisited the \CEN{1}{1} problem on the Euclidean plane
under the consideration of minimal distance constraint between facilities,
and proposed an $O(n^2 \log n)$-time algorithm, 
which close the bound gap between this problem and its unconstrained version.
Starting from a critical observation on the medianoid solutions,
we developed a pruning tool with indefinite region remained after pruning,
and made use of it via multi-level structured parametric search approach,
which is quite different to the previous approach in \cite{Drezner-82,Hakimi-90}.

Considering distance constraint between facilities 
in various competitive facility location models
is both of theoretical interest and of practical importance.
However, similar constraints are rarely seen in the literature.
It would be good starting points by introducing the constraint
to the facilities between players in the \MED{r}{X_p} and \CEN{r}{p} problems,
maybe even to the facilities between the same player.




\begin{thebibliography}{99}

\bibitem{Cole-87} R. Cole,
    ``Slowing down sorting networks to obtain faster sorting algorithms,''
    \emph{Journal of the ACM}, vol. 34, no. 1, pp. 200--208, 1987.

\bibitem{Cole-88} R. Cole,
    ``Parallel merge sort,''
    \emph{SIAM Journal on Computing}, vol. 17, no. 4, pp. 770--785, 1988.

\bibitem{Dasci-11} A. Dasci, ``Conditional Location Problems on Networks and
    in the Plane,'' In: H. A. Eiselt, V. Marianov (eds.), \emph{Foundations
    of Location Analysis}, Springer, New York, pp. 179--206, 2011.

\bibitem{Davydov-13} I. Davydov, Y. Kochetov, and A. Plyasunov,
    ``On the complexity of the $(r|p)$-centroid problem in the plane,''
    \emph{TOP}, vol. 22, no. 2, pp. 614--623, 2013.

\bibitem{Drezner-82} Z. Drezner, ``Competitive location strategies for two
    facilities,'' \emph{Regional Science and Urban Economics}, vol 12, no. 4,
    pp. 485--493, 1982.

\bibitem{Drezner-92} Z. Drezner, and Z. Eitan,
    ``Competitive location in the plane,''
    \emph{Annals of Operations Research}, vol. 40, no. 1, pp. 173--193, 1992.

\bibitem{Eiselt-97} H. A. Eiselt and G. Laporte, ``Sequential location
    problems,'' \emph{European Journal of Operational Research}, vol. 96, pp.
    217--231, 1997.

\bibitem{Eiselt-93} H. A. Eiselt, G. Laporte, and J.-F. Thisse, ``Competitive
    location models: a framework and bibliography,'' \emph{Transportation
    Science}, vol. 27, pp. 44--54, 1993.

\bibitem{Eiselt-15} H. A. Eiselt, V. Marianov, Vladimir and T. Drezner, Tammy,
    ``Competitive Location Models,'' In: G. Laporte, S. Nickel,
    F. Saldanha da Gama (eds.), \emph{Location Science},
    Springer International Publishing, pp. 365--398, 2015.

\bibitem{Hakimi-83} S. L. Hakimi, ``On locating new facilities in a
    competitive environment,'' \emph{European Journal of Operational
    Research}, vol. 12, no. 1, pp. 29--35, 1983.

\bibitem{Hakimi-90} S. L. Hakimi, ``Locations with spatial interactions:
    competitive locations and games,'' In: P.B. Mirchandani, R.L. Francis
    (eds), \emph{Discrete location theory}, Wiley, New York, pp. 439--478,
    1990.

\bibitem{Hansen-90} P. Hansen, J.-F. Thisse, and R. W. Wendell, ``Equilibrium
    analysis for voting and competitive location problems,'' In: P. B.
    Mirchandani, R. L. Francis RL (eds), \emph{Discrete location theory},
    Wiley, New York, pp 479--501, 1990.

\bibitem{Hotelling-29} H. Hotelling,
    ``Stability in competition,''
    \emph{Economic Journal}, vol. 39, 41--57, 1929.

\bibitem{Lee-86} D. T. Lee, Y. F. Wu,
    ``Geometric complexity of some location problems,''
    \emph{Algorithmica}, vol. 1, no. 1, pp. 193--211, 1986.

\bibitem{Megiddo-83-1} N. Megiddo,
    ``Applying parallel computation algorithms in the design of serial algorithms,''
    \emph{Journal of the ACM}, vol. 30, no. 4, pp. 852--865, 1983.

\bibitem{Megiddo-83-2} N. Megiddo, ``Linear-time algorithms for linear
    programming in $R^3$ and related problems,'' \emph{SIAM Journal on
    Computing}, vol. 12, no. 4, pp. 759--776, 1983.

\bibitem{Plastria-01} F. Plastria, ``Static competitive facility location: an
    overview of optimisation approaches.'' \emph{European Journal of
    Operational Research}, vol. 129, no. 3, pp. 461--470, 2001.

\bibitem{Reiser-78} A. Reiser,
    ``A linear selection algorithm for sets of elements with weights,''
    \emph{Information Processing Letters}, vol. 7, no. 3, pp. 159--162, 1978.

\bibitem{Santos-07} D. R. Santos-Pe{\~n}ate, R. Su{\'a}rez-Vega, and P.
    Dorta-Gonz{\'a}lez, ``The leader--follower location model,''
    \emph{Networks and Spatial Economics}, vol. 7, no. 1, pp. 45--61, 2007.


\end{thebibliography}

\end{document}